\documentclass[sigconf]{acmart}
\settopmatter{printfolios=true} 

\pdfoutput=1

\let\ACMmaketitle=\maketitle
\renewcommand{\maketitle}{\begingroup\let\footnote=\thanks \ACMmaketitle\endgroup}

\settopmatter{printacmref=false} 
\renewcommand\footnotetextcopyrightpermission[1]{} 

\AtBeginDocument{%
  \providecommand\BibTeX{{%
    \normalfont B\kern-0.5em{\scshape i\kern-0.25em b}\kern-0.8em\TeX}}}

\usepackage{hyperref}
\newcommand{\secref}[1]{\S\ref{#1}}

\usepackage{tikz}
\usetikzlibrary{arrows, arrows.meta}
\usetikzlibrary{automata}
\usetikzlibrary{backgrounds}
\usetikzlibrary{shapes.misc, shapes.geometric}
\usetikzlibrary{shapes.gates.logic.US}
\usetikzlibrary{calc}
\usepackage{pgfplots}

\pgfdeclarelayer{edgelayer}
\pgfdeclarelayer{nodelayer}
\pgfsetlayers{background,edgelayer,nodelayer,main}
\usepackage{tikz-cd}

\usepackage{calc}
\usepackage {anyfontsize}
\usepackage{stmaryrd}
\SetSymbolFont{stmry}{bold}{U}{stmry}{m}{n}
\usepackage{color}
\usepackage{framed}
\usepackage{multirow}
\usepackage{subfigure}
\usepackage{arydshln}

\usepackage{listings}
\lstset{basicstyle=\ttfamily,
escapeinside={||},
mathescape=true}

\usepackage{amsmath}
\usepackage{amsthm}
\newtheorem{proposition}{Proposition}
\newtheorem{lemma}{Lemma}
\newtheorem{theorem}{Theorem}
\usepackage{blkarray}

\newcommand\set[1]{\{#1\}}
\newcommand\tuple[1]{\langle#1\rangle}

\newcommand\xor{\oplus}
\newcommand\ass{\mathrel{\gets}}
\newcommand\result[1]{\llbracket #1 \rrbracket}

\newcommand\gf[1]{\mathbb{F}_{#1}}
\newcommand\gfsize[2]{\mathbb{F}^{#2}_{#1}}

\newcommand\slp{\mathbb{SLP}}
\newcommand\xorslp{\slp_\xor}
\newcommand\multislp{\slp_{\vec{\xor}}}
\newcommand\return{\mathsf{ret}}
\newcommand\dcon[1]{\mathrel{\sim_{#1}}}
\newcommand\con[1]{\mathrel{\sim^{\!*}_{#1}}}
\DeclareSymbolFont{symbolsC}{U}{pxsyc}{m}{n}
\DeclareMathSymbol{\coloneqq}{\mathrel}{symbolsC}{"42}

\usepackage{stackengine}
\stackMath
\newlength\matfield
\newlength\tmplength
\def\matscale{1.}

\newcommand\raiserows[2]{%
   \setlength\matfield{\matscale\baselineskip}%
   \raisebox{#1\matfield}{#2}%
}

\newcommand{\lbxor}{{\textstyle\bigoplus}}

\newcommand\labelarrow[1]{
\xrightarrow{\smash{\raisebox{-1.5pt}{$\scriptstyle#1$}}}
}

\newcommand{\xRightarrow}[2][]{\underset{#1}{\overset{#2}{\Rightarrow}}}

\newcommand\costar[2][]{
    \xRightarrow[\smash{\raisebox{1pt}[0pt][0pt]{$\scriptstyle#1$}}]{\smash{\raisebox{-1pt}[0pt][0pt]{$\scriptstyle#2$}}}%
}
\newcommand{\xmultimap}[1]{\stackrel{\raisebox{-.5pt}[0pt][0pt]{$\scriptstyle#1$}}{\multimap}}
\newcommand{\xmultimapp}[2][]{\underset{\smash{\raisebox{1.5pt}[0pt][0pt]{$\scriptstyle#1$}}}{\overset{#2}{\multimap}}}

\DeclareMathSymbol\lbag\mathopen{stmry}{"2A}
\DeclareMathSymbol\rbag\mathclose{stmry}{"2B}

\newcommand{\plabel}[1]{{\ensuremath{\scriptstyle #1)}}}

\newcommand{\namedframe}[2]{
\begin{flushleft}
\tikzset{background rectangle/.style={thin,draw=black}}
\begin{tikzpicture}[every node/.style={inner sep=0,outer sep=0}, show background rectangle]
\node[align=justify, text width=0.45\textwidth, inner sep=3pt]{
#2
};
\node[inner sep=0em, xshift=0ex, yshift=-.1ex, overlay, above right, fill=white, draw=white]
at (current bounding box.north west) {
#1
};
\end{tikzpicture}
\end{flushleft}
}

\newcommand{\nvar}{\textsc{NVar}}
\newcommand{\cachecap}{\textsf{CCap}}
\newcommand{\iocost}{\textsf{IOcost}}
\newcommand{\val}[1]{\llparenthesis\,#1\,\rrparenthesis}

\newcommand{\bitmatrix}{\mathfrak{B}}

\newcommand\blfootnote[1]{%
	\begingroup
	\renewcommand\thefootnote{}\footnote{#1}%
	\addtocounter{footnote}{-1}%
	\endgroup
}

\acmConference[Technical Report]{Technical Report}{2021}{}
\startPage{1}
\setcopyright{none}

\begin{document}

\title{Accelerating XOR-based Erasure Coding using Program Optimization Techniques}

\author{Yuya Uezato}
\affiliation{
\institution{Dwango, Co., Ltd.}
\city{Tokyo}
\country{Japan}
}
\email{yuuya\_uezato@dwango.co.jp}


\newcommand{\appr}{\ensuremath{\mathord{\sim}}}

\begin{abstract}
Erasure coding (EC) affords data redundancy for large-scale systems.
XOR-based EC is an easy-to-implement method for optimizing EC.
This paper addresses a significant performance gap between the state-of-the-art XOR-based EC approach (\appr 4.9 GB/s coding throughput) and Intel’s high-performance EC library based on another approach (\appr 6.7 GB/s).
We propose a novel approach based on our observation that XOR-based EC virtually generates programs of a Domain Specific Language for XORing byte arrays.
We formalize such programs as straight-line programs (SLPs) of compiler construction
and optimize SLPs using various optimization techniques.
Our optimization flow is three-fold:
1) reducing operations using grammar compression algorithms;
2) reducing memory accesses using deforestation, a functional program optimization method; and
3) reducing cache misses using the (red-blue) pebble game of program analysis.
We provide an experimental library, which outperforms Intel’s library with a \appr 8.92 GB/s throughput. 
\end{abstract}

\maketitle

\blfootnote{
This is the author (and non-final) version of the same title paper that accepted by SC'21~\url{https://sc21.supercomputing.org/}.
The final version will be published by ACM in the DOI \url{https://doi.org/10.1145/3458817.3476204}.
The appendix contains a proof of Theorem~\ref{thm:intractability of minimum memory access problem} omitted from the conference version due to page limitation.
}

\section{Introduction}\label{sec:Introduction}
Assuring data redundancy is the most critical task for large-scale systems such as distributed storage.
Replication---distributing the replicas of data---is the simplest solution.
Erasure coding (EC) has attracted significant attention thanks to its space efficiency~\cite{Weatherspoon:2002}.
For example, the famous distributed system HDFS (Hadoop Distributed File System)~\cite{hdfs-ec} offers
the codec \textbf{RS}(10, 4), Reed-Solomon EC~\cite{Reed:1960} with 10 data blocks and 4 parity blocks.
On \textbf{RS}(10, 4), we can store 10-times more objects than through replication; however, we cannot recover data if five nodes are down.
Another distributed system, Ceph~\cite{ceph-ec}, offers $\textbf{RS}(n, p)$ for any $n$ and $p$.
On Linux, we can use RAID-6, a codec similar to $\textbf{RS}(n, 2)$~\cite{SNIA:spec:2009, Plank:2008}.
Using EC instead of the replication degrades the system performance since
the encoding and decoding of EC are heavy computation and are required for each storing to and loading from a system.
It is often stated that EC is suitable only for archiving cold (rarely accessed) data~\cite{Cook:2013, Huang:2012, Shenoy:2015}.

We clarify the pros and cons of EC by observing how $\textbf{RS}$ works.
To encode data using matrix multiplication (hereafter we use the acronym MM),
\textbf{RS} adopts matrices over $\gf{2^8}$, the finite field with $2^8 = 256$ elements.
Since each element of $\gf{2^8}$ is coded by one byte (8 bits), we can identify an $N$-bytes data as an $N$-array of $\gf{2^8}$.
$\textbf{RS}(n, p)$ encodes an $N$-byte data $D$ using
an $(n + p) \times n$ \emph{Vandermonde} matrix $\mathcal{V} \in \gfsize{2^8}{(n+p) \times n}$, which is key for decoding as we will see below,
as follows:
\[
\begin{tikzpicture}[baseline=(A)]
\draw[thick] (0, 0) rectangle (1, 1.7);
\draw[bend left=20] (0,0) to node[xshift=-5pt] {\small $\begin{array}{c}n\\[-3pt]+\\[-3pt]p\end{array}$} (0, 1.7);
\draw[bend right=20] (0, 1.7) to node[yshift=-5pt] {\small $n$} (1, 1.7);
\node (A) at (.5, .85) {$\mathcal{V}$};
\end{tikzpicture}
\cdot_{\gf{2^8}}
\raiserows{.8}{$\begin{pmatrix}
\vec{d}_1 \\[-2pt]
\scalebox{0.5}{$\vdots$} \\[-2pt]
\vec{d}_n
\end{pmatrix}$} =
\begin{pmatrix}
\vec{b}_1 \\[-2pt]
\scalebox{0.5}{$\vdots$} \\[-2pt]
\vec{b}_{n} \\[-2pt]
\scalebox{0.5}{$\vdots$} \\[-5pt]
\vec{b}_{n + p}
\end{pmatrix}
\ \ 
\begin{array}{l}
\text{where} \\
\quad\text{$\cdot_{\gf{2^8}}$ is the MM over $\gf{2^8}$;} \\[3pt]
\quad\text{$\vec{d}_i$ is $i$-th $\frac{N}{n}$-bytes block of $D$;} \\[3pt]
\quad\text{$\vec{b}_j$ is an $\frac{N}{n}$-bytes coded block.}
\end{array}
\]
We store an encoded block $\vec{b}_i$ to a node $n_i$ of a system with $n+p$ nodes.
For decoding, we gather $n$-blocks $B = (\vec{b}_{i_1} \vec{b}_{i_2} \ldots \vec{b}_{i_n})^T$ from alive nodes.
The $(n \times n)$-submatrix $\mathcal{M}$ of $\mathcal{V}$ obtained by extracting row-vectors at $\set{i_1, i_2, \ldots, i_n}$ satisfies $B = \mathcal{M} \cdot_{\gf{2^8}} D$. 
Since any square submatrix of Vandermonde matrices is invertible~\cite{MacWilliams:1977, Shilov:1977, Lang:1986},
the inverse $\mathcal{M}^{-1}$ of $\mathcal{M}$ recovers $D$ as $\smash{\mathcal{M}^{-1}} \cdot_{\gf{2^8}} B = D$.

Now, the advantage of \textbf{RS} \emph{space efficiency} emerges as the size of the encoded blocks.
For example, on $\textbf{RS}(10, 4)$, nodes of a system require $\frac{N}{10}$-bytes of space for an $N$-bytes data.
On the other hand, the disadvantage \emph{slowness} results from MM over finite fields.
Multiplying $n \times n$ matrices requires $\appr{}O(n^{2.37287})$ field operations even when using the latest result~\cite{Alman:2021, Gall:2014}.
Moreover, finite field multiplication $\times_{\gf{k}}$ is computationally expensive, and its optimization is an active research area~\cite{Koc:1998,Huang:2003,Kalcher:2011,Plank:2013,Larrieu:2019}.

There are two primary acceleration methods of \textbf{RS}.

\noindent \textbf{(1)} Tightly coupling sophisticated optimization methods for MM and finite field multiplication.
Intel provides an EC library, ISA-L (Intelligent Storage Acceleration Library), based on this approach~\cite{ISAL}.
ISA-L is exceptionally optimized for MM over $\gf{2^8}$
and offers different assembly codes for each platform to maximize SIMD instruction performance~\cite{IntelSIMD, AMDSIMD, ARMNeon}.
Intel reported ISA-L scored about \textbf{6.0 GB/s} encoding throughput for \textbf{RS}(10,4) in~\cite{ISALbenchmark}.
In our evaluation at \secref{sec:evaluation}, ISA-L scores for \textbf{6.7 GB/s}.

\noindent \textbf{(2)} XOR-based EC~\cite{Mastrovito:1989,blomer:1995,Zhou:2020}
converts MM over $\gf{2^8}$ to MM over $\gf{2}$ where $\gf{2}$ is the finite field of the bits $\set{0, 1}$.
This approach depends on the following two properties:\\
(i) There is an isomorphism $\bitmatrix: \gf{2^8} \cong \gfsize{2}{8 \times 1}$ between bytes and 8-bits column vectors; \\
(ii) There is a function $\widetilde{\cdot} : \gf{2^8} \to \gfsize{2}{8 \times 8}$ from bytes to $8 \times 8$ matrices over $\gf{2}$ such that:
$
\forall x, y \in \gf{2^8}.\ x \times_{\gf{2^8}} y = \bitmatrix^{-1}(\tilde{x} \cdot_{\gf{2}} \bitmatrix(y)).
$

We can calculate the above $\mathcal{V} \cdot_{\gf{2^8}} D$ without the finite field multiplication of $\gf{2^8}$, $\times_{\gf{2^8}}$, extending $\bitmatrix$ and $\tilde{\cdot}$ to matrices as follows:
\[
\mathcal{V} \cdot_{\gf{2^8}} D = \bitmatrix^{-1}(\tilde{\mathcal{V}} \cdot_{\gf{2}} \bitmatrix(D)).
\]

Since the addition (resp. multiplication) of $\gf{2}$ is the bit XOR $x \xor y$ (resp. bit AND),
MM over $\gf{2}$ is just array XORs, as presented below:
\[
\arraycolsep=0.3pt
\left(\begin{array}{lllllll}
1 & 1 & 0 & 0 & 0 & 0 & 0 \\
0 & 0 & 1 & 1 & 1 & 1 & 0 \\
0 & 0 & 1 & 1 & 1 & 0 & 1
\end{array}\right)
\cdot_{\gf{2}}
\begin{pmatrix}
\vec{d}_1 \\
\smash[t]{\vdots} \\
\vec{d}_7
\end{pmatrix} =
\begin{pmatrix}
\vec{d}_1 \xor \vec{d}_2 \\
\vec{d}_3 \xor \vec{d}_4 \xor \vec{d}_5 \xor \vec{d}_6 \\
\vec{d}_3 \xor \vec{d}_4 \xor \vec{d}_5 \xor \vec{d}_7
\end{pmatrix}.
\]

MM over $\gf{2}$ is easy-to-implement.
Thanks to its implementability, this method was proposed in VLSI to realize finite field arithmetic on a small circuit~\cite{Mastrovito:1989}.
This method is currently receiving attention since
array XORs $\vec{d}_i \xor \vec{d}_j$ are quickly executed via recent SIMD instructions~\cite{Plank:2013}.
In exchange for the ease of implementation, the obtained coding matrix $\tilde{\mathcal{V}} \in \gfsize{2}{8a \times 8b}$ is much larger than the original coding matrix $\mathcal{V} \in \gfsize{2^8}{a \times b}$;
thus, $\tilde{\mathcal{V}}$ needs more (but simple) operations of $\gf{2}$ than those of $\gf{2^8}$ in $\mathcal{V}$.

Recently, Zhou and Tian published an invaluable study~\cite{Zhou:2020} that synthesized several acceleration methods for executing array XORs.
It is the state-of-the-art study based on XOR-based EC; however, it scored \textbf{4.9 GB/s} for \textbf{RS}(10, 4) encoding.

Now, we have a question: ``Is XOR-based EC essentially slower than the former approach in exchange for the ease of implementation?''.
The answer is ``No''.
We provide a streamlined method to optimize XOR-based EC by employing various program optimization techniques.
We also implement and provide an experimental EC library outperforming ISA-L.

\section{Our Approach and Contribution}
We identify the MM over $\gf{2}$, $\cdot_{\gf{2}}$, as \emph{straight-line programs} (SLPs), a classical compiler theory tool~\cite{Aho:72, Aho:86}, as follows:
\[
\arraycolsep=0.3pt
\left(\begin{array}{lllllll}
1 & 1 & 0 & 0 & 0 & 0 & 0 \\
0 & 0 & 1 & 1 & 1 & 1 & 0 \\
0 & 0 & 1 & 1 & 1 & 0 & 1
\end{array}\right)
\cdot_{\gf{2}}
\begin{pmatrix}
\vec{a} \\
\vec{b} \\[1pt]
\smash[t]{\vdots} \\
\,\vec{g}\,
\end{pmatrix} \Mapsto
P:
\begin{array}{l}
\nu_1 \gets a \xor b; \\
\nu_2 \gets c \xor d \xor e \xor f; \\
\nu_3 \gets c \xor d \xor e \xor g; \\
\return(\nu_1, \nu_2, \nu_3)
\end{array}
\]
where $a, b, \ldots, g$ are constants meaning input arrays,
and $\nu_1, \nu_2, \nu_3$ are variables meaning arrays allocated at runtime.
SLPs are programs with a single binary operator without branchings, loops, and functions as above.

Replacing the MM over $\gf{2}$ by SLPs is a simple but key idea for importing various optimization methods from theory of programming.
This is the crucial difference between our study and that of Zhou and Tian~\cite{Zhou:2020},
where they treated topics directly on matrices of $\gf{2}$ and introduced ad-hoc constructions,
without sophisticated results of program optimization.

\paragraph{\textbf{Our Goal}}
The goal of this paper is to provide an efficient EC library importing various programmer-friendly optimization methods from theory of programming.
Technically, we implement our optimizer as a translator, which converts an SLP to a more efficient one.
When encoding and decoding data, we run optimized SLPs line-by-line in our host language in the interpreter style.

\subsection{Idea and Contribution in Our Optimizer}
We optimize SLPs via compressing, fusing, and scheduling.
Let us see the idea of each step by optimizing the above $P$ as follows:
\[
P
\!\underset{\textit{comp.}}{\Mapsto}\!
\begin{array}{r@{\ }l}
\lambda & \gets c \xor d \xor e; \\
\nu_1 & \gets a \xor b; \\
\nu_2 & \gets \lambda \xor f; \\
\nu_3 & \gets \lambda \xor g; \\
\multicolumn{2}{l}{\return(\nu_1, \nu_2, \nu_3)}
\end{array}
\!\!\!\!\!\!\!\!\underset{\textit{fuse}}{\Mapsto}\!
\begin{array}{r@{\ }l}
\lambda & \gets \lbxor (c, d, e); \\
\nu_1 & \gets a \xor b; \\
\nu_2 & \gets \lambda \xor f; \\
\nu_3 & \gets \lambda \xor g; \\
\multicolumn{2}{l}{\return(\nu_1, \nu_2, \nu_3)}
\end{array}
\!\!\!\!\!\!\!\!\!\!\!\!\underset{\textit{sched.}}{\Mapsto}\!
\begin{array}{r@{\ }l}
\nu_1 & \gets a \xor b; \\
\lambda & \gets \lbxor (c, d, e); \\
\nu_2 & \gets \lambda \xor f; \\
\lambda & \gets \lambda \xor g; \\
\multicolumn{2}{l}{\return(\nu_1, \nu_2, \lambda)}
\end{array}
\]

\paragraph{Compressing}
We use the compression algorithm \textsc{RePair}~\cite{Larsson:1999},
which is used to compress context-free grammars (CFGs) in grammar compression.
We can immediately adapt it by ignoring $\xor$ of SLPs, and by identifying constants (resp. variables) of SLPs as terminals (resp. nonterminals) of CFGs.

\textsc{RePair} compresses a program (or CFG) by extracting its hidden repetition structures.
For $P$, \textsc{RePair} extracts the repeatedly appearing subterm $c \xor d \xor e$ and replaces it with a new variable $\lambda$.
It reduces the seven XORs to five and speeds up $\frac{5}{7} \sim 30\%$.

We extend \textsc{RePair} to \textsc{XorRePair} by adding the XOR-cancellation property $(x \xor x \xor y = y)$, not considered in grammar compression.

\paragraph{Fusing}
To reduce memory access, we employ a technique called deforestation in functional program optimization~\cite{Wadler:1990}.
Deforestation eliminates intermediate data via fusing functions.
Although it has a deep background theory, we can easily adapt it thanks to the simplicity of SLP (one operator and no functions).

In our example, $c \xor d \xor e$ invokes six memory accesses
because $c \xor d$ invokes three---loading $c, d$ and storing the result to an \emph{intermediate} array $I_{c, d}$---,
and the remaining $I_{c, d} \xor e$ also invokes three.
By fusing the two XORs to $\bigoplus(c, d, e)$, we eliminate (deforest) the intermediate array $I_{c, d}$.
The fused XOR only invokes four memory accesses; loading $c$, $d$, and $e$, and storing the result array.

\paragraph{Scheduling}
To reduce cache misses, we revisit the well-known (but vague) maxim for cache optimization \emph{increasing the locality of data access}.
It appears in our example to reorder $\lambda$ and $\nu_1$ to adjust the generation site of $\lambda$ to the use sites, $\lambda \xor f$ and $\lambda \xor g$.
Furthermore, we reuse $\lambda$ without allocating and accessing $\nu_3$.

The maxim for cache optimization is too vague to automatically optimize SLPs and incorporate it into our optimizer.
Thus, in \secref{sec:pebble game}, 
we introduce measures for cache efficiency
and concretize our optimization problem as reducing the measures of a given SLP.
To optimize SLPs, we employ the (red-blue) pebble game of program analysis~\cite{Sethi:1975, Hong:81}.
The game is a simple abstract model of computation with fast and slow devices.
In our setting, the fast and slow devices are cache and main memory, respectively.

\paragraph{Performance}
Each step improves coding performance as follows.
\namedframe{\textit{Encoding Throughput Improvement on} \textbf{RS}(10, 4) (in \secref{sec:evaluation} \& \secref{evaluation:analyze throughput})}{
\begin{tikzpicture}[every node/.style={inner sep=0,outer sep=0}]
\node[anchor=south] (A) at (0.0, 0) {\begin{tabular}{c}Base:\\[1pt] 4.03GB/s\end{tabular}};
\node[anchor=south] (D) at (2.3, 0) {\begin{tabular}{c}In~\secref{sec:SLP}:\\[1pt] 4.36GB/s\end{tabular}};
\node[anchor=south] (E) at (4.5, 0) {\begin{tabular}{c}In~\secref{sec:Fusion}:\\[1pt] 7.50GB/s\end{tabular}};
\node[anchor=south] (F) at (6.8, 0) {\begin{tabular}{c}In~\secref{sec:pebble game}:\\[1pt] 8.92GB/s\end{tabular}};

\draw[->] (A) -- node[above]{\footnotesize Comp.}  (D);
\draw[->] (D) -- node[yshift=2pt, above]{\footnotesize Fuse} (E);
\draw[->] (E) -- node[yshift=2pt, above]{\footnotesize Sched.} (F);
\end{tikzpicture}
}
where Base runs unoptimized SLPs that are obtained from matrices,  such as the above $P$.
Interestingly, although \textsc{(Xor)RePair} reduces about 60\% XORs on average (as we will see in \secref{sec:evaluation}),
the summary says the compressing effect is small.
This is because \textsc{(Xor)RePair} generates cache-poor compressed SLPs,
and this observation will be substantiated by the cache analysis using our pebble game.
The sole application of \textsc{(Xor)RePair} is not good as the theoretical improvements suggest;
however, compressing achieves excellent performance in combination with memory and cache optimization.

\section{Related Work}
Zhou and Tian earnestly studied and evaluated various acceleration techniques for XOR-based EC \cite{Plank:2008, Huang:2007, Luo:2014} in \cite{Zhou:2020}. 
Their study comprises two stages: (i) reducing XORs of bitmatrices (= matrices over $\gf{2}$)~\cite{Huang:2007, Plank:2008};
and (ii) reordering XORs for cache optimization~\cite{Luo:2014}.
We emphasize that the previous works~\cite{Huang:2007, Plank:2008, Luo:2014}---and thus, Zhou and Tian---never considered SLPs, deforestation, and pebble games.
The lack of considering SLP makes a problem in each stage.
First, the XOR reducing heuristics of~\cite{Huang:2007, Plank:2008} run on graphs, which are obtained in an ad-hoc manner from bitmatrices.
This leads to a lack of considering the XOR-cancellation (unlike our \textsc{XorRePair}) and limited performance.
Indeed, Zhou and Tian reported the average reducing ratio (the smaller the better) $\frac{\text{\#XOR of reduced}}{\text{\#XOR of original}} \approx 65\%$.
Their ratio is larger than ours---$42.1\%$ of \textsc{RePair} and $40.8\%$ of \textsc{XorRePair}, as we will see in~\secref{sec:evaluation}.
Next, the cache optimization heuristics of~\cite{Luo:2014}, which reorders XORs locally without considering pebble games,
is not quite effective, $\frac{\text{throughput of optimized}}{\text{throughput of original}} \approx 101\%$ in~\cite{Zhou:2020}.
In contrast, our heuristics for the scheduling problem are effective, with an improvement ratio of $\approx 125\%$ in \secref{sec:evaluation}.

SLP has been widely studied in the early days of program optimization~\cite{Aho:72, Bruno:1976, Aho:77}.
Recently, Boyar~et~al.~revisited SLP with the XOR operator~\cite{Boyar:2008, Boyar:2013}
to optimize (compress) bitmatrices used in the field of cryptography, such as the AES S-box~\cite{Stoffelen:16, Reyhani:2018, Tan:2019}.
Their approach is based on Paar's heuristic~\cite{Paar:97}, which is almost the same as \textsc{RePair}~\cite{Larsson:1999}.
Previous works for cryptography~\cite{Boyar:2013, Kranz:2017, Reyhani:2018, Tan:2019}
focus on reducing the XORs in such special SLPs even if spending several days on one SLP.
Indeed, the proposed heuristics run in exponential time for aggressive optimization.
However, for \textbf{RS}(10, 4), as we will see in \secref{sec:evaluation},
we need to optimize 1002 SLPs for encoding and decoding.
On the basis of this difference, we propose the new heuristics \textsc{XorRePair} running in polynomial time.
Although the work of Boyar~et~al. inspired the authors,
we emphasize that they did not consider memory and cache optimization because their goal was to compress bitmatrices.

Hong and Kung proposed the red-blue pebble game~\cite{Hong:81} to model and study the transfer cost between fast and slow devices.
This game has been used to analyze a fixed algorithm and program, rather than for optimization.
There is a recent remarkable work by Kwasniewski~et~al.~where they used the pebble game to prove the near-optimality of their fixed MM algorithms~\cite{Kwasniewski:2019}.
Recently, there has been a trend to use the red-blue pebble game for program optimization~\cite{Carpenter:2016, Demaine:2018, Papp:2020}.
Our work is in this direction; indeed, the pebble game is the basis of our cache optimization algorithm.


\section{Reducing XOR Operations}\label{sec:SLP}
We formally introduce SLP with the XOR operator.
To optimize SLP, we employ a compression algorithm, \textsc{RePair}, and extend it to \textsc{XorRePair} by accommodating a property of XOR.
We will measure and compare the performance of \textsc{RePair} and \textsc{XorRePair} in~\secref{sec:evaluation}.

\subsection{Straight-Line Program}\label{sec:SLPdef}
A straight-line program (SLP) is a program without branchings, loops, and procedures~\cite{Aho:72, Bruno:1976, Aho:77}.
An SLP is a tuple $\tuple{\mathscr{V}, \mathscr{C}, \vec{s}, \vec{g}, \otimes}$
where
 $\mathscr{V}$ is a set of variables,
 $\mathscr{C}$ is a set of constants,
 $\vec{s}$ is a sequence of instructions (i.e., the body of the program),
 $\vec{g}$ is a sequence of variables returned by the program,
 and $\otimes$ is a binary operator.
The set of instructions $\tuple{\mathsf{instr}}$ is defined by the following BNF:
\[
\begin{array}{l}
\tuple{\mathsf{instr}} \coloneqq \mathscr{V} \gets \tuple{\mathsf{expr}} \\
\tuple{\mathsf{expr}} \coloneqq \mathscr{V} \mid \mathscr{C} \mid \tuple{\mathsf{expr}} \otimes \tuple{\mathsf{expr}}
\end{array}
\]

We consider a class of SLPs, \emph{XOR SLP}~\cite{Boyar:2008, Boyar:2013}, whose binary operator only satisfies the associativity ($(x \xor y) \xor z = x \xor (y \xor z)$),
commutativity ($x \xor y = y \xor x)$, and cancellativity ($x \xor x \xor y = y$) laws.
We write $\slp_\xor$ for the class.

$\slp_\xor$ is a DSL for XORing byte arrays.
For example, the following left $\slp_\xor$ abstracts the right array program:\\
{\centering
\begin{tabular}{l@{}l@{}l}
$\begin{array}{l}
v_1 \gets a \xor b; \\
v_2 \gets b \xor c \xor d; \\
v_3 \gets v_1 \xor v_2; \\
\return(v_2, v_3, v_1)
\end{array}
$&$\Leftrightarrow$&
\begin{lstlisting}[basicstyle=\small\ttfamily, columns=fullflexible]
P(a, b, c, d: [byte]) {
  var v$_1$ = a xor b;
  var v$_2$ = (b xor c) xor d;
  var v$_3$ = v$_1$ xor v$_2$;
  return (v$_2$, v$_3$, v$_1$);
}  
\end{lstlisting}
\end{tabular}}\\
where for the left SLP $\mathscr{V} = \set{ v_1, v_2, v_3 }$, $\mathscr{C} = \set{ a, b, c, d }$, and $\vec{g} = \tuple{v_2, v_3, v_1}$, and for the array program the infix function \verb|xor| performs XOR element-wise for input byte arrays.
On the basis of this idea, we have the following correspondence:
\[
\begin{array}{l}
\text{constants of $\slp_\xor$} \Leftrightarrow \text{program input arrays}, \\
\text{variables of $\slp_\xor$} \Leftrightarrow \text{arrays allocated at runtime}.
\end{array}
\]

\paragraph{Calculus on $\slp_\xor$}
We consider a set-based semantics where the value of a variable is a set of the constants.
We interpret $\xor$ as the symmetric difference of sets;
e.g., $\set{ a, b } \xor \set{ c, d } = \set{ a, b, c, d }$ and $\set{ a, b } \xor \set{ a, c } = \set{ b, c }$.
This semantics enables us to compute the above example $\slp_\xor$ as follows:
\[
\scalebox{0.95}{$
\begin{array}{l | c c c}
\hfill \text{SLP}\ P \hfill & v_1\text{-value} & v_2\text{-value} & v_3\text{-value} \\\hline
v_1 \gets a \xor b; & \set{a, b} & \\
v_2 \gets b \xor c \xor d; & \set{a, b} & \set{b, c, d} \\
v_3 \gets v_1 \xor v_2; & \set{a, b} & \set{b, c, d} & \set{a, c, d} \\
\return (v_2, v_3, v_1)
\end{array}
$}
\]

\textbf{Notation.}
We use $\result{\cdot}$ to denote the returned values of a program;
e.g., $\result{P} = \tuple{ \set{ b, c, d}, \set{a, c, d}, \set{a, b } }$.
We use $\#_\xor{\cdot}$ to denote the size of a program, i.e., the number of XORs;
e.g., $\#_\xor{P} = 4$.
We use $\mathsf{NVar}(\cdot)$ to denote the number of variables;
e.g., $\mathsf{NVar}(P) = |\set{v_1, v_2, v_3}| = 3$
where $|S|$ is the cardinality of a finite set $S$.

\subsection{Shortest SLP Problem}\label{sec:shortest SLP}
We formalize our first optimization problem.

\namedframe{\textbf{The shortest $\slp_\xor$ problem}}{
For a given $P \in \slp_\xor$,
we find $Q \in \slp_\xor$ that satisfies $\result{P} = \result{Q}$ and minimizes $\#_\xor Q$.
}

We cannot solve this problem in polynomial time unless \textbf{P=NP}
since the NP-completeness of its decision problem version was shown by Boyar~et~al~\cite{Boyar:2013}.
They reduced the NP-complete problem Vertex Cover Problem~\cite{Garey:1990} to the above problem.

\paragraph{Example: Minimizing via Cancellation.}
Let us consider the following three equivalent SLPs:
\[
\scalebox{0.95}{$
\begin{array}{lll}
\slp_\xor\ P_0 & \slp_\xor\ P_1 & \slp_\xor\ P_2 \\\hline
v_1 \gets a \xor b; & v_1 \ass a \xor b;  & v_1 \ass a \xor b; \\
v_2 \gets a \xor b \xor c; & v_2 \ass v_1 \xor c; & v_2 \ass v_1 \xor c; \\
v_3 \gets a \xor b \xor c \xor d; & v_3 \ass v_2 \xor d; & v_3 \ass v_2 \xor d; \\
v_4 \gets b \xor c \xor d; & v_4 \ass b \xor c \xor d ; & v_4 \ass v_3 \xor a;  \\
\return(v_1, v_2, v_3, v_4) & \return(v_1, v_2, v_3, v_4) & \return(v_1, v_2, v_3, v_4) \\
\end{array}
$}
\]
where $\result{P_0} = \result{P_1} = \result{P_2}$, $\#_\xor P_0 = 8$, $\#_\xor P_1 = 5$, and $\#_\xor P_2 = 4$.
We notice that, in $P_2$, the $\xor$-cancellativity is effectively used to compute $v_4$.
Indeed, we can show that there is no $Q$ with $\#_\xor Q < 4$ and $\result{P_0} = \result{Q}$ enumerating $Q \in \slp_\xor$, 
Moreover, there is no $Q$ with $\#_\xor Q< 5$ and $\result{P_0} = \result{Q}$ unless using the $\xor$-cancellativity.

This examples emphasizes the $\xor$-cancellativity is essential to shorten $\slp_\xor$.
We can also say that $P_2$ is 2x faster than $P_0$.

\if0
\subsection{New Intractability Proof by Hidden Complete Graph}
\label{sec:npcompleteness}
Here we give a simpler proof of the intractability of the shortest $\slp_\xor$ problem than the original proof of~\cite{Boyar:2013}.
Although they gave a six-page proof (excluding examples), the essence of their proof is very simple, and we extract it here as a short proof with a key new notion, \emph{hidden complete graph}.
We believe that giving a short proof by introducing a new idea is crucial to widening the theory of XOR SLP for more applications.

We borrow the idea of~\cite{Boyar:2013} that is using vertex cover problem, which is a famous NP-complete problem.
Let $G = (V, E)$ is an undirected graph.
The vertex cover problem of $G$ with a positive number $k$, $\text{VCP}(G, k)$, is to decide whether or not there is a subset $X \subseteq V$ such that $|X| \leq k$ and for all edge $(p, q) \in E$, either of $p \in X$ or $q \in X$ holds.
Boyar et al.~translated a given graph $G$ to the boolean matrix $\mathcal{B}_G$ as follows by introducing a special constant $Z$ and adding it to each edge:
\begingroup
\renewcommand*{\arraystretch}{0.8}
\[
\begin{tikzcd}[transform canvas={scale=0.8}]
A \arrow[rr, "e_1", no head] &  & B \arrow[dd, "e_3", no head] \arrow[lldd, "e_4"', no head] \\
                             &  &                                                            \\
C                            &  & D \arrow[ll, "e_2"', no head]                             
\end{tikzcd}
\qquad\qquad\Mapsto
\begin{blockarray}{cccccc}
& Z & A & B & C & D \\
\begin{block}{c(ccccc)}
e_1 & 1 & 1 & 1 & 0 & 0 \\
e_2 & 1 & 0 & 0 & 1 & 1 \\
e_3 & 0 & 0 & 1 & 0 & 1 \\
e_4 & 1 & 0 & 1 & 1 & 0 \\
\end{block}
\end{blockarray}
\]
\endgroup

When knowing covering vertices of a graph $G$,
it is easy to construct an XOR SLP $P$ satisfying $\result{P} = \mathcal{B}_G$.
For example, about the above graph, $\set{ B, C }$ and $\set{ B, D }$ are the smallest covering vertices.
We construct the following XOR-SLP from the set $\set{ B, C }$:
\[
\begin{array}{ll}
1: v_B \gets Z \xor B; & 2: v_C \gets Z \xor C; \\
3: e_1 \gets v_B \xor A; & 4: e_2 \gets v_C \xor D; \\
5: e_3 \gets v_B \xor D; & 6: e_4 \gets v_B \xor C; \\
7: \return (e_1, e_2, e_3, e_4); 
\end{array}
\]
First we define variables of the form $Z + \bullet$ by covering vertices and then we make edges by using defined variables.
This construction  leads to the following property.
\begin{proposition}
$\text{VCP}(G, k) \implies \text{Min}_\xor(\mathcal{B}_G) \leq k + |E|$.
\end{proposition}

Conversely, we extract covering vertices from an XOR SLP.
This direction is difficult; in fact, Boyar~et~al.~spent most of the six pages of their proof~\cite{Boyar:2013}.
We introduce a key notion, hidden complete graph, which makes our proof shorter.

Let $v_i$ be a variable defined at the line $i$.
For constants $c, c'$, if $c \in v_i$ and $c' \in v_i$, we write $c \dcon{i} c'$. This means that the two constants are directly connected at the line $i$.
Furthermore,  if $j \geq i$ and $c \dcon{i} c'$, we write $c \dcon{j} c'$. This means that two constants are connected once after they were connected.

If there is a chain such as $c \dcon{\ell} c_1 \dcon{\ell} \cdots \dcon{\ell} c_n \dcon{\ell} c'$,
we write $c \con{\ell} c'$ to denote $c$ and $c'$ are (indirectly) connected.
We impose $c \con{\ell} c$ so then $\con{\ell}$ can be an equivalence relation.

Since $\con{\ell}$ is an equivalence relation, we write $[c]_\ell$ to denote the equivalence class $\set{ c' : c \con{\ell} c' }$.
Especially, we call this \emph{hidden complete graph} (HCG) including $c$.

Intuitively, we consider each HCG as a complete graph consists from its members.
To see how we use HCG, let us consider the following example line-by-line:
\[
\begin{array}{l@{\qquad}c}
\hfill \text{SLP} \hfill & \textit{all HCGs} \\
v_1 \gets A \xor B; & \set{ A, B } \\
v_2 \gets Z \xor v_1\ (= \set{ Z, A, B }); & \text{\footnotesize same as above} \\
v_3 \gets C \xor D; & \set{ A, B }, \set{ C, D } \\
v_4 \gets Z \xor v_3\ (= \set{ Z, C, D }); & \text{\footnotesize same as above} \\
v_5 \gets B \xor D; & \set{ A, B, C, D } \\
v_6 \gets Z \xor v_5\ (= \set{ Z, B, D }); & \text{\footnotesize same as above} \\
v_7 \gets v_6 \xor v_3\ (= \set{ Z, B, C }); & \text{\footnotesize same as above} \\
\return (v_2, v_4, v_6, v_7);
\end{array}
\]
We go line by line along with guessing covering vertices as a set $\mathcal{C}$ and try to cover all edges of the HCGs.
The line 1 generates the HCG $[A]_1 = \set{ A, B }$.
Since the edge $(A, B)$ from the HCG is not yet covered, we guess and add $A$ to $\mathcal{C}$.
The line 2 realizes the edge $(A, B)$; however, it is already covered because we covered $[A]$.
The line 3 generates the new HCG $[C]_3 = \set{ C, D }$;
since the edge $(C, D)$ is not covered, we guess and add $C$ to $\mathcal{C}$.
The line 4 realizes the edge $(C, D)$; it is already covered because we covered $[C]$.
The line 5 makes $B$ and $D$ are connected and merges the two HCGs into the new HCG $[A]_5 = \set{ A, B, C, D }$.
Since we do not cover the edge $(B, D)$ yet, we guess and add $D$ to $\mathcal{C}$.

The line 6 and 7 realize the edge $(B, D)$ and $(B, C)$, respectively.
We already cover these edges by $\mathcal{C} = \set{A, C, D}$.
Indeed, this is a cover set of the graph.
It can be verified that at the line 1, 3, and 5 we do not fail if choosing other constants to add $\mathcal{C}$.

Generally, we can show our construction never fails. This leads to the following proposition.
We put the detailed proof to the appendix, which only formalizes the above argument.
\begin{proposition}
$\text{Min}_\xor(\mathcal{B}_G) \leq |E| + k \implies \text{VCP}(G, k)$.
\end{proposition}

Now we reprove the intractability of the shortest SLP problem.

\begin{theorem}[\cite{Aho:77, Boyar:2013}]\label{thm:npcompleteness}
The shortest (XOR-)SLP problem is NP-complete.
There is no polynomial time algorithm that finds one of the shortest (XOR-)SLP that corresponds to a given SLP unless \textbf{P=NP}.
\end{theorem}
\fi

\subsection{Compressing SLP by \textsc{RePair}}\label{subsec:compressed by repair}
Instead of searching the shortest SLPs by tackling the intractable optimization problem,
we employ the grammar compression algorithm \textsc{RePair} from grammar compression theory as a heuristic.

In the original paper of \textsc{RePair}~\cite{Larsson:1999},
Larsson and Moffat applied a procedure called \emph{pairing} recursively to compress data (\textsc{RePair} stands for \emph{\textbf{re}cursive \textbf{pair}ing}).
For an SLP $P$ and a pair $(x, y)$ of terms (constants and variables),
we replace all the occurrences of the pair in $P$ introducing a fresh variable.
Hereafter, we call this step $\textsc{Pair}(x, y)$.
Let us apply $\textsc{Pair}(a, b)$ to the previous $\slp_\xor$ $P_0$.
\[
\scalebox{0.98}{$
\begin{array}{l}
v_1 \gets a \xor b; \\
v_2 \gets a \xor b \xor c; \\
v_3 \gets a \xor b \xor c \xor d; \\
v_4 \gets b \xor c \xor d; \\
\return(v_1, v_2, v_3, v_4)
\end{array}
\overset{\textsc{Pair}(a, b)}{\Mapsto}
\begin{array}{l}
t_1 \gets a \xor b; \\\hline
v_2 \gets t_1 \xor c; \\
v_3 \gets t_1 \xor c \xor d; \\
v_4 \gets b \xor c \xor d; \\
\return(t_1, v_2, v_3, v_4)
\end{array}
$}
\]
It replaces all $a \xor b$ with the new variable $t_1$ and reduces XORs.

To distinguish variables introduced by \textsc{Pair} and the others, we use horizontal lines as above.
Variables introduced by \textsc{Pair}, $t_1, t_2, \ldots$, are called \emph{temporals} and the others \emph{originals}; e.g., $t_1$ is temporal and $v_2, v_3, v_4$ are original.

To define our version of \textsc{RePair},
we need a total order $\prec$ on terms and extend it to the lexicographic ordering $\sqsubset$ on pairs.
In this paper, as an example, we use the total order defined as follows:
we order all constants in the alphabetical order and order temporal variables using their generation order: $t_1 \prec t_2 \prec \ldots \prec t_m$ where $t_i$ is generated before $t_{i+1}$ by \textsc{Pair}.
Furthermore, we require $t \prec c$ for a temporal variable $t$ and a constant $c$.

Now, we define \textsc{RePair} using \textsc{Pair} as its subroutine.
\namedframe{\textbf{\textsc{RePair}}}{
{\tikz[baseline=(A.base), remember picture] \node[inner sep=0pt, outer sep=0pt] (A) {\textbf{\textit{loop}}};} 
\begin{description}
\item[(1)] If there is no original variable, we terminate. 
\item[(2)] Otherwise, we choose a pair of terms that most frequently appears in the definitions of original variables (i.e., below the horizontal line).
We then apply \textsc{Pair} with the pair.
If there are multiple candidates, we select the smallest one for $\sqsubset$.
\end{description}
}
\begin{tikzpicture}[remember picture, overlay]
\draw[gray, ultra thick] (A) -- ($(A)+(0, -2.4)$);
\end{tikzpicture}
\textit{Example.}
Let us apply \textsc{RePair} to the above $P_0$ omitting $\return$:
\[
\hspace{-5pt}
\scalebox{0.98}{$
\begin{array}{l}
v_1 \gets a \xor b; \\ 
v_2 \gets a \xor b \xor c; \\
v_3 \gets a \xor b \xor c \xor d; \\
v_4 \gets b \xor c \xor d; \\
\end{array}
\!\!\overset{(a, b)}{\Mapsto}
\begin{array}{l}
t_1 \gets a \xor b; \\\hline
v_2 \gets t_1 \xor c; \\
v_3 \gets t_1 \xor c \xor d; \\
v_4 \gets b \xor c \xor d; \\
\end{array}
\overset{(t_1, c)}{\Mapsto}
\begin{array}{l}
t_1 \gets a \xor b; \\
t_2 \gets t_1 \xor c; \\\hline
v_3 \gets t_2 \xor d; \\
v_4 \gets b \xor c \xor d; \\
\end{array}
$}
\]
\[
\overset{(t_2, d)}{\Mapsto}
\scalebox{0.98}{$
\begin{array}{l}
t_1 \gets a \xor b; \\
t_2 \gets t_1 \xor c; \\
t_3 \gets t_2 \xor d; \\\hline
v_4 \gets b \xor c \xor d; \\
\end{array}
$}
\tikz[baseline=(X.base), remember picture] {\node[inner sep=0pt, outer sep=0pt] (X) {$\overset{(b, c)}{\Mapsto}$};}
\scalebox{0.98}{$
\begin{array}{l}
t_1 \gets a \xor b; \\
t_2 \gets t_1 \xor c; \\
t_3 \gets t_2 \xor d; \\
t_4 \gets b \xor c; \\\hline
v_2 \gets t_4 \xor d; \\
\end{array}
$}
\overset{(t_4, d)}{\Mapsto}
\scalebox{0.98}{$
\begin{array}{l}
t_1 \gets a \xor b; \\
t_2 \gets t_1 \xor c; \\
t_3 \gets t_2 \xor d; \\
t_4 \gets b \xor c; \\
t_5 \gets t_4 \xor d; \\\hline
\end{array}
$}
\]
At the first step, the pairs $(a, b)$ and $(b, c)$ appear three times, and we choose $(a, b)$ because $(a, b) \sqsubset (b, c)$.
The rest of the parts are processed in the same way.
We note that \textsc{RePair} reduces eight XORs to five, and the obtained SLP equals the previous $P_1$.

\subsection{New Heuristic: \textsc{XorRePair}}
We extend \textsc{RePair} by accommodating the XOR-cancellativity, which is not considered at all in \textsc{RePair}.

First, we introduce an auxiliary procedure, $\textsc{Rebuild}(v)$, which rewrites the definition of a given original variable $v$ using the values of temporal variables.
We also use the auxiliary notation $\val{w}$ to denote the value of a variable $w$.

\mbox{}
\namedframe{\textbf{\textsc{Rebuild}}($v$: original variable)}{
\textbf{Initialize}: Let $\mathit{rem} \coloneqq \val{v}$ and $\mathcal{S} \coloneqq \emptyset$. $\mathit{rem}$ denotes a set of constants to be eliminated by XORing existing temporal variables.

{\tikz[baseline=(A.base), remember picture] \node[inner sep=0pt, outer sep=0pt] (A) {\textbf{\textit{loop}}};} 
\begin{description}
\item[(1)] If we cannot shorten $\mathit{rem}$ (i.e., there is no temporal variable $t$ such that $| \mathit{rem}\,\xor\,\val{t} | < |\mathit{rem}|$),
we return $\mathit{rem} \cup \mathcal{S}$ as the new definition of $v$.
\item[(2)] Otherwise, we choose a temporal variable $t$ that minimizes $|\mathit{rem} \xor \val{t}|$
and update $\mathit{rem} \coloneqq \mathit{rem} \xor \val{t}$ and $\mathcal{S} \coloneqq \mathcal{S} \cup \set{ t }$.
If there are multiple candidates $t$, we choose the smallest one for $\prec$.
{\tikz[remember picture, overlay]\draw[gray, ultra thick] (A) -- ($(A)+(0, -2.85)$);}
\end{description}
}

For example, applying \textsc{Rebuild}($v_4$) to the following left SLP, we obtain a new equivalent definition $v_4 \gets a \xor t_3$:\\
\begin{tabular}{l|l}
\begin{minipage}{2.5cm}
\[
\begin{array}{l}
t_1 \gets a \xor b; \\
t_2 \gets t_1 \xor c; \\
t_3 \gets t_2 \xor d; \\\hline
v_4 \gets b \xor c \xor d; \\
\hfill \Downarrow \hfill \\
\hfill \set{a, t_3} \hfill
\end{array}
\]
\end{minipage}
&
\hspace{-15pt}
\begin{minipage}{8cm}
\begin{enumerate}
\item Set $\mathit{rem} = \set{ b, c, d } = \val{v_4}$.
\item Choose $t_3$ because
\begin{description}
\item{$t_1$:} $|\mathit{rem} \xor \val{t_1}| = |\set{a, c, d}| = 3$;
\item{$t_2$:} $|\mathit{rem} \xor \val{t_2}| = |\set{a, d}| = 2$;
\item{$t_3$:} $|\mathit{rem} \xor \val{t_3}| = |\set{a}| = 1$;
\end{description}
\item Set $\mathit{rem} = \set{a}$ and $\mathcal{S} = \set{ t_ 3 }$.
\item Return $\set{a} \cup \set{ t_3 }$ because \\
\quad $| \mathit{rem} \xor \val{t_i}| > |\set{a}| (= 1)\ \ \forall i \in \set{1, 2, 3}.$
\end{enumerate}
\end{minipage}
\end{tabular}

Augmenting \textsc{RePair} with \textsc{Rebuilt}, we obtain \textsc{XorRePair}:
\namedframe{\textbf{\textsc{XorRePair} = \textsc{RePair} + \textsc{Rebuild}}}{
{\tikz[baseline=(A.base), remember picture] \node[inner sep=0pt, outer sep=0pt] (A) {\textbf{\textit{loop}}};} 
\begin{description}
\item[(1) and (2)] are the same as \textsc{RePair}.
\item[(3)] For each original variable $v$, if $\textsc{Rebuild}(v)$ is strictly smaller than the current definition of $v$,
we update $v$.{\tikz[remember picture, overlay]\draw[gray, ultra thick] (A) -- ($(A)+(0, -1.4)$);}
\end{description}
}

Let us apply \textsc{XorRePair} to the example $P_0$.
\[
\cdots \Mapsto
\scalebox{0.95}{$
\begin{array}{l}
t_1 \gets a \xor b; \\
t_2 \gets t_1 \xor c; \\
t_3 \gets t_2 \xor d; \\\hline
v_4 \gets b \xor c \xor d;
\end{array}
\overset{\makebox[0pt]{\textsc{Rebuild}}}{\Mapsto}\quad
\begin{array}{l}
t_1 \gets a \xor b; \\
t_2 \gets t_1 \xor c; \\
t_3 \gets t_2 \xor d; \\\hline
v_4 \gets a \xor t_3
\end{array}
\overset{(a, t_3)}{\Mapsto}
\begin{array}{l}
t_1 \gets a \xor b; \\
t_2 \gets t_1 \xor c; \\
t_3 \gets t_2 \xor d; \\
t_4 \gets a \xor t_3; \\\hline
\end{array}
$}
\]
First, we reach the above left form.
Next, we update $v_4$ as $v_4 \gets a \xor t_3$ since $\textsc{Rebuilt}(v_4) = \set{a, t_3}$ shortens the definition of $v_4$.
Finally, we perform $\textsc{Pair}(a, t_3)$ and obtain the shortest $\slp_\xor$ with 4 XORs as we have seen in~\secref{sec:shortest SLP}.
Clearly, \textsc{XorRePair} runs in polynomial time.

\paragraph{Related approaches.}
We note that pairing is sometimes called \emph{factoring} in the context of common subexpression elimination (CSE) of compiler construction~\cite{Breuer:1969}.
Combining algebraic properties for simplifying expressions with CSE, like \textsc{XorRePair}, has been naturally considered in compiler construction~\cite{Aho:86, Muchnick:1998}; however, primal methods to choose terms to be factored are elaborated.
We adopt \textsc{RePair} to simply implement our compressor.
The effectiveness of \textsc{RePair} is already known in grammar compression~\cite{Charikar:2005} and in the context of compressing matrices over $\gf{2}$ for cryptography~\cite{Kranz:2017}.


\section{Reducing Memory Access}\label{sec:Fusion}
We reduce memory accesses of SLPs by employing \emph{deforestation}, an optimization method of functional program~\cite{Wadler:1990, Gill:1993, Coutts:2007}.

Deforestation has deep theoretical backgrounds~\cite{Burstall:1977,Takano:1995,Wadler:1989,Wadler:1990, Gill:1993, Coutts:2007};
however, we need a trivial idea for SLPs.
Let us consider the following SLP and the corresponding program:
\begin{center}
\begin{tabular}{r@{\quad}c@{\quad}l}
$
\begin{array}{l}
v_1 \gets a \xor b; \\
v_2 \gets v_1 \xor c; \\
v_3 \gets v_2 \xor d; \\
\return(v_3)
\end{array}
$
&
$\Leftrightarrow$
&
\begin{lstlisting}[basicstyle=\small\ttfamily, columns=fullflexible]
program(a, b, c, d) {
  var out = ((a xor b) xor c) xor d;
  return(out);
}
\end{lstlisting}
\end{tabular}
\end{center}

As the reader might notice, \verb|program| makes two intermediate byte arrays, which correspond to $v_1$ and $v_2$.
Since these intermediate arrays are immediately released, we would like to eliminate them.
To this end, we rewrite \verb|program| to the following one fusing XORs:
\begin{lstlisting}[basicstyle=\small\ttfamily, columns=fullflexible]
Xor$_4$(a, b, c, d) {
  var out = Array::new(a.len());
  for i in 0..out.len():
    out[i]$\ $=$\ $((a[i]$\,$^$\,$b[i])$\,$^$\,$c[i])$\,$^$\,$d[i]; // ^ = byte XOR
  return(out);
}
\end{lstlisting}
$\texttt{Xor}_4$ does not only generate intermediate arrays, but also reduces memory accesses.
For $N$-bytes arrays $A$ and $B$, $A\,\texttt{xor}\,B$ invokes $3N$ memory accesses (loading $A$ and $B$, and writing the XORed result); thus,
\verb|program| invokes $9N$ memory accesses for $N$-bytes arrays.
On the other hand, $\texttt{Xor}_4$ invokes $5N$ memory accesses.

Now, we augment SLPs with variadic XOR operators to formalize fused XORs, like $\texttt{Xor}_4$.
We note that no SLP faithfully reflects $\texttt{Xor}_4$ since the formalization of SLPs only admits binary operators.

\subsection{MultiSLP and Memory Accessing Problem}\label{subsec:XOR fusion}
We extend $\xorslp$ to $\multislp$ by accommodating variadic XORs $\lbxor(\vec{t})$.
$\multislp$ can represent the above $\texttt{Xor}_4$ as follows:
\[
\plabel{1}\ v \gets \lbxor(a, b, c, d); \qquad \plabel{2}\ \return(v);
\]
We also impose on $\multislp$ that there is no nested XORs; indeed, we can remove them as $v \gets (x_1 \xor x_2) \xor x_3 \Mapsto v \gets \lbxor(x_1, x_2, x_3)$.

To formalize our memory access optimization problem,
we define the number of memory accesses $\#_M(P)$ for $P \in \multislp$ as follows:
\[
\#_M(P) = \sum \set{ n + 1 : v \gets \lbxor(t_1, t_2, \ldots, t_n) \in P }.
\]

\namedframe{\textbf{The minimum memory access problem}}{
For an $P \in \slp_\xor$,
we find $Q \in \multislp$ that satisfies $\result{P} = \result{Q}$ and minimizes $\#_M(Q)$.
}

\begin{theorem}\label{thm:intractability of minimum memory access problem}
The minimum memory access problem cannot be solved in polynomial time unless \textbf{P=NP}.
\end{theorem}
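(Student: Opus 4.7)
The plan is to reduce from Maximum Independent Set on $4$-regular graphs, which is well-known NP-hard. Given a $4$-regular $G = (V, E)$ I reuse the Boyar--Peralta encoding: form $P_G \in \slp_\xor$ with constants $V \cup \set{Z}$ and returned values $\set{\set{Z, u, v} : (u,v) \in E}$. The entire argument hinges on the tight equality
\[
\min\set{\#_M(Q) : Q \in \multislp,\ \result{Q} = \result{P_G}} = 4|E| - \alpha(G),
\]
where $\alpha(G)$ is the independence number of $G$. Granted this, a single oracle call to the minimum memory access problem determines $\alpha(G)$ in polynomial time, settling NP-hardness.

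The upper bound is witnessed by a canonical construction. Pick a maximum independent set $I \subseteq V$ and precompute $w_x \gets \lbxor(Z, x)$ for each $x \in I$ (cost $3|I|$). For every edge $(u,v)$ with some endpoint $u \in I$, emit the output as $\lbxor(w_u, v)$ at cost $3$; edges with no endpoint in $I$ are computed directly by $\lbxor(Z, u, v)$ at cost $4$. Since $I$ is independent and $G$ is $4$-regular, exactly $4|I|$ edges are incident to $I$, giving total cost $3|I| + 3\cdot 4|I| + 4(|E| - 4|I|) = 4|E| - |I| = 4|E| - \alpha(G)$.

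The matching lower bound is the technical heart of the argument. My plan is first to normalize $Q$: whenever an internal (non-returned) variable is referenced in at most one subsequent instruction, inlining it into its unique use strictly decreases $\#_M$ by $2$, so in any $\#_M$-optimal $Q$ every internal variable is referenced at least twice. I then classify the surviving intermediates by their set-theoretic value and argue that every ``exotic'' intermediate (for example $\set{u,v}$, $\set{Z, x, y}$, or one arising from cancellation) is dominated by a choice from the canonical family parametrized by the subset $X' \subseteq V$ of vertices for which $w_x = \set{Z, x}$ is computed. Under this canonical form the cost is exactly $4|E| + 3|X'| - |E_{X'}|$, and by $4$-regularity $|E_{X'}| = 4|X'| - |E[X']|$, so the task reduces to bounding $|X'| - |E[X']|$; picking a maximal independent subset $I' \subseteq X'$, each $v \in X' \setminus I'$ has a neighbor in $I'$, which gives $|E[X']| \geq |X' \setminus I'|$ and therefore $|X'| - |E[X']| \leq |I'| \leq \alpha(G)$, as required. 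The hardest step is the exhaustive case analysis that rules out exotic intermediates; I expect it to go through by combining the ``used $\geq 2$ times'' normalization with the degree bound from $4$-regularity, which tightly constrains how many outputs any single non-canonical intermediate can usefully feed.
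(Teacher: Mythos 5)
Your reduction has a genuine gap at exactly the point you flag as ``the hardest step'': the lower bound $\min\set{\#_M(Q) : Q \in \multislp,\ \result{Q}=\result{P_G}} \geq 4|E|-\alpha(G)$ is asserted rather than proved, and it is the entire technical content of the theorem. The upper-bound construction is fine, and so is the combinatorial identity $\max_{X\subseteq V}\bigl(|X|-|E[X]|\bigr)=\alpha(G)$, but your normalization plan (inline once-used variables, then show every ``exotic'' intermediate is dominated by the canonical family) does not obviously close, because the accounting is razor-thin: precomputing $w_x=\set{Z,x}$ costs $3$ and saves only $1$ per edge it serves, so it merely breaks even at three uses and gains exactly $1$ at four. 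Meanwhile cancellation admits competing strategies of the same order. For instance, a ``difference'' variable $d \gets v \xor w$ with value $\set{v,w}$ converts any already-computed goal $\set{Z,v,x}$ into $\set{Z,w,x}$ at cost $3$, and when $v$ and $w$ share all four neighbours this ties the canonical cost exactly; chaining goals through one another gives further ties. Any single configuration in which such a trick beats the canonical family by even $1$ destroys the exact equality your oracle call relies on, so the case analysis must be airtight, and $4$-regularity plus the ``used at least twice'' normalization does not by itself supply it.

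The paper's own proof shows why this is not routine. It attaches to every vertex $a$ two private local vertices $\lambda_a,\mu_a$ with edges to $a$, precisely so that each normalization step that introduces a canonical variable with value $\set{\rho,a}$ can be \emph{paid for} by deleting ``movable occurrences'' of $\lambda_a$ and $\mu_a$ elsewhere in the program (Proposition~\ref{prop:freely appear} and the patterns Pat1--Pat8 all draw on this slack), and the resulting cost formula $3|E|+8|N|+k$ carries the gadget's contribution so that the minimum vertex cover size can be read off exactly. In your gadget-free encoding there is no such slack, so the paper's domination argument cannot be transplanted; you would need a genuinely new proof that arbitrary $\multislp$ programs reduce to your canonical family without cost increase. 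Your choice of source problem (maximum independent set on $4$-regular graphs instead of vertex cover) is immaterial, since the two are complementary; what is missing is either the gadget or a complete replacement for the normalization lemma, and without one of them the claimed equality --- and hence the reduction --- is unsubstantiated.
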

This problem is also intractable as the same  as the shortest $\slp_\xor$
We prove the intractability in Appendix~\secref{sec:appendixA} by reducing the Vertex Cover Problem (VCP) to the above one.
Although using the VCP is the same as the construction given by Boyar~et~al.~\cite{Boyar:2013}
to prove the intractability of the shortest $\slp_\xor$ problem,
we need to deeply analyze $\slp_\xor$ and $\multislp$ because
a normalization from SLPs to SLPs of the binary form as follows, which is the key in the proof of~\cite{Boyar:2013}, does not work well for $\multislp$ and $\#_M$:
\[
\begin{array}{l}
v \gets a \xor b \xor c;
\end{array}
\Mapsto
\begin{array}{l}
v' \gets a \xor b; \\
v \gets v' \xor c; \\
\end{array}
\]
The above normalization to the binary form does not change $\#_\xor$; however, it increases $\#_M$ ($4 \to 6$).
This normalization brought a significantly useful syntactic property on $\slp_\xor$ in~\cite{Boyar:2013}.
Since we cannot count on such the property, in Appendix~\secref{sec:appendixA}, we give a more detailed and elaborated construction.

\subsection{XOR Fusion}\label{sec:xorfusion}
We propose a heuristic, XOR fusion, which reduces memory accesses of a given $\slp_\xor$ by transforming it to $\multislp$.
\namedframe{\textbf{XOR fusion}}{
Repeatedly applying the procedure that if there is a variable $v$ used just once in the program, we unfold $v$ as follows:
\[
\begin{array}{l@{}l}
v & \gets \bigoplus(t_1, t_2, \ldots, t_n); \\
v' & \gets \bigoplus(\ldots, v, \ldots); \\
\end{array}
\Mapsto
\begin{array}{l}
v' \gets \bigoplus(\ldots, t_1, t_2, \ldots, t_n, \ldots); \\
\end{array}
\]
}

The following is an example of the XOR fusion:
\[
\begin{array}{l}
v_1 \gets a \xor b; \\
v_2 \gets v_1 \xor c; \\
v_3 \gets v_2 \xor d; \\
\end{array}
\Mapsto
\begin{array}{l}
v_2 \gets \bigoplus(a, b, c); \\
v_3 \gets v_2 \xor d; \\
\end{array}
\Mapsto
v_3 \gets \lbxor(a, b, c, d);
\]

The fusion reduces memory accesses, and the following holds.
\begin{theorem}
Let $P$ be an $\multislp$, and $Q$ be an $\multislp$ obtained by applying the XOR fusion to $P$.
Then, $\#_M(Q) < \#_M(P)$ holds.
\end{theorem}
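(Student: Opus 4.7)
The plan is to reduce the statement to a one-step analysis: show that each individual firing of the fusion rule strictly decreases $\#_M$ by exactly $2$, and then observe that $Q$ is obtained from $P$ by at least one firing, so the claimed strict inequality follows immediately.

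First I would isolate a single fusion step. Suppose it matches a variable $v$ whose defining instruction is $v \gets \lbxor(t_1, \ldots, t_n)$ and which appears exactly once in the program, inside some other instruction $v' \gets \lbxor(s_1, \ldots, s_{k-1}, v, s_{k+1}, \ldots, s_m)$. The contribution of these two instructions to $\#_M$ before the step is $(n+1) + (m+1) = n + m + 2$. After the step they are replaced by the single instruction $v' \gets \lbxor(s_1, \ldots, s_{k-1}, t_1, \ldots, t_n, s_{k+1}, \ldots, s_m)$, whose contribution is $(m-1) + n + 1 = n + m$. Every other instruction of the program is left untouched, so the total value of $\#_M$ drops by exactly $2$.

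Next I would verify that the step produces a well-formed $\multislp$ with the same result: the new right-hand side is a flat $\lbxor$, since we splice the $t_i$ directly in place of $v$ rather than retain a nested call, so it respects the no-nesting invariant of $\multislp$; and $\result{P} = \result{Q}$ follows from the associativity and commutativity of $\xor$. Some care is needed in reading "used just once in the program" as including any occurrence in $\vec{g}$, which is what licenses the outright deletion of $v$'s defining instruction. Once these points are settled, the conclusion is immediate: since the rule fires at least once in going from $P$ to $Q$, the total decrease is $2j$ for some integer $j \geq 1$, giving $\#_M(Q) = \#_M(P) - 2j < \#_M(P)$. The hard part is essentially nonexistent; the whole argument is bookkeeping, and the only subtlety is pinning down the meaning of a single "use" precisely enough to justify removing the definition of $v$.
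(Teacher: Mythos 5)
Your proof is correct, and it is essentially the only reasonable argument: the paper itself states this theorem without any proof, so there is nothing to compare against beyond the definition of $\#_M$. Your accounting is right — the two instructions contribute $(n+1)+(m+1)$ before the step and the single spliced instruction contributes $(m-1+n)+1$ after, a net drop of exactly $2$ per firing — and your two flagged subtleties (that the spliced right-hand side stays flat, and that ``used just once'' must count occurrences in $\vec{g}$, since otherwise deleting $v$'s definition would change $\result{\cdot}$) are exactly the points that need to be pinned down. The only caveat worth adding is the degenerate case: if no variable of $P$ is used exactly once, the fusion procedure fires zero times and $Q=P$, so the strict inequality fails; your reading that ``obtained by applying the XOR fusion'' presupposes at least one firing is the intended one, but it is an assumption the theorem statement leaves implicit rather than something you can derive.
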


\paragraph{Why do not unfold variables used more than once?}
Let us consider the following three SLPs 
where $A$ is a source SLP, $B$ is obtained one by compressing $A$,
and $C$ is obtained by fusing $A$.
\begin{center}
\begin{tikzpicture}
\node (A) at (0, 0) {$A:
\begin{array}{l}
v_2 \gets a \xor b \xor c \xor d \xor e \xor f; \\
v_3 \gets a \xor b \xor c \xor d \xor e \xor g;
\end{array}
$};

\node (B) at (5, -0.7) {$B:
\begin{array}{l}
v_1 \gets \lbxor(a, b, c, d, e); \\
v_2 \gets v_1 \xor f; \\
v_3 \gets v_1 \xor g; \\
\end{array}
$};

\node (X) at (0, -1.4) {$C:
\begin{array}{l}
v_2 \gets \lbxor(a, b, c, d, e, f); \\
v_3 \gets \lbxor(a, b, c, d, e, g);
\end{array}$};

\draw[->] ($(A.east)+(-.2, 0)$) -> node[xshift=-2pt, auto]{\small compress} ($(B.west)$);
\draw[dashed, ->] ($(B.south west)+(0, .5)$) -> node[xshift=-2pt, auto]{\small disallow} (X.east);
\draw[->] (A) -> node[auto]{\small fuse} (X);
\end{tikzpicture}
\end{center}
where $\#_M(A) = 30$ since one XOR issues three accesses, $\#_M(B) = 12$, and $\#_M(C) = 14$.
Therefore, if we would allow to unfold $v_1$ in $B$, then fusing increases memory accesses.
In other words, the fusion without the restriction uncompresses a given SLP too much.

On the other hand, this situation suggests that uncompressed but fused SLP may run quickly in the real situation.
Since compressing introduces extra variables as above, it may bring terrible effects on cache.
In the next section~\secref{sec:pebble game}, we will consider cache optimization. Furthermore, we will compare the coding throughputs of directly fused SLPs and fully optimized (compressed, fused, and cache optimized) ones in~\secref{sec:evaluation}.


\newcommand{\alloc}[1]{\xmultimap{#1}}

\section{Reducing Cache Misses}\label{sec:pebble game}
We proposed the XOR fusion to reduce memory accesses in the previous section.
We now go one step further and reduce cache misses.
To this end,
we first review a classical cache optimization technique, blocking,
and then formalize our cache optimization problem on the basis of the (red-blue) pebble game~\cite{Sethi:1975, Hong:81}.
We see that our optimization problem cannot be solved in polynomial-time (unless \textbf{P = NP}) and provide polynomial-time heuristics.

\subsection{Blocking Technique for Cache Reusing}
\label{what is blocking technique}
Since the size of a CPU cache is small with compared to that of main memory,
we can only put a few arrays if a given data to be encoded is large.
For example, if the size of L1 cache is 32KB, which is a typical L1 cache size,
and a user encodes 1MB data on $\textbf{RS}(10, 4)$,
the input data is divided into $8 \cdot 10$ arrays of $\frac{1\text{MB}}{80} \approx 12$KB;
therefore, the cache can only hold two arrays at once, and thus the cache performance becomes poor.

To hold many arrays in cache at once,
we use the established cache optimization technique \emph{blocking}, which splits large arrays into small blocks introducing a loop.
Let us perform blocking for the following example, where we split arrays into arrays of $\mathcal{B}$ bytes.
\begin{center}
\begin{tabular}{c|c}
Original Program & Blocked One (Blocksize is $\mathcal{B}$) \\
\begin{lstlisting}[basicstyle=\small\ttfamily, columns=fullflexible, xleftmargin=0pt]
main(...) {
 
  v$_1$ = xor(A, B);
  v$_2$ = xor(C, D);
  v$_3$ = xor(v$_1$, E, F);
  v$_4$ = xor(v$_3$, G, A);
  v$_5$ = xor(v$_1$, v$_3$, v$_4$);
   
  return (v$_2$, v$_4$, v$_5$);
}
\end{lstlisting} &
\begin{lstlisting}[basicstyle=\small\ttfamily, columns=fullflexible]
main(...) {
  for $i$ in 0..(A.len() / $\mathcal{B}$) {
    v$^{[i]}_1$ = xor(A$^{[i]}$, B$^{[i]}$);
    v$^{[i]}_2$ = xor(C$^{[i]}$, D$^{[i]}$);
    v$^{[i]}_3$ = xor(v$^{[i]}_1$, E$^{[i]}$, F$^{[i]}$);
    v$^{[i]}_4$ = xor(v$^{[i]}_3$, G$^{[i]}$, A$^{[i]}$);
    v$^{[i]}_5$ = xor(v$^{[i]}_1$, v$^{[i]}_3$, v$^{[i]}_4$);
  }
  return (v$_2$, v$_4$, v$_5$);
}
\end{lstlisting}
\end{tabular}
\end{center}
where $X^{[i]}$ is the $i$-th $\mathcal{B}$-bytes block of an array $X$; i.e.,
$X^{[0]} = X[0..\mathcal{B}]$,
$X^{[1]} = X[\mathcal{B}..2\mathcal{B}]$, and so on.

\paragraph{\textbf{Measures of Cache Efficiency}}
We consider two measures of cache efficiency.
As the first measure, we consider the minimum cache capacity $\mathsf{CCap}$ to avoid \emph{cache reloading} while computing a given program.
It is called cache reloading to load a certain block that is once spilled from cache from memory to cache again.
If we can transform a given program $P$ to an equivalent one $Q$ with $\mathsf{CCap}(Q) < \mathsf{CCap}(P)$, then we can say $Q$ is more cache efficient.
As the second measure, we consider $\mathsf{IOcost}$ the total number of I/O transfers between cache and memory.
To formalize these measures, we augment SLPs with abstract cache memory.

\subsection{SLP Augmented with Abstract LRU Cache}
To formalize measures for cache efficiency,
hereafter we use SLPs to represent the inside of the loop introduced by the blocking technique by forgetting indices.
For example, the above program is rewritten as the following SLP $P_{eg}$:
\[
P_{eg}: \quad
\begin{array}{ll}
1)\,v_1 \gets A \xor B; & 2)\,v_2 \gets C \xor D; \\
3)\,v_3 \gets \lbxor(v_1, E, F);  & 4)\,v_4 \gets \lbxor(v_3, G, A); \\
5)\,v_5 \gets \lbxor(v_1, v_3, v_4); & 6)\,\return(v_2, v_4, v_5); \\
\end{array}
\]

We introduce notations to operate cache through SLPs.
\paragraph{\textbf{Computation with Cache}}
We simply consider a cache $\mathcal{C}$ as an ordered sequence of blocks, $\mathcal{C} = \beta_1, \beta_2, \ldots, \beta_n$, where each block $\beta_i$ is just a variable or constant.
The rightmost (resp. leftmost) block represents the most (resp. least) recently used element.

Let us consider to execute an XOR $v \gets \lbxor(t_1, t_2, \ldots, t_k)$.
We require $\set{t_1, \ldots, t_k, v} \subseteq \mathcal{C}$ and then
change $\mathcal{C}$ in the following steps:
\begin{enumerate}
\item For the arguments, in the order $i = 1, 2, \ldots, k$,
we load $t_i$ to $\mathcal{C}$ if $t_i \notin \mathcal{C}$ or update the position of $t_i$ if $t_i \in \mathcal{C}$;
\item We then allocate $v$ in $\mathcal{C}$ if $v \notin \mathcal{C}$ or update the position of $v$ if $v \in \mathcal{C}$.
\end{enumerate}
If the cache is full and there is no room for loading or allocating,
we evict the LRU (least recently used) element in the cache.
This eviction corresponds to spilling or writing-back a cached block to memory.
It corresponds to the typical cache replacement policy \emph{LRU replacement policy}~\cite{Hennessy:2017}.

\paragraph{Example}
Let us run our example SLP $P_{eg}$ with a 10-capacity cache.
For the first XOR, we load $A$ and then $B$ and finally allocate $v_1$.
These operations change $\mathcal{C}$ as follows:
\[
\mathit{empty}\ \smash{\costar{A}}\ A\ \smash{\costar{B}}\ A\ B\ \smash{\alloc{v_1}}\ A\ B\ v_1
\]
where we use $\smash{\costar{\bullet}}$ to denote a loading from memory and $\smash{\xmultimap{\bullet}}$ to an allocation in the cache.

For the second XOR, we load $C$ and $D$ and allocate $v_2$ as follows:
\[
A\ B\ v_1\ \smash{\costar{C}\,\costar{D}\,\alloc{v_2}}\  A\ B\ v_1\ C\ D\ v_2.
\]

For the third XOR, we update the position of $v_1$ in the cache, load $E$ and $F$, and allocate $v_3$:
\[
A\ B\ v_1\ C\ D\ v_2\ \smash{\xrightarrow{v_1}} \smash{\costar{E}} \smash{\costar{F}} \smash{\alloc{v_3}}\ 
A\ B\ C\ D\ v_2\ v_1\ E\ F\ v_3
\]
where we use $\smash{\xrightarrow{\mathbf{.}}}$ to denote a position update in the cache.

For the fourth XOR, fetching the arguments ($v_3$, $G$, and $A$) changes $\mathcal{C}$ as follows:
\[
A\ B\ C\ D\ v_2\ v_1\ E\ F\ v_3\ \xrightarrow{v_3} \costar{G} \xrightarrow{A}
B\ C\ D\ v_2\ v_1\ E\ F\ v_3\ G\ A.
\]
Since the fetch makes the cache $\mathcal{C}$ full, we evict the LRU element $B$ and then allocate $v_4$ as follows:
\[
B\ C\ D\ v_2\ v_1\ E\ F\ v_3\ G\ A\ \smash{\costar[B]{}} \smash{\alloc{v_4}}\ 
C\ D\ v_2\ v_1\ E\ F\ v_3\ G\ A\ v_4
\]
where we write $\smash{\costar[\bullet]{}}$ for evictions from the cache to memory.

Finally, we change $\mathcal{C}$ as follows in the fifth XOR:
\[
C\ D\ v_2\ v_1\ E\ F\ v_3\ G\ A\ v_4\ \smash{\xrightarrow{v_1}}
\smash{\xrightarrow{v_3}}\smash{\xrightarrow{v_4}}\smash{\costar[C]{}}\smash{\alloc{v_5}}\ 
D\ v_2\ E\ F\ G\ A\ v_1\ v_3\ v_4\ v_5.
\]

Now, we introduce two notations for the cache efficiency of SLPs.
\namedframe{$\mathsf{\mathbf{CCap}}(P: \text{SLP})$}{
$\mathsf{CCap}(P)$ denotes the minimum cache capacity where we can run $P$ without cache reloading.
}

We can confirm $\mathsf{CCap}(P_{eg}) = 10$.
Indeed, if we use the cache of capacity 9,
we need to replace $A$ for $G$ in the fourth XOR $v_4 \gets \lbxor(v_3, G, A)$, and this replacement leads to reloading $A$ as follows:
\[
A\ B\ C\ D\ v_2\ v_1\ E\ F\ v_3\ \smash{\costar[A]{G}} \smash{\costar[B]{A}}\ 
C\ D\ v_2\ v_1\ E\ F\ v_3\ G\ A
\]
where we write $\costar[x]{y}$ for the replacement that evicts $x$ and loads $y$.

We also consider the number of I/O transfers required by SLPs.
\namedframe{$\mathsf{\mathbf{IOcost}}(P : \text{SLP}, c : \text{cache capacity})$}{
$\mathsf{IOcost}(P, c)$ denotes the number of I/O transfers issued while running $P$ with a cache of $c$-capacity.
There are two kinds of I/O transfers;
transfers from cache to memory, $\smash{\costar{\bullet}}$, and
transfers from memory to cache, $\smash{\costar[\bullet]{}}$.
}
It is clear that $\mathsf{IOcost}(P_{eg}, 10) = 7 (\text{of $\costar{\bullet}$}) + 2 (\text{of $\costar[\bullet]{}$}) = 9$.
This measure is useful when cache capacity is determined by hardware.
For example, it is one of the standard parameter on recent CPUs that cache size is 32KB and cache block size is 64B.
The cache of such CPUs can hold 512 blocks maximally, and thus we optimize $\mathsf{IOcost}(P, 512)$.

Hereafter, as an example, we consider that the cache can holds eight blocks maximally.
We can easily check $\mathsf{IOcost}(P_{eg}, 8) = 13$ running $P_{eg}$ with the cache of capacity 8.

\subsection{Optimizing SLP via Register Allocation}
To reduce $\cachecap$ and $\iocost$, we try register allocation by identifying cache (resp. memory) of our setting as registers (resp. memory) of the usual register allocation setting.
Register allocation basically consists from three phases~\cite{Chaitin:1981, Chaitin:1982, Briggs:1994, George:1996, Appel:2001}:
(1) the register assignment phase where we rename variables of a given program so that it has smaller variables;
(2) the register spilling phase where we insert instructions to move the contents of registers to/from memory if variables are many than actual registers;
(3) the register coalescing phase where we merge variables that has the same meaning in the syntactic or semantic way.

Using the standard graph-coloring register assignment algorithm, we can obtain the following SLP from $P_{eg}$:
\[
P_{\text{reg}}:
\begin{array}{ll}
1)\ v_1 \gets A \xor B;  & 2)\ v_2 \gets C \xor D; \\
3)\ v_3 \gets \lbxor(v_1, E, F); & 4)\ v_4 \gets \lbxor(v_3, G, A); \\
5')\ v_1 \gets \lbxor(v_1, v_3, v_4); & 6)\ \return(v_2, v_4, v_1);
\end{array}
\]
Unlike $5)$ of $P_{eg}$, 
in $5')$, we store the result $\lbxor(v_1, v_3, v_4)$ to $v_1$ instead of $v_5$ of $P_{eg}$
since $v_1$ is no more needed after $5')$.

Although register assignment reduces variables, $\mathsf{NVar}(P_{\text{reg}}) = 4$, and I/O transfers, $\iocost(P_{\text{reg}}, 8) = 12$,
it does not reduce $\cachecap$ since $\cachecap(P_{eg}) = \cachecap(P_{\text{reg}}) = 10$.
We note that register spilling is useless since the LRU replacement disallows to select cached elements to be evicted.
Register coalescing also does not make any sense at least in the above example.
These tell that the cache optimization for SLPs by register allocation is quite limited.

Below we employ another approach, where we rearrange statements and arguments in SLPs.
It should be noted that program rearrangement or (re)scheduling is beyond register allocation.

Although register allocation on SLPs is not so powerful as above, it enjoys the following properties:
\begin{itemize}
\item The register assignment problem of SLPs can be solved in polynomial time.
This comes from the relatively new result that the register assignment of programs in the SSA (static single assignment) form is tractable~\cite{Pereira:2005, Hack:2006, Bouchez:2007}.
A program is in the SSA form if each variable is assigned exactly once~\cite{Alpern:1988, Rosen:1988, Cytron:1991}.
Since there is no branching in SLPs, we can easily convert SLPs to SSA SLPs; thus, the register assignment problem for SLPs is tractable.
Of course, the problem for general programs is intractable~\cite{Chaitin:1981}.
\item The register coalescing problem for SSA SLPs is also tractable; indeed, the variable coalescing operation does not increase the required number of registers.
The problem for general programs is intractable~\cite{Bouchez:2007c, Grund:2007}.
\item Register spilling is useful in the case where we can select elements to be evicted from the abstract cache.
Then, on SSA SLPs, if each variable is used at most once, the minimum cost register spilling problem for SSA SLPs can be solved in polynomial time~\cite{Bouchez:2007b}. Without the constraint, the problem becomes intractable~\cite{Farach:2000}.
\end{itemize}

\subsection{Optimizing SLP via Pebble Game}
We employ the classical tool of program analysis \emph{pebble game} to make a given SLP cache friendly.
On this setting, we do not only rename variables as well as register assignment does but also reorder the entire program.
We first introduce computation graphs, which are arenas of the pebble game, and then review the pebble game.

\paragraph{\textbf{Computation Graph}}
We use directed acyclic graphs (DAGs) to represent the value dependencies of SLPs:
\begin{center}
\begin{tikzpicture}
\tikzstyle{goal}=[accepting, draw=black, shape=circle,  minimum size=3pt, inner sep=1pt]
\tikzstyle{new style 0}=[fill=white, draw=black, shape=circle, minimum size=10pt, inner sep=1pt]
\tikzstyle{new edge style 0}=[-Stealth]
\node at (-7, 3) {$\mathcal{G}_{eg}:$};	
	\begin{pgfonlayer}{nodelayer}
		\node [style=new style 0] (1) at (-5, 3.3) {$v_1$};
		\node [style=new style 0] (3) at (-6, 2.5) {$A$};
		\node [style=new style 0] (5) at (-5, 2.7) {$B$};
		\node [accepting, style=new style 0] (7) at (-6, 4.25) {$v_2$};
		\node [style=new style 0] (8) at (-6.25, 3.5) {$C$};
		\node [style=new style 0] (9) at (-5.75, 3.5) {$D$};
		\node [accepting, style=new style 0] (10) at (-4.25, 3.7) {$v_3$};
		\node [style=new style 0] (13) at (-4.25, 2.8) {$E$};
		\node [style=new style 0] (14) at (-3.75, 2.8) {$F$};
		\node [style=new style 0] (18) at (-2.75, 3.8) {$v_4$};
		\node [style=new style 0] (20) at (-3.10, 2.8) {$G$};
		\node [accepting, style=new style 0] (19) at (-3.5, 4.25) {$v_5$};
	\end{pgfonlayer}
	\begin{pgfonlayer}{edgelayer}
		\draw [style=new edge style 0] (3) to (1);
		\draw [style=new edge style 0] (5) to (1);
		\draw [style=new edge style 0] (8) to (7);
		\draw [style=new edge style 0] (9) to (7);
		\draw [style=new edge style 0] (13) to (10);
		\draw [style=new edge style 0] (14) to (10);
		\draw [style=new edge style 0] (1) to (10);
		\draw [style=new edge style 0] (10) to (18);
		\draw [style=new edge style 0] (3) -| (18);
		\draw [style=new edge style 0] (18) to (19);
		\draw [style=new edge style 0] (1) |- (19);
		\draw [style=new edge style 0] (10) to (19);
		\draw [style=new edge style 0] (20) to (18);
	\end{pgfonlayer}
\end{tikzpicture}
\end{center}
This DAG corresponds to our example $P_{eg}$ in the following sense.
Each leaf node (node with no children) represents the constant of the same name.
Each inner node (node with children) represents the value obtained by XORing all children;
thus, the inner node $v_1$ means $A \xor B$, $v_3$ means $A \xor B \xor E \xor F$, and so on.
\emph{Computation graphs (CGs)} are DAGs with double-circled nodes, \emph{goal nodes}, which mean values returned by programs.
It should be noted that CGs differ from \emph{interference graphs}, which are used in register allocation~\cite{Chaitin:1981, Chaitin:1982, Briggs:1994, George:1996, Appel:2001}.

\paragraph{\textbf{Pebble Game}}
Let $\mathcal{G}$ be a CG. As the initialization step, for each leaf node $\ell$, we put the same name pebble on $\ell$. To represent this, we write $\mathcal{G}(\ell) = \ell$.
We win a game if every goal node has a pebble.
To achieve this, we pebble inner nodes using the following rules:
\begin{itemize}
\item At each turn, the player proposes an instruction of the form
\[
n: p \gets \lbxor(p_1, p_2, \ldots, p_k)
\]
where $n$ is a node of $\mathcal{G}$, $p$ and $p_i$ are pebbles,
and $\set{ n_i : \mathcal{G}(n_i) = p_i }$ equals to $n$'s children.
\begin{itemize}
\item If $p$ is a new pebble, we put $p$ on $n$ so that $\mathcal{G}(n) = p$.
\item Otherwise, $p$ is in a node $m$, we \emph{move} $p$ from $m$ to $n$.
\end{itemize}
\item To avoid computing a single node multiple times, we are disallowed to put or move a pebble to a node once pebbled.
\end{itemize}

Our pebble game on CGs is equivalent to the standard pebble game of Sethi~\cite{Sethi:1973, Sethi:1975}.
Especially, it is equivalent to  the red-blue pebble game of Hong and Kung~\cite{Hong:81} that our game with the measure $\mathsf{IOcost}$ and the abstract cache $\mathcal{C}$ where we can select elements to be evicted from $\mathcal{C}$ instead of the LRU rule.

\paragraph{Example}
Let us consider the following winning strategy (with a return statement) of the above CG $\mathcal{G}_{eg}$:
\[
Q:
\begin{array}{ll}
1)\ v_1: p_1 \gets B \xor A; & 2)\ v_3: p_2 \gets \lbxor(E, F, p_1); \\
3)\ v_4: p_3 \gets \lbxor(A, G, p_2); & 4)\ v_5: p_1 \gets \lbxor(p_1, p_2, p_3); \\
5)\ v_2: p_3 \gets C \xor D; & 6)\ \return(p_3, p_2, p_1); \\
\end{array}
\]
This is better than $P_{\text{reg}}$ at all the parameters since
$\mathsf{NVar}(Q) = 3$, $\mathsf{CCap}(Q) = 5$, and $\mathsf{IOcost}(Q, 8) = 9$.
For example, we can easily confirmed $\mathsf{CCap}(Q) = 5$ as follows:
\[
\begin{array}{l}
\mathit{empty}\ \costar{B} \costar{A} \costar{p_1}\ B\ A\ p_1\ 
\costar{E} \costar{F} \xrightarrow{p_1} \xmultimapp[B]{p_2} A\ E\ F\ p_1\ p_2\ 
\xrightarrow{A} \costar[E]{G} \xrightarrow{p_2} \xmultimapp[F]{p_3} \\[6pt]
p_1\ A\ G\ p_2\ p_3\ \xrightarrow{p_1} \xrightarrow{p_2} \xrightarrow{p_3} \xrightarrow{p_1}\ 
A\ G\ p_2\ p_3\ p_1\ \costar[A]{C} \costar[G]{D} \xrightarrow{p_3}\ 
p_2\ p_1\ C\ D\ p_3
\end{array}
\]
where $\smash{\xmultimapp[x]{y}}$ means the replacement that evicts $x$ and allocates $y$.

The pebble game immediately implies the following property.

\begin{proposition}
Let $\mathcal{W}$ be a winning strategy of the CG of an SLP $P$.
Forgetting the node information from $\mathcal{W}$
and adding the adequate return statement,
we can obtain an SLP $Q_\mathcal{W}$ such that $\result{P} = \result{Q_\mathcal{W}}$.
\end{proposition}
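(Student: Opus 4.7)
The plan is to establish the claim by constructing $Q_\mathcal{W}$ explicitly from $\mathcal{W}$ and then verifying a semantic invariant by induction on the number of moves played. The construction is mechanical: walk through $\mathcal{W}$ in order; each move of the form $n:\ p \gets \lbxor(p_1, \ldots, p_k)$ becomes the instruction $p \gets \lbxor(p_1, \ldots, p_k)$ of $Q_\mathcal{W}$, dropping the node label $n$; after transcribing all moves, append $\return(p_{g_1}, \ldots, p_{g_m})$, where $p_{g_i}$ is the pebble sitting on the $i$-th goal node, listed in the order of $P$'s return sequence $\vec{g}$. Since $\mathcal{W}$ is winning, every goal node is pebbled at termination, so this return statement is well defined.

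The key step is to prove the following invariant by induction on the length of the prefix of $\mathcal{W}$ executed: \emph{whenever $\mathcal{G}(n) = p$, the value currently stored in variable $p$ by $Q_\mathcal{W}$ equals the value of $n$ in the computation graph}. Here the value of a node is read off under the set-based semantics of Section~\ref{sec:SLPdef}: leaves evaluate to their singleton constants, and each inner node equals the symmetric difference of its children's values. Since the CG is constructed directly from the instructions of $P$, node-values in the CG agree with the set-valued results of $P$ at each corresponding variable, so the invariant at the final configuration will immediately give $\result{Q_\mathcal{W}} = \result{P}$.

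For the base case, the initialization places pebble $\ell$ on each leaf $\ell$, and in $Q_\mathcal{W}$ the constant $\ell$ has value $\set{\ell}$, matching the CG-value at $\ell$. For the inductive step, the game rules allow a move $n:\ p \gets \lbxor(p_1, \ldots, p_k)$ only when $\set{\,n_i : \mathcal{G}(n_i) = p_i\,}$ equals the set of children of $n$; by the induction hypothesis, each $\val{p_i}$ equals the CG-value of child $n_i$, so after executing the transcribed instruction we have $\val{p} = \val{p_1} \xor \cdots \xor \val{p_k} = \val{n_1} \xor \cdots \xor \val{n_k} = \val{n}$, giving the invariant at the newly pebbled $n$. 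For any other node $m \neq n$ still bearing a pebble $q$, the value of $q$ is not touched by the new instruction: if the incoming pebble $p$ is fresh, no existing variable changes; if $p$ was moved from some node $m'$, then $m'$ loses its pebble and simply drops out of the invariant, while all other pebbles are untouched.

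The only real subtlety is exactly this last point: making explicit that the pebble game's \emph{move} rule (as opposed to a fresh \emph{put}) can only retire the invariant obligation at the vacated node, never falsify it at a still-pebbled one. I do not foresee a genuine obstacle here; the argument is essentially an unwinding of the definitions, and its value lies mainly in making the correspondence between pebble-game play and SLP execution fully precise, thereby licensing us in subsequent sections to search over pebble-game strategies and read off optimized SLPs from them.
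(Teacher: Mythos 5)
Your proof is correct and matches the paper's intent: the paper states this proposition without proof, remarking only that ``the pebble game immediately implies'' it, and your transcription-plus-induction argument is exactly the definitional unwinding being alluded to. The one point worth keeping explicit is the observation you already make at the end --- that a \emph{move} can only vacate a node's invariant obligation, never corrupt a still-pebbled one --- together with the fact that a winning strategy must leave pebbles on all goal nodes at termination (since re-pebbling is forbidden), which is what makes the return statement well defined.
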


\emph{\textbf{Notation}:}
Let $P$ and $Q$ be $\multislp$.
If $Q$ is obtained from a winning strategy of the CG of $P$ in the above manner,
we write $P \vdash Q$.

\subsection{Intractability of Optimization Problems}
Introducing the pebble game is not only useful for cache optimization
but also useful to correctly refer established results of compiler construction and program analysis.

\begin{theorem}\label{thm:intractability of problems of pebble game}
Let $P$ be an $\multislp$. All the following optimization problems cannot be solved in polynomial-time unless \textbf{P=NP}:
\begin{enumerate}
\item Finding $Q \in \multislp$ that satisfies $P \vdash Q$ and minimizes $\mathsf{NVar}(Q)$.
\item Finding $Q \in \multislp$ that satisfies $P \vdash Q$ and minimizes $\mathsf{CCap}(Q)$.
\item For a given cache capacity $c$,
finding $Q \in \multislp$ that satisfies $P \vdash Q$ and minimizes $\mathsf{IOcost}(Q, c)$.
\end{enumerate}
\end{theorem}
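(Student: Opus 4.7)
The plan is to lift established hardness results for the pebble game on directed acyclic graphs to the three stated $\multislp$ problems by exploiting the $\vdash$ correspondence between winning strategies and $\multislp$s. The key gadget is, for an arbitrary computation graph $\mathcal{G}$, a canonical $\multislp$ $P_\mathcal{G}$: list the inner nodes of $\mathcal{G}$ in some topological order and, for each such $n$, emit $v_n \gets \lbxor(v_{n_1}, \ldots, v_{n_k})$, where $n_1, \ldots, n_k$ are the children of $n$; then return the variables attached to goal nodes. The CG of $P_\mathcal{G}$ is isomorphic to $\mathcal{G}$, and by the proposition just above each $Q$ with $P_\mathcal{G} \vdash Q$ arises from, and corresponds to, a winning strategy on $\mathcal{G}$.

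Under this correspondence the three measures translate to classical pebble-game quantities: $\mathsf{NVar}(Q)$ equals the number of distinct pebbles used by the strategy, $\mathsf{CCap}(Q)$ equals the peak number of simultaneously live pebbles under LRU execution, and $\mathsf{IOcost}(Q, c)$ equals the total number of loads and evictions in a red--blue pebble schedule with $c$ red pebbles. Hence minimizing each measure over $\set{Q : P_\mathcal{G} \vdash Q}$ is exactly the classical optimization form of the pebble game on $\mathcal{G}$. For (1) this is Sethi's minimum-pebble-number problem~\cite{Sethi:1975}, already NP-hard; for (2) the hardness can be transferred from (1) because on instances meeting the minimum pebble number LRU never has to evict, so optimal $\mathsf{CCap}$ and the minimum pebble number agree up to a constant additive overhead for XOR operands; for (3) the minimum I/O problem for the red--blue pebble game is NP-hard, as exploited by Demaine et al.~\cite{Demaine:2018} and Papp--Wattenhofer~\cite{Papp:2020}.

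The main obstacle is the gap between the LRU eviction rule baked into our definitions of $\mathsf{CCap}$ and $\mathsf{IOcost}$ and the adversarial eviction policy assumed by the cited pebble-game hardness proofs: an adversarial schedule may evict an entry that LRU would not. I plan to bridge this gap with a padding gadget that, given an adversarial schedule $S$ on $\mathcal{G}$, inserts dummy XOR reads before each of $S$'s evictions so that under LRU the intended slot becomes least-recently-used at the moment of eviction; dually, any LRU schedule of the padded $\multislp$ projects back to a legal adversarial schedule. Verifying that the gadget is polynomial-size and preserves the three cost measures up to an additive term depending only on $\mathcal{G}$ is the delicate technical step; once established, instantiating the gadget with each of the three cited pebble-game hardness results completes the three reductions and yields the theorem.
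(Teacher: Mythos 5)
Your overall skeleton coincides with the paper's: Problem (1) is handled by Sethi's NP-hardness of the standard pebble game, Problem (2) is transferred from (1), and Problem (3) is handled by the known NP-hardness of the red--blue pebble game due to Demaine~et~al.\ and Papp--Wattenhofer. The canonical $\multislp$ $P_\mathcal{G}$ you build from a computation graph, and the observation that the $\vdash$ relation puts winning strategies in bijection with candidate programs $Q$, are exactly the (implicit) backbone of the paper's argument, and the three cited sources are the same ones the paper invokes.

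The genuine gap is in how you bridge LRU eviction and adversarial eviction. The paper asserts that the \emph{specific instances} produced by Sethi's and by Papp--Wattenhofer's reductions already have the property that restricting to LRU does not change the optimal value, so no transformation of the instance is needed. Your padding gadget, by contrast, is applied ``given an adversarial schedule $S$'': it is schedule-dependent, but a polynomial-time reduction must emit a single instance before any optimal schedule is known --- you cannot pad relative to the very schedule whose existence is the NP-hard question. Worse, ``dummy XOR reads'' are not expressible inside the $\vdash$ framework: a strategy may only propose instructions $n : p \gets \lbxor(p_1,\ldots,p_k)$ where $n$ is an as-yet-unpebbled node and the $p_i$ sit exactly on $n$'s children, so extra reads force you to add dummy nodes to $\mathcal{G}$ itself, which changes $\#_M$, $\mathsf{IOcost}$, and the set of admissible $Q$, and reopens the correctness of the reduction. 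Your claim for (2) that $\mathsf{CCap}$ and the minimum pebble number ``agree up to a constant additive overhead for XOR operands'' is also imprecise --- the cache must simultaneously hold leaves/constants, whose contribution is instance-dependent (compare $\mathsf{NVar}(Q)=3$ versus $\mathsf{CCap}(Q)=5$ in the paper's example), though this can be repaired by computing the offset from the instance. To close the argument you should either verify, as the paper claims, that the Sethi and Papp--Wattenhofer constructions are already LRU-robust on their reduction instances, or redesign the padding as a schedule-independent modification of the computation graph and re-prove both directions of the reduction for the modified instance.
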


In order to show the intractability of Problems (1) and (2), we can use the NP-completeness of the standard pebble game shown by Sethi~\cite{Sethi:1973, Sethi:1975}.
Sethi reduced the classical NP-complete problem 3SAT~\cite{Cook:1971} to the decision problem of the standard pebble game.
Problem (1) is the optimizing version of the decision problem of the standard pebble game; hence, it is intractable.
Although Problem (2) seems a problem involved in cache, it is a problem of the standard pebble game rather than the pebble game with cache.
Indeed, if we could select pebbles to be evicted instead of the LRU rule,
Problems (1) and (2) are essentially equivalent. Even if we follow the LRU rule, the construction given by Sethi in~\cite{Sethi:1975} also works well; therefore, Problem (2) is intractable.

On the other hand, Problem (3) should be analyzed using the pebble game augmented with cache; namely, we use the red-blue pebble game of Hong and Kung~\cite{Hong:81}.
We can choose nodes to be evicted from the cache on the ordinal formalization of the red-blue pebble game. The intractability of Problem (3) on the red-blue pebble game was already shown in~\cite{ Demaine:2018, Papp:2020}.
In~\cite{Papp:2020}, Papp and Wattenhofer used the classic NP-complete problem Hamiltonian path problem~\cite{Karp:1972, Garey:1990} and succeeded in providing a simple NP-completeness proof.
Fortunately, we can directly apply the construction of Papp and Wattenhofer to Problem (3) in our setting with the LRU eviction rule. 

It is worth noting that Problem (1) and its variant can be efficiently solved when playing the pebble game on trees rather than DAGs~\cite{Ershov:1958, Nakata:1967, Sethi:1970, Schneider:1971, Lengauer:1980, Liu:1986}.

\subsection{Two Scheduling Heuristics}\label{sec:pebble heuristics}
We consider two simple heuristics for solving the pebble game
since our interested problems are intractable.
More technically, it is known that those problems are hard to approximate~\cite{Demaine:2017, Papp:2020}.

\paragraph{DFS-based algorithm.}
Our first heuristic visits the nodes of a given CG in the postorder traversal.

Let us see how our heuristic works for our CG $\mathcal{G}_{eg}$.
We need to decide which root node is visited first;
here, we choose $v_2$ on the basis of the total ordering $\prec$ defined in~\secref{subsec:compressed by repair} since $v_2 \prec v_5$.
We then visit the children $C$ and $D$ in this order since $C \prec D$.
Using $\prec$ as the tie-breaker, we make the following postorder traversing:
\[
C \to D \to v_2 \to A \to B \to v_1 \to E \to F \to v_3 \to G \to v_4 \to v_5.
\]

On the basis of this order, we generate a winning strategy as follows:
\[
Q_{\text{DFS}}:
\begin{array}{ll}
1)\ v_2: p_1 \gets C \xor D; & 2)\ v_1: p_2 \gets A \xor B; \\
3)\ v_3: p_3 \gets \lbxor(p_2, E, F); & 4)\ v_4: p_4 \gets \lbxor(p_3, A, G); \\
5)\ v_5: p_4 \gets \lbxor(p_2, p_3, p_4); & 6)\ \return(p_1, p_3, p_4);
\end{array}
\]
Our pebble assigning policy is simple.
If we have a pebble on $G$ that can move, we reuse it; otherwise, we put a fresh pebble.
It can be verified that $\mathsf{NVar}(Q_{\text{DFS}}) = 4$, $\cachecap(Q_{\text{DFS}}) = 7$,
and $\iocost(Q_{\text{DFS}}, 8) = 10$.

\paragraph{Bottom-up greedy algorithm.}
Our next heuristic is a greedy one.
Unlike the above DFS-based algorithm, this heuristic requires a parameter $c$ corresponding to cache capacity.
\begin{enumerate}
\item[(i)] Choose a computable node $n$, whose children have pebbles, that maximises the ratio $\frac{|H|}{|C|}$ where
$C$ are the children of $n$ and $H \subseteq C$ are the children whose pebble in the cache.
\item[(ii)] Access $H$ and then access $C$.
\item[(iii)] If there is a movable cached pebble, we move it to $n$.
Otherwise, we use a movable pebble or allocate a fresh pebble.
\end{enumerate}
Here we again use $\prec$ as the tie-breaker.

We revisit the CG $\mathcal{G}_{eg}$ as follows.
On the initial state, $v_1$ and $v_2$ are ready
with $\frac{|\emptyset|}{|\set{A, B}|} = \frac{0}{2}$ for $v_1$ and
$\frac{|\emptyset|}{|\set{C, D}|} = \frac{0}{3}$ for $v_2$.
We choose $v_1$ since $v_1 \prec v_2$, and the generated statement changes $\mathcal{C}$ as the following right:
\[
v_1: p_1 \gets A \xor B; \qquad \mathit{empty} \costar{A} \costar{B} \xmultimapp{p_1} A\ B\ p_1.
\]
Next, we choose $v_3$ since $v_3: \frac{|\set{v_1}|}{|\set{v_1, E, F}|} = \frac{1}{3}$ and 
$v_2:  \frac{|\emptyset|}{|\set{C, D}|} = \frac{0}{2}$, and compute $v_3$ with a fresh pebble $p_2$.
Repeating this procedure, we obtain the following sequences and an SLP $Q_{\text{greedy}}$:
\[
\begin{array}{ll}
v_3: p_2 \gets \lbxor(p_1, E, F); & \cdots \labelarrow{p_1} \costar{E} \costar{F} \xmultimap{p_2} A B p_1 E F p_2, \\
v_4: p_3 \gets \lbxor(p_2, A, G); & \cdots \xrightarrow{p_2} \xrightarrow{A} \costar{G} \alloc{p_3} B p_1 E F p_2 A G p_3, \\
v_5: p_1 \gets \lbxor(p_1, p_2, p_3); & \cdots \xrightarrow{p_1} \xrightarrow{p_2} \xrightarrow{p_3} \xrightarrow{p_1} B E F A G p_2 p_3 p_1, \\
v_2: p_3 \gets C \xor D; & \cdots \costar[B]{C} \costar[E]{D} \xrightarrow{p_3} E F A G p_2 p_1 C D p_3.
\end{array}
\]
It can be verified that $\mathsf{NVar}(Q_{\text{greedy}}) = 3$, $\cachecap(Q_{\text{greedy}}) = 7$,
and $\iocost(Q_{\text{greedy}}, 8) = 9$.
The scores of $\mathsf{NVar}$ and $\iocost$ are optimal.

\setlength{\belowcaptionskip}{-10pt}

\section{Evaluation and Discussion}\label{sec:evaluation}
We evaluate our optimizing methods.
In~\secref{evaluation:dataset}, we explain our dataset.
In~\secref{evaluation:unotpmize on various block sizes}, we see throughputs of an unoptimized SLP on different block sizes.
In~\secref{evaluation:average}, we show the average performance of \textsc{(XOR)Repair} of~\secref{sec:SLP}, the XOR fusion of~\secref{sec:xorfusion}, and scheduling heuristics of~\secref{sec:pebble heuristics}.
In~\secref{experiment:effect of blocksize}, we tell how the block size of the blocking technique affects coding performance.
In~\secref{evaluation:analyze throughput}, we show coding throughputs of programs fully optimized by our methods.
In~\secref{evaluation:throughput comparison}, we compare our throughputs with Intel's ISA-L~\cite{ISAL} and the state-of-the-art study~\cite{Zhou:2020}.

All experiments are conducted on the following environments:
\begin{center}
\begin{tabular}{c|ccccccc}
name & CPU & Clock & Core &  RAM \\\hline
\textbf{intel} & i7-7567U & 4.0GHz & 2 & DDR3-2133 16GB \\
\textbf{amd} & Ryzen 2600 & 3.9GHz & 6 & DDR4-2666 48GB \\
\end{tabular}
\end{center}
The cache specification of these CPUs are the same; the L1 cache size is 32KB/core, the L1 cache associativity is 8, and the cache line size is 64 bytes. 
Our EC library and codes to reproduce the results in this section can be found in~\cite{our_repository}.
Our library is written by Rust and compiled by rustc-1.50.0. 

\textbf{Important Remark:} We select the above environments for the following reason.
Since it has not been opened that the source codes implemented and used in the study of Zhou and Tian~\cite{Zhou:2020},
we cannot directly compare our methods and theirs by running programs.
Thus, we borrow values from~\cite{Zhou:2020} and compare them with our results measured on the above environments, which close to theirs
Intel i7-4790(4.0 GHz, 4 cores, 32KB cache, 64-bytes cache block, 8-way assoc.) and
AMD Ryzen 1700X(3.8 GHz, 8 cores, 32KB cache, 64-bytes cache block, 8-way assoc.).

\subsection{Dataset}\label{evaluation:dataset}
As an evaluation dataset, we use matrices of the codec $\textbf{RS}(10, 4)$, which is used in Hadoop HDFS~\cite{hdfs-ec} as stated in~\secref{sec:Introduction}.
We have 1002 coding matrices---one encoding matrix and ${14 \choose 4} = 1001$ decoding matrices obtained by removing 4 rows from the encoding matrix.
We need the finite field $\gf{2^8}$ to make a Vandermonde matrix for coding, as we have seen in~\secref{sec:Introduction}.
We implemented it in our experimental library on the basis of the standard construction.

Technically speaking (to readers who are familiar with coding theory),
we implement $\gf{2^8}$ using the primitive polynomial $x^8+x^4+x^3+x^2+1$ used in ISA-L~\cite{ISAL}.
For $\textbf{RS}(10, 4)$, to use the same encoding matrix of ISA-L,
we adopt the reduced form (aka, standard form~\cite{MacWilliams:1977,Ling:2004,Lacan:2004}) $\mathcal{V}$ of a $(14 \times 10)$ Vandermonde matrix given by the standard construction as follows:
\[
\arraycolsep=2pt
\left(
\begin{array}{lclc}
1 & \alpha & \cdots & \alpha^{9} \\
1 & \alpha^2 & \cdots & (\alpha^2)^{9} \\
\vdots & \vdots & & \vdots \\
1 & \alpha^{14} & \cdots & (\alpha^{14})^{9}
\end{array}
\right)
=
\left(
\begin{array}{c}
V_{10 \times 10} \\\hline
M_{4 \times 10}
\end{array}
\right)
\overset{\text{reduce}}{\Mapsto} \mathcal{V} = 
\left(
\begin{array}{c}
\text{Ident}_{10 \times 10} \\\hline
M_{4 \times 10} V^{-1}_{10 \times 10}
\end{array}
\right)
\]
where $\alpha$ is a primitive element of our $\gf{2^8}$~\cite{Reed:1960}, and the reduced version $\mathcal{V}$ is the actual encoding matrix of ISA-L.

We write $P_{\text{enc}}$ for the SLP that corresponds to the bitmatrix form $\tilde{\mathcal{V}}$ of
our encoding matrix $\mathcal{V}$, as seen in~\secref{sec:Introduction}.
We write $\mathcal{P}_{\text{RS}}$ for the sets of all the SLPs corresponding to the coding matrices.

\subsection{Performance of Unoptimized \texorpdfstring{$P_{\text{enc}}$}{Penc} on Various Block Sizes}
\label{evaluation:unotpmize on various block sizes}
As we have seen in~\secref{sec:SLPdef}, the execution of SLP is executing array XORs.
Since we apply the blocking technique of~\secref{what is blocking technique} to exploit cache,
we prepare two procedures for XORing blocked arrays of size $\mathcal{B}$.
The first one \verb|xor1| performs the byte XORing element-wise.
The second one \verb|xor32| performs the 32byte XORing---\verb|mm256_xor|, SIMD AVX2 instruction---element-wise.
Such SIMD instructions are used in ISA-L and the previous study~\cite{Zhou:2020}.
Furthermore, AVX2 is the standard instruction set for recent CPUs.

\begin{tabular}{@{\hspace{-10pt}}ll}
\begin{lstlisting}[basicstyle=\small\ttfamily, columns=fullflexible]
$\textsf{xor1}$(a$_1$, a$_2$) {
 var out = Array::new($\mathcal{B}$);
 for(i=0; i<$\mathcal{B}$; i+=1):
   out[i] = a$_1$[i] xor a$_2$[i];
 return out;
}
\end{lstlisting}&
\begin{lstlisting}[basicstyle=\small\ttfamily, columns=fullflexible]
$\textsf{xor32}$(a$_1$, a$_2$) {
 var out = Array::new($\mathcal{B}$);
 for(i=0; i<$\mathcal{B}$; i+=32):
   (out[i] as m256)
      = mm256_xor(&a$_1$[i], &a$_2$[i]);
 return out;
}
\end{lstlisting}
\end{tabular}\\
where \texttt{m256} is the type of 32 bytes in AVX2.
In the same way, we implement $n (>2)$ arity versions for running fused SLPs.

To measure the performance of our unoptimized SLP $P_{\text{enc}}$, we execute it for randomly generated arrays of 10MB.
The following table is the average throughput (GB/sec) of 1000-times executions.

\begin{center}
\begin{tabular}{@{\hspace{0pt}}c|c@{\hspace{8pt}}|c@{\hspace{8pt}}c@{\hspace{8pt}}c@{\hspace{8pt}}c@{\hspace{8pt}}c@{\hspace{8pt}}c@{\hspace{8pt}}c}
\begin{tabular}{c}
{\small Throughput}\\
{\small (GB/sec)}
\end{tabular} & \textsf{xor1} & \multicolumn{7}{|c}{\textsf{xor32}} \\\hline
{\small Blocksize} & & 64 & 128 & 256 & 512 & 1K & 2K & 4K \\\hline\hline
\textbf{intel} & 0.16 & 0.62 & 1.12 & 2.05 & 3.02 & 4.03 & 4.78 & 4.72  \\\hline
\textbf{amd} & 0.17 & 0.67 & 1.17 & 1.72 & 2.16 & 2.78 & 3.17 & 3.29  \\\hline
\end{tabular}
\end{center}
The power of SIMD is remarkable, as already reported in~\cite{Plank:2013, Zhou:2020}.
This result also suggests that the bottleneck is shifted from the CPU to memory I/O by the SIMD instruction.

Despite our argument about cache efficiency in~\secref{sec:pebble game},
the performances of small blocks—64, 128, 256, 512— are worse than those of large blocks, 1K, 2K, and 4K.
It is well-known in the context of cache optimization that a too-small block is not good in real computing~\cite{Lam:1991, Coleman:1995, Rivera:1999, Yotov:2003}.
We will evaluate and discuss how the change of block sizes affects coding performance below in~\secref{experiment:effect of blocksize}.

\subsection{Average Reduction Ratios of Our Methods}\label{evaluation:compression}\label{evaluation:average}
\paragraph{Reducing Operators}
We evaluate our SLP compression heuristics, \textsc{RePair} and \textsc{XorRePair}.
The following table displays the average performance of \textsc{(Xor)RePair} for the 1002 SLPs of \textbf{RS}(10, 4):
\[
\begin{array}{c|cc:c}
\text{Avg}\% & \frac{\textsc{Repair}(P)}{P} & \frac{\textsc{XorRepair}(P)}{P} &
\begin{tabular}{c}
\text{\small Corresp. Value}\\
\text{\small from~\cite{Zhou:2020}}
\end{tabular}
\\\hline
\text{\footnotesize XOR Num.}~~\#_\xor(\cdot) & 42.1\% & 40.8\% & \appr65.0\%
\end{array}
\]
where the first and second ratios are the average ratios of reducing XORs by our heuristics defined as follows;
\[
\textit{Avg}\left(\left\{ \frac{\#_\xor \mathcal{C}(P)}{\#_\xor P} : P \in \mathcal{P}_{\text{RS}} \right\}\right)
=
\begin{cases}
42.1\% & \text{ if $\mathcal{C} = \textsc{Repair}$}, \\
40.8\% & \text{ if $\mathcal{C} = \textsc{XorRepair}$}.
\end{cases}
\]
We note that the smaller the ratio, the better the compressing performance.
The value 65.0\% is the best ratio among the XOR reduction heuristics for bitmatrices evaluated in~\cite{Zhou:2020}.
Although \textsc{RePair} is simple and developed initially in grammar compression, we can see it works very well.

This table also says \textsc{XorRePair} exploits the cancellative property of XOR; but, the difference is minor.
It is not surprising; indeed, exponential-time compression heuristics and an algorithm, which corresponds to \textsc{RePair}, were already compared in~\cite{Kranz:2017} for the application to cryptography, and there was also little difference.
These results mean that \textsc{RePair} efficiently compresses programs, even though it does not use the cancellativity of XOR.
We consider this comes from the robustness of RePair,
which also appears in grammar compression when comparing it with other compression algorithms, such as LZ77 and LZ78~\cite{Charikar:2005}.

\paragraph{Reducing Memory Access}
We see how \textsc{XorRepair} and the XOR fusion of~\secref{subsec:XOR fusion} reduce memory access $\#_M(\cdot)$:
\[
\begin{array}{c|cccc}
\text{Avg\%} & \frac{\textsc{Co}(P)}{P} & \frac{\textsc{Fu}(P)}{P} & \frac{\textsc{Fu}(\textsc{Co}(P))}{\textsc{Co}(P)} & \frac{\textsc{Fu}(\textsc{Co}(P))}{P} \\\hline
\#_M(\cdot) & 40.8\% & 35.1\% & 59.2\% & 24.1\% \\
\end{array}
\]
where \textsc{Co} means \textsc{XorRepair}, and \textsc{Fu} means the XOR fusion.

The second ratio says that the XOR fusion averagely reduces ${\sim}65\%$ memory accesses for \emph{uncompressed} SLPs.
We can see the other columns in the same way.
Therefore, we can tell that \textsc{XorRepair} and the XOR fusion work well independently; furthermore, combining them averagely reduces ${\sim}76\%$ memory accesses on average.

\paragraph{Reducing Variables and Required Cache Size}
We consider how the XOR fusion and our DFS-scheduling heuristic
averagely affect the two measures of the cache efficiency $\nvar$ and $\cachecap$.
\[
\begin{array}{c|cccc}
\text{Avg}\% & \frac{\textsc{Co}(P)}{P} & \frac{\textsc{Fu}(P)}{P} &
\frac{\textsc{Fu}(\textsc{Co}(P))}{\textsc{Co}(P)} &\frac{\textsc{Dfs}(\textsc{Fu}(\textsc{Co}(P)))}{\textsc{Co}(P)} \\\hline
\nvar & 1552 \% & 100 \% & 38.9\% & 24.5 \% \\\hdashline
\cachecap & 498\% & 98.7 \% & 51.2\% & 40.0 \% \\
\end{array}
\]
where $\textsc{Dfs}$ means our DFS-based scheduling heuristic.
We skip using our greedy-scheduling heuristic and the measure $\iocost(\_, \_)$
since they depend on the cache sizes determined by our block size $64, 128, \ldots, 4K$, and the table including values for of all the cache sizes becomes too large. 

The first ratio clarifies that \textsc{XorRepair} significantly degrade cache efficiency.
Comparing the third and fourth ratios, we can say the scheduling heuristic certainly improves cache efficiency.
Multiplying the first and fourth ratios derives
$\frac{\cachecap(\textsc{Dfs}(\textsc{Fu}(\textsc{Co}(P))))}{\cachecap(P)} \sim 199\%$; therefore, we can say that the scheduling heuristic can suppress the side effects of \textsc{XorRepair} to some extent.

We consider why \textsc{XorRepair} significantly deteriorates $\nvar$ and $\cachecap$.
It results from the intrinsic behaviors of \textsc{(Xor)RePair}; namely, they add many temporal variables without considering cache and register efficiency.
The same inefficiency problem was pointed in the early research of program optimization as the weak point of CSE~\cite{Allen:1972}.

\subsection{Selecting Adequate Blocksize}\label{experiment:effect of blocksize}
As we have seen in~\secref{sec:pebble game}, the block size $\mathcal{B}$ of the blocking technique is an essential optimization parameter. 
Although small blocks are supposed to enable the cache to hold all blocks,
the performance table in~\secref{evaluation:unotpmize on various block sizes} defies our prediction.
Here we see additional experiments and think about why the performance on small blocks is not good.
Since we have already seen the coding performance of $P_{\text{enc}}$ on various blocks,
we first see the performance of the uncompressed but fused version $P^{+F}_{\text{enc}}$ of $P_{\text{enc}}$to check whether or not a similar tendency appears. 
\paragraph{\textbf{Case1: Uncompressed but Fused SLP}}
The following table is the coding throughputs (GB/sec) of $P^{+F}_{\text{enc}}$:
\begin{center}
\begin{tabular}{c|cccccccc}
\text{\small Block size (byte)} & 64 & 128 & 256 & 512 & 1K & 2K & 4K \\\hline
 \textbf{intel} & 0.87 & 1.73 & 2.85 & 4.08 & 5.29 & 5.78 & 4.36 \\
 \textbf{amd} & 1.32 & 2.18 & 3.15 & 3.54 & 3.97 & 4.16 & 3.82
\end{tabular}
\end{center}
where $\textsc{NVar}(P^{+F}_{\text{enc}}) = 32$ and $\cachecap(P^{+F}_{\text{enc}}) = 88$.

We see there are the same patterns at \textbf{intel} and \textbf{amd};
i.e., $2K > 1K > 4K > 512 > 256 > 128 > 64$.
This result again defies our prediction since we need $\mathcal{B} \leq 512$ to avoid cache reloading.

\paragraph{\textbf{Possible reasons for poor performance of small blocks}}
The performance problem on small blocks may cause from two sources.

\paragraph{Cache conflicts in cache sets}
The first source is cache conflicts in \emph{cache sets}, and it prevents cache from holding $32K/\mathcal{B}$ blocks.
Generally, the 32K bytes cache with 8 cache associativity has $\frac{32K}{8} = 4K$ cache sets
where each cache set can hold 8 cache blocks.
Accessing a cache block $b$ whose start address is $\mathcal{A}(b)$,
CPU tries to assign $b$ to the $(\mathcal{A}(b) \bmod 4K)$-th cache set.
If the cache set is full (i.e., it already has 8 cache blocks), CPU evicts the LRU cache block in the set to memory.
Therefore, accessing two blocks $b_1, b_2$ such that $\mathcal{A}(b_1) \equiv_{4K} \mathcal{A}(b_2)$ may cause an eviction in a cache set.

If we take the $4K$-alignment strategy (i.e., locate all blocks on addresses divisible by $4K$),
the cache holds at most 8 blocks regardless of the size $\mathcal{B}$.
To avoid the worst situation, several approaches have been proposed~\cite{Lam:1991, Panda:1999};
however, optimally aligning blocks is a hard problem.
We use a simple approach as follows: for an SLP whose constants are $c_0, c_1, \ldots$,
we allocate $c_i$ so that $\mathcal{A}(c_i) \equiv_{4K} (i \cdot \mathcal{B})$. We do the same for variables $v_1, v_2, \ldots$.
For example, when $\mathcal{B} = 1K$,
\[
\begin{array}{l}
\mathcal{A}(c_0) \equiv_{4K} 0,\ 
\mathcal{A}(c_1) \equiv_{4K} 1K,\ 
\mathcal{A}(c_2) \equiv_{4K} 2K,\ 
\mathcal{A}(c_3) \equiv_{4K} 3K, \\
\mathcal{A}(c_4) \equiv_{4K} 0,\ 
\mathcal{A}(c_5) \equiv_{4K} 1K,\  \ldots
\end{array}
\]
This strategy is better than 4K-alignment since accessing to $c_i$ and $c_j$ never conflict when $ i \not\equiv_{4} j$.
In conclusion, the smaller the block size, the more difficult using cache efficiently as expected.

\paragraph{Latency Penalty}
The second source of the poor performance may be memory access latency.
It is clear that, if $\mathcal{B}$ becomes smaller, then the number of required iteration becomes larger.
Therefore, on small blocks, there are many unavoidable block loading caused by changing iterations.

We now focus the memory access latency on modern CPUs.
For example, we consider Intel's Haswell microarchitecture~\cite{Hammarlund:2014} released in 2013,
and it and its successor are widely used today.
Haswell needs about 150 CPU cycles as latency to reach RAM~\cite{IntelOptim, Hennessy:2017}.
Even if the CPU pipeline maximally works, we need $150 + \frac{\mathcal{B}}{n \cdot 8}$ cycles to load or store a block on an $n$-channel memory.
Thus, we need 158-cycles to load a 64-bytes block at once on a single channel memory.
If we load a 64-bytes block in eight 8-byte loads, then we need $(151 \times 8)$-cycles.
This is the reason why we should load a block from the memory as possible as large.
Haswell can load two 32-bytes data from the cache, XOR the two data using AVX2 or AVX512, and store the result 32-bytes to the L1 cache in \emph{a single cycle}~\cite{Gonsalez:2010}.
Thus, we can perform XORing for two blocks of $\mathcal{B}$ bytes in the cache in $\frac{\mathcal{B}}{32}$ cycles.
When $\mathcal{B} = 64$ (resp. $\mathcal{B} = 4K$), we can perform \textsf{xor32} ${\sim}75$ (resp. 1) times while loading one block from the memory.
In a conclusion, the smaller the block size, then more block loads are required, and the total latency penalty of a small block is relatively larger than that of a large block.

\paragraph{\textbf{Case2: Full Optimization}}
As we have seen above, small blocks may degrade the performance of blocked programs.
Here we see the coding performance of fully optimized---compressed, fused, and scheduled---version of $P_{\text{enc}}$, $P^{\text{Full}}_{\text{enc}}$, to check whether or not large blocks better for blocked programs than smaller ones.
\begin{center}
\scalebox{0.95}{
\begin{tabular}{c|cccccccc}
Block size & 64 & 128 & 256 & 512 & 1K & 2K & 4K \\\hline
 \textbf{intel} (greedy) & 2.29 & 4.00 & 6.02 & 7.61 & 8.68 & 8.37 & 7.24 \\\hdashline
 \textbf{intel} (dfs) & 2.32 & 3.97 & 6.09 & 7.37 & \textbf{8.92} & 8.55 & 7.64 \\\hline
 \textbf{amd} (greedy) & 1.91 & 3.30 & 4.36 & 5.07 & 6.08 & 7.32 & 7.15 \\\hdashline
 \textbf{amd} (dfs) &  1.84 & 3.25 & 4.60 & 5.04 & 6.36 & \textbf{7.58} &  7.31 \\
\end{tabular}
}
\end{center}
where $\mathsf{NVar}(P^{\text{Full}}_{\text{enc}}) \sim 90$ and $\mathsf{CCap}(P^{\text{Full}}_{\text{enc}}) \sim 170$ for all the entries.
We should note that our greedy scheduling heuristic generates different programs for each $\mathcal{B}$.
However, for all $\mathcal{B}$, $\mathsf{NVar}(\cdot)$ is about 90, and $\mathsf{CCap}(\cdot)$ is about 170.
The same is true for the DFS heuristics.

On the basis of the performance,
hereafter we set $\mathcal{B} = 1K$ on \textbf{intel}
and $\mathcal{B} = 2K$ on \textbf{amd} and use the DFS-based scheduling heuristics
for comparison with ISA-L and the previous work.

We consider a reason why the scores of $1K$ and $2K$ are better than that of $4K$ in \textbf{intel}.
Even if conflicts in cache sets happen, the cache with $1K$ and $2K$ blocks may hold more blocks than with $4K$; therefore, the CPU can efficiently use the cache in the case $1K$ and $2K$.
On the other hand, in \textbf{amd}, the score of $1K$ is lower than $2K$ and $4K$.
It possibly comes from a feature of the microarchitecture, \emph{Zen+}, of \textbf{amd}’s CPU.
Zen+, unlike \textbf{intel}’s CPU, performs 256 bitwidth instructions of AVX2, splitting it into two 128 bitwidth instructions~\cite{Fog:2021}.
To put it simply, the performance for AVX2 of \textbf{amd} is half that of \textbf{intel}.
Therefore, in the $1K$ case of \textbf{amd}, we think that the total latency penalty is more significant than the cache efficiency. 

\subsection{Throughput Analysis}\label{evaluation:analyze throughput}
Beyond the average analysis, we optimize the encoding SLP $P_{\text{enc}}$.
\[
\scalebox{0.95}{$
\begin{array}{c|cccc}
 &
 P_{\text{enc}} &
 \textsc{Co}(P_{\text{enc}}) &
 \textsc{Fu}(\textsc{Co}(P_{\text{enc}})) &
 \textsc{Dfs}(\textsc{Fu}(\textsc{Co}(P_{\text{enc}}))) \\\hline
\#_\xor(\cdot) & 755 & 385 & 146 & \leftarrow \\\hdashline
\#_M(\cdot) & 2265 & 1155 & 677 & \leftarrow \\\hdashline
\mathsf{NVar} & 32 & 385 & 146 & 88 \\\hdashline
\mathsf{CCap} & 92 & 447 & 224 & 167 \\\hline\hline
\textbf{intel}(1K) & 4.03 & 4.36 & 7.50 & \textbf{8.92} \\\hline
\textbf{amd}(2K) & 3.17 & 4.46 & 6.62 & \textbf{7.58} \\
\end{array}
$}
\]
where we note that our scheduling heuristics do not affect the number of XORs and memory accesses, and we represent it by $\leftarrow$.

To maximize performance, we can say that compression, fusion, and scheduling are all necessary.
Comparing the first and second columns,
we tell that the number of memory accesses is more dominant than that of $\cachecap$ on the performance.
On the other hand, comparing the third and fourth columns,
we tell that $\cachecap$ certainly represents the cache efficiency.

We also measure the performance of unoptimized and optimized versions of decoding SLPs.
Here, we consider the decoding SLP $P_{\text{dec}}$ obtained by removing $\set{2, 4, 5, 6}$ rows from the encoding matrix
because this SLP has the most XORs---1368 as we see in the following table---among decoding SLPs.
The following table summarizes the related numbers and decoding performance of $P_{\text{dec}}$:
\[
\scalebox{0.95}{$
\begin{array}{c|cccc}
 & 
 P_{\text{dec}} &
 \textsc{Co}(P_{\text{dec}}) &
 \textsc{Fu}(\textsc{Co}(P_{\text{dec}})) &
 \textsc{Dfs}(\textsc{Fu}(\textsc{Co}(P_{\text{dec}}))) \\\hline
\#_\xor & 1368 & 511 & 206 & \leftarrow \\\hdashline
\#_M & 4104 & 1533 & 923 & \leftarrow \\\hdashline
\mathsf{NVar} & 32 & 511 & 206 & 125 \\\hdashline
\mathsf{CCap} & 89 & 585 & 283 & 205 \\\hline\hline
\textbf{intel}(1K) & 2.35  & 3.32 & 5.51 & 6.67 \\\hline
\textbf{amd}(2K) & 2.28  & 3.58 & 5.27 &  6.01 \\
\end{array}
$}
\]
Since $P_{\text{dec}}$ has more instructions than $P_{\text{enc}}$,
we can see that the throughputs of $P_{\text{dec}}$ is smaller than those of $P_{\text{enc}}$.
On the other hand, the overall trend is consistent with $P_{\text{enc}}$.

\begin{figure}
\scalebox{0.9}{
$
\begin{array}{l|c@{\,}c|c@{\,}c|c@{\,}c|c@{\,}c|}
& \multicolumn{2}{|c|}{\#_\xor} & \multicolumn{2}{|c|}{\#_M} & \multicolumn{2}{|c|}{\nvar} & \multicolumn{2}{|c|}{\cachecap} \\
& \text{\ \ Enc} & \text{Dec\ \ } & \text{\ Enc} & \text{Dec\ } & \text{Enc} & \text{Dec} & \text{Enc} & \text{Dec}  \\\hline
\textbf{RS}(8, 4)  & 121 & 170 & 543 & 747 & 79 & 102 & 143 & 166 \\
\textbf{RS}(9, 4)  & 132 & 182 & 611 & 829 & 83 & 117 & 155 & 189 \\
\textbf{RS}(10, 4) & 146 & 206 & 677 & 923 & 88 & 125 & 167 & 205 \\\hline
\textbf{RS}(8, 3)  & 75 & 129 & 364 & 561 & 45 & 77 & 109 & 141 \\
\textbf{RS}(9, 3)  & 87 & 144 & 417 & 641 & 58 & 91 & 128 & 163 \\
\textbf{RS}(10, 3)& 96 & 145 & 471 & 661 & 69 & 85 & 148 & 165 \\\hline
\textbf{RS}(8, 2)  & 26 & 65 & 180 & 286 & 17 & 38 & 80 & 102 \\
\textbf{RS}(9, 2)  & 29 & 73 & 202 & 322 & 19 & 42 & 90 & 113 \\
\textbf{RS}(10, 2) & 30  & 77 & 222 & 352 & 19 & 50 & 98 & 130 
\end{array}
$}
\caption{Values of $\#_\xor, \#_M, \nvar(\cdot)$, and $\cachecap(\cdot)$, of optimized coding SLPs for various codec}
\Description{Values of $\#_\xor, \#_M, \nvar(\cdot)$, and $\cachecap(\cdot)$, of optimized coding SLPs for various codec}
\label{fig:rsvalues}
\end{figure}

\subsection{Throughput Comparison}\label{evaluation:throughput comparison}
We compare the performance of our fully optimized versions of $P_{\text{enc}}$ and $P_{\text{dec}}$ with ISA-L v2.30.0~\cite{ISALv230} and values in~\cite{Zhou:2020}.
As the same as~\cite{Zhou:2020}, we consider three kinds of codec;
4-parities $\textbf{RS}(d, 4)$, 3-parities $\textbf{RS}(d, 3)$, and 2-parities $\textbf{RS}(d, 2)$.
We summarize main measures for each codec in Figure~\ref{fig:rsvalues}.
These values correspond to the rightmost value of the above tables in~\secref{evaluation:analyze throughput}.

We compare the coding throughputs of $\textbf{RS}(d, 4)$ on \textbf{intel} where we use $\mathcal{B} = 1K$ as our blocksize:
\[
\scalebox{0.95}{$
\begin{array}{c|cc|cc|cc|}
\multirow{2}{*}{\begin{tabular}{c}\textbf{intel 1K}\\\text{(GB/sec)}\end{tabular}}
& \multicolumn{2}{|c|}{\text{Ours}} & \multicolumn{2}{|c|}{\text{ISA-L v2.30}} & \multicolumn{2}{|c|}{\text{Values~of~\cite{Zhou:2020}}} \\
& \text{Enc} & \text{Dec} & \text{Enc} & \text{Dec} & \text{Enc} & \text{Dec} \\\hline
\textbf{RS}(8, 4) & 8.86 & 6.78 & 7.18 & 7.04 & 4.94 & 4.50 \\
\textbf{RS}(9, 4) & 8.83 & 6.71 & 6.91 & 6.58 & \multicolumn{2}{|c|}{\text{\small Not Available in~\cite{Zhou:2020}}} \\
\textbf{RS}(10, 4) & 8.92 & 6.67 & 6.79 & 4.88 & 4.94 & 4.71
\end{array}
$}
\]
The table claims that our EC library exceeds ISA-L in encoding and parallels in decoding.

Let us consider why there is no difference in the performance between \textbf{RS}(10, 4) and \textbf{RS}(8, 4), although \textbf{RS}(8, 4) is better than \textbf{RS}(10, 4) in terms of the measures in Figure~\ref{fig:rsvalues}.
To encode or decode a given data of $N$-bytes,
we run SLPs for $8 \times 8$ input arrays of $\frac{N}{8 \times 8}$-bytes on \textbf{RS}(8, 4).
Similarly, we run SLPs for $8 \times 10$ input arrays of $\frac{N}{8 \times 10}$-bytes on \textbf{RS}(10, 4).
The difference in Figure~\ref{fig:rsvalues} is due to the fact that the number of input arrays of \textbf{RS}(10, 4) is larger than that of \textbf{RS}(8, 4).
On the other hand, the total number of iterations $\frac{N}{8 \times 8 \times \mathcal{B}}$ on \textbf{RS}(8, 4) is larger than that $\frac{N}{8 \times 10 \times \mathcal{B}}$ on \textbf{RS}(10, 4). As a result, there is no difference in the performance between them.

The encoding and decoding performance of ISA-L are close; however, there is indeed difference between them in our EC library.
Our encoding and decoding matrices, which are sources of $P_{\text{enc}}$ and $P_{\text{dec}}$, equal those of ISA-L in the binary representation. Namely, we use the same matrix that ISA-L uses.
We believe that this situation could be caused by the following fact.
In ISA-L or EC libraries based on MM over finite fields algorithms,
for coding matrices $M_1$ and $M_2$ and a data matrix $D$,
there is not much difference between the two computational costs of $M_1 \cdot D$ and $M_2 \cdot D$
because finite field multiplication is usually implemented using a multiplication table $\mathcal{M}$; i.e., $a \cdot b$ is computed by accessing $\mathcal{M}[a][b]$.
On the other hand, in our libraries or EC libraries based on the method XOR-based EC,
even if the sizes of two matrices $M_1$ and $M_2$ are equal,
the number of required XORs could be very different.
For example, for an element $e_1, e_2 \in \gf{2^8}$, the number of 1 in the bitmatrix $\tilde{e_1}$ may be significantly larger than those of $\tilde{e_2}$.
In a conclusion, we consider the performance gap between encoding and decoding in our library is intrinsic in XOR-based EC.

We also have similar structures in the throughputs table of $\textbf{RS}(d, 4)$ on \textbf{amd} where we use $\mathcal{B} = 2K$ as the blocksize:
\[
\scalebox{0.95}{$
\begin{array}{c|cc|cc|cc|}
\multirow{2}{*}{\begin{tabular}{c}\textbf{amd 2K}\\\text{(GB/sec)}\end{tabular}}
& \multicolumn{2}{|c|}{\text{Ours}} & \multicolumn{2}{|c|}{\text{ISA-L v2.30}} & \multicolumn{2}{|c|}{\text{Values~of~\cite{Zhou:2020}}} \\
& \text{Enc} & \text{Dec} & \text{Enc} & \text{Dec} & \text{Enc} & \text{Dec} \\\hline
\textbf{RS}(8, 4) & 7.09 & 5.53 & 4.60 & 4.61 & 4.69 & 4.06 \\
\textbf{RS}(9, 4) & 7.22 & 5.86 & 4.70 & 4.70 & \multicolumn{2}{|c|}{\text{\small Not Available in~\cite{Zhou:2020}}} \\
\textbf{RS}(10, 4) & 7.58 & 6.01 & 4.76 & 4.75 & 4.67 & 3.91
\end{array}
$}
\]

\paragraph{Comparison in Low Parities}
We compare \textbf{RS}($d$, 3) and \textbf{RS}($d$, 2):
\[
\scalebox{0.9}{$
\begin{array}{c|cc|cc|ll|}
\multirow{2}{*}{\begin{tabular}{c}\textbf{intel 1K}\\\text{(GB/sec)}\end{tabular}}
& \multicolumn{2}{|c|}{\text{Ours}} & \multicolumn{2}{|c|}{\text{ISA-L v 2.30}} & \multicolumn{2}{|c|}{\text{Values~of~\cite{Zhou:2020}}} \\
& \text{Enc} & \text{Dec} & \text{Enc} & \text{Dec} & \text{Enc} & \text{Dec} \\\hline
\textbf{RS}(8, 3) & 12.32 & 8.82 & 9.09 & 9.25 & 6.08 & 5.57 \\
\textbf{RS}(9, 3) & 11.97 & 8.27 & 7.31 & 7.92 & 6.17 & 5.66 \\
\textbf{RS}(10, 3) & 11.78 & 8.89 & 6.78 & 7.93 & 6.15_{S} & 5.90 \\\hline\hline
\textbf{RS}(8, 2) & 18.79 & 14.59 & 12.99 & 13.34 & 8.13_{E} & 8.07_{E} \\
\textbf{RS}(9, 2) & 18.93 & 14.27 & 11.85 & 12.03 & 8.34_{E} &  8.04 \\
\textbf{RS}(10, 2) & 18.98 & 14.66 & 12.12 & 12.61 & 8.40_{E} & 8.22_{E} \\
\end{array}$
}
\]
\[
\scalebox{0.95}{$
\begin{array}{c|cc|cc|ll|}
\multirow{2}{*}{\begin{tabular}{c}\textbf{amd 2K}\\\text{(GB/sec)}\end{tabular}}
& \multicolumn{2}{|c|}{\text{Ours}} & \multicolumn{2}{|c|}{\text{ISA-L v 2.30}} & \multicolumn{2}{|c|}{\text{Values~of~\cite{Zhou:2020}}} \\
& \text{Enc} & \text{Dec} & \text{Enc} & \text{Dec} & \text{Enc} & \text{Dec} \\\hline
\textbf{RS}(8, 3) & 9.35 & 7.43 & 5.01 & 4.93 & 6.38_{S} & 5.18_{Q} \\
\textbf{RS}(9, 3) & 9.41 & 7.44 & 5.07 & 5.02 & 6.53_{S} & 6.53_{S} \\
\textbf{RS}(10, 3) & 9.51 & 7.46 & 5.04 & 5.02 & 6.49_{S} & 5.31_{Q} \\\hline\hline
\textbf{RS}(8, 2) & 13.60 & 12.07 & 7.11 & 7.09 & 8.96_{R} & 10.11_{E} \\
\textbf{RS}(9, 2) & 13.83 & 12.06 & 7.17 & 7.19 & 9.12_{R} & 9.31_{R} \\
\textbf{RS}(10, 2) & 14.13 & 12.19 & 7.24 & 7.15 & 9.31_{R} & 10.60_{R} \\
\end{array}
$}
\]
The column "Values of~\cite{Zhou:2020}" consists of the best throughputs among results of the corresponding parameters in~\cite{Zhou:2020} where the authors compared their proposal method with some codecs specialized for low parities---STAR~\cite{Huang:2008} and QFS~\cite{Ovsiannikov:2013} for three parities, and EvenOdd~\cite{Blaum:1995} and RDP~\cite{Corbett:2004} for two parities.
Indeed, the values $\cdot_{S}$, $\cdot_{Q}$, $\cdot_{E}$, and $\cdot_{R}$ are scored
by STAR, QFS, EvenOdd, and RDP, respectively (the other values are scored by their proposal approach).
We can say our library works well without specializing for low parities.

\section{Conclusion and Future Work}
We have proposed a streamlined approach to implement an efficient XOR-based EC library.
We combined the four notions,
straight-line programs (SLPs) from program optimization,
grammar compression algorithm \textsc{RePair},
the functional program optimization technique deforestation,
and the pebble game from program analysis.
We extended \textsc{RePair} to our \textsc{XorRePair} to accommodate the cancellative property of XOR.
We used the pebble game to model SLPs with the abstract LRU cache.
Orthogonally composing these methods, we have implemented an experimental library that outperforms Intel's high-performance library, ISA-L~\cite{ISAL}.

Analyzing the result of experiments, we have noticed the importance of cache optimization.
In this paper, we only tried to abstract the L1 cache but not the L2 and L3 caches.
We are thinking about using the multilevel pebble game introduced by Savage in~\cite{Savage:1995} to accommodate the L2 and L3.
As a related cache efficiency topic, we are interested in automatically inserting software prefetches~\cite{Lee:2012}.
It may hide the cache transfer penalty from memory to cache if a CPU concentrates on performing array XORs against cached data.

\section*{Acknowledgments}
We gratefully thank anonymous reviewers for their invaluable and thorough comments, which improved the presentation of this paper and also helped us improve the performance of our experimental library.
Many thanks to our colleague Masahiro Fukasawa for fruitful discussions of cache optimization.
Thanks also to Iori Yoneji for his full support in providing evaluation environments.

\bibliographystyle{ACM-Reference-Format}
\bibliography{mybib}

\appendix


\newcommand{\pnt}{\Gamma}

\setcounter{lemma}{0}
\renewcommand{\thelemma}{\Alph{section}\arabic{lemma}}
\setcounter{proposition}{0}
\renewcommand{\theproposition}{\Alph{section}\arabic{proposition}}

\section{Proof of Theorem~\ref{thm:intractability of minimum memory access problem}}\label{sec:appendixA}

In this appendix section, we show the following problem of \secref{sec:Fusion} is NP-complete.
\namedframe{\textbf{The minimum memory access problem}}{
For an $P \in \slp_\xor$,
we find $Q \in \multislp$ that satisfies $\result{P} = \result{Q}$ and minimizes $\#_M(Q)$.
}

\subsection{Vertex Cover Problem}
To show the NP-completeness of our problem,
we use the classic NP-complete Vertex Cover problem~\cite{Garey:1990}.

The vertex cover problem (VCP) is a decision problem such that:
\begin{itemize}
\item Let $G$ be an undirected graph $G$ and $k$ be a natural number.
\item We then decide if there is a node set $X$ of $G$ such that
\begin{itemize}
\item $|X| \leq k$; and
\item $G$ is covered by $X$. Namely, every edge $(a, b)$ of $G$ is covered by $X$; i.e., $a \in X$ or $b \in X$ holds.
\end{itemize}
\end{itemize}

The optimization version of VCP, OptVCP, is an optimization problem such that:
\begin{itemize}
\item Let $G$ be an undirected graph.
\item We then compute the smallest node set $X$ of $G$ that covers $G$.
\end{itemize}
Since VCP is NP-complete, OptVCP cannot be solved in polynomial time unless \textbf{P=NP}.

Below we reduce our optimization problem, the minimum memory access problem, to OptVCP.

\subsection{Build SLP from Graph}
Let $G$ be an undirected graph.

Before building an SLP corresponding to $G$,
we modify $G$ for our construction and proof as follows:
\begin{itemize}
\item For each node $a$ of $G$, we add two fresh nodes $\lambda_a$ and $\mu_a$
and add two edges $(a, \lambda_a)$ and $(a, \mu_a)$.
\begin{itemize}
\item We call added nodes $\lambda_\bullet$ and $\mu_\bullet$ \emph{local nodes}.
\item For a node $a$ of $G$, we call edges $(a, \lambda_a)$ and $(a, \mu_a)$ \emph{local edges}.
\end{itemize}
\end{itemize}
Hereafter we omit the subscripts of $\lambda_\bullet$ and $\mu_\bullet$ if they are clear from the context.
For example, we can simply write $(a, \lambda)$ because there is no edge such that $(b, \lambda_a)$.

For the modified graph, we build an SLP as follows:
\begin{itemize}
\item For each edge $(x, y) \in G_{\text{modif}}$, we add a goal $g_{x y} \gets \rho \xor x \xor y$.
\end{itemize}

Let us consider the following example:
\begingroup
\tikzstyle{none}=[]
\tikzstyle{vertex}=[fill=none, draw=black, shape=circle, inner sep=1pt]
\tikzstyle{localv}=[fill=none, draw=none, shape=circle, dashed, inner sep=1pt]

\tikzstyle{new edge style 0}=[->]
\tikzstyle{new edge style 1}=[fill=black, ->, line width=1mm]

\begin{tikzpicture}
	\begin{pgfonlayer}{nodelayer}
		\node [style=vertex] (0) at (-4.25, 1) {$a$};
		\node [style=vertex] (1) at (-4.25, -1) {$c$};
		\node [style=vertex] (2) at (-2.25, 1) {$b$};
		\node [style=vertex] (3) at (-2.25, -1) {$d$};
		\node [style=vertex] (4) at (0.5, 1) {$a$};
		\node [style=vertex] (5) at (0.5, -1) {$c$};
		\node [style=vertex] (6) at (2.5, 1) {$b$};
		\node [style=vertex] (7) at (2.5, -1) {$d$};
		\node [style=localv] (8) at (1, 1.25) {$\lambda_a$};
		\node [style=localv] (9) at (1, 0.75) {$\mu_a$};
		\node [style=localv] (10) at (3, 1.25) {$\lambda_b$};
		\node [style=localv] (11) at (3, 0.75) {$\mu_b$};
		\node [style=localv] (12) at (1, -0.75) {$\lambda_c$};
		\node [style=localv] (13) at (1, -1.25) {$\mu_c$};
		\node [style=localv] (14) at (3, -0.75) {$\lambda_d$};
		\node [style=localv] (15) at (3, -1.25) {$\mu_d$};
		\node [style=none] (16) at (-1.5, 0) {};
		\node [style=none] (17) at (-0.25, 0) {};
	\end{pgfonlayer}
	\begin{pgfonlayer}{edgelayer}
		\draw (0) to (2);
		\draw (0) to (1);
		\draw (1) to (3);
		\draw (4) to (6);
		\draw (4) to (5);
		\draw (5) to (7);
		\draw (4) to (8);
		\draw (4) to (9);
		\draw (6) to (10);
		\draw (6) to (11);
		\draw (5) to (12);
		\draw (5) to (13);
		\draw (7) to (14);
		\draw (7) to (15);
		\draw [style=new edge style 1] (16.center) to (17.center);
	\end{pgfonlayer}
\end{tikzpicture}

\endgroup

The left graph $G$ is modified to the right graph $G_{\text{modif}}$.
We build the SLP $P_G$ from $G_{\text{modif}}$ as follows:
\[
\begin{array}{l@{\ }l}
g_{a b} & \gets \rho \xor a \xor b; \\
g_{a c} & \gets \rho \xor a \xor c; \\
g_{c d} & \gets \rho \xor c \xor d; \\
g_{a \lambda_a} & \gets \rho \xor a \xor \lambda_a; \\
g_{a \mu_a} & \gets \rho \xor a \xor \mu_a; \\
\multicolumn{2}{l}{\phantom{aaaaa}\vdots} \\
g_{d \lambda_d} & \gets \rho \xor a \xor \lambda_d; \\
g_{d \mu_d} & \gets \rho \xor a \xor \mu_d; \\
\multicolumn{2}{c}{\return(g_{a b}, g_{a c}, g_{c d}, g_{a \lambda_a}, \ldots, g_{d \mu_d});}
\end{array}
\]

We will show that we can extract the minimum cover sets of $G$ (rather than $G_{\text{modif}}$)
from the solution of the minimum memory access problem for $P_G$.
For example, the following $\multislp$ is one of the minimum solution:
\[
\begin{array}{l}
\pnt_a \gets \rho \xor a; \\
g_{a b} \gets \pnt_a \xor b; \quad
g_{a c} \gets \pnt_a \xor c; \quad
g_{a \lambda} \gets \pnt_a \xor \lambda; \quad
g_{a \mu} \gets \pnt_a \xor \mu; \\[5pt]
g_{b \lambda} \gets \rho \xor b \xor \lambda; \quad
g_{b \mu} \gets \rho \xor b \xor \mu; \\[5pt]
\pnt_c \gets \rho \xor c; \\
g_{c d} \gets \pnt_c \xor d; \quad
g_{c \lambda} \gets \pnt_c \xor \lambda; \quad
g_{c \mu} \gets \pnt_c \xor \mu; \\[5pt]
g_{d \lambda} \gets \rho \xor d \xor \lambda; \quad
g_{d \mu} \gets \rho \xor d \xor \mu; \\
\return(\cdots)
\end{array}
\]
From the solution, we extract $a$ and $c$ as cover sets of the original graph $G$; actually, $\set{a, c}$ covers $G$.

Technically, we prove the following two lemmas in the subsequent sections.
\begin{lemma}\label{appendix main lemma}
Let $Q$ be an $\multislp$ where $\result{P_G} = \result{Q}$.

We can effectively normalize $Q$ to $Q'$ such that;
\begin{itemize}
\item $\#_M(Q') \leq \#_M(Q)$.
\item Every edge $(a, b)$ of $G$ is represented in $Q'$ as follows:
\[
\begin{array}{l}
\pnt_a \gets \rho \xor a; \\
g_{a b} \gets \pnt_{a} \xor b
\end{array}
\]
\item For a node $a$ of $G$, the local edges $(a, \lambda)$ and $(a, \mu)$ are represented in $Q'$ as follows:
\[
\left(
\begin{array}{l}
\text{if $\pnt_a$ is in $Q'$} \\\hline
g_{a \lambda} \gets \pnt_a \xor \lambda; \\
g_{a \mu} \gets \pnt_a \xor \mu.
\end{array}
\right)
\quad\text{OR}\quad
\left(
\begin{array}{l}
\text{if $\pnt_a$ is not in $Q'$} \\\hline
g_{a \lambda} \gets \rho \xor a \xor \lambda; \\
g_{a \mu} \gets \rho \xor a \xor \mu.
\end{array}
\right)
\]
\end{itemize}
\end{lemma}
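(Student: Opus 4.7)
The plan is to construct the normalization $Q \mapsto Q'$ by four monotone stages, each of which weakly decreases $\#_M$.

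\emph{Stage 1 (fusion).} Apply the XOR fusion theorem of Section~\ref{sec:xorfusion} exhaustively: any intermediate used only once inside another XOR is inlined into its sole consumer, strictly reducing $\#_M$. Afterwards every surviving intermediate occurs as an operand of at least two XOR instructions (goals may still be referenced only by $\return$).

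\emph{Stage 2 (classifying intermediate values).} Let $w$ be any surviving intermediate and set $W \coloneqq \val{w} \subseteq \set{\rho} \cup V(G_{\text{modif}})$. I claim $W$ may be restricted to the form $\set{\rho, a}$ for some node $a$. If $|W| \leq 1$, inline $w$ as a constant. If $\rho \notin W$ with $|W| \geq 2$, then $W$ can appear inside a goal value $\set{\rho, x, y}$ only when $W = \set{x, y}$ for an edge $(x,y)$; a direct cost comparison shows that $w$ cannot beat using $\pnt_x$/$\pnt_y$ or direct form in any other goal, so $w$ usefully feeds only $g_{xy}$, contradicting Stage~1. If $\rho \in W$ with $|W| \geq 3$, then $W$ already equals $\set{\rho, x, y}$ for a unique edge, and again $w$ is single-use. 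Hence $W = \set{\rho, a}$; rename $w$ to $\pnt_a$, merge duplicates, and replace its defining instruction by $\pnt_a \gets \rho \xor a$ (cost~$3$, the minimum for producing $\set{\rho, a}$). Each replacement weakly decreases $\#_M$.

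\emph{Stages 3 and 4 (canonicalizing goals and enforcing the cover).} Each goal $g_{xy}$ is now a single XOR whose operands come from constants and the surviving $\pnt_\bullet$'s. Enumerating combinations shows the cost-minimizing shapes are $\pnt_x \xor y$, $\pnt_y \xor x$ (cost~$3$), or the direct $\rho \xor x \xor y$ (cost~$4$); any other arrangement rewrites to one of these without increasing $\#_M$. Let $S \coloneqq \set{a : \pnt_a \in Q'}$. If $S$ fails to cover some non-local edge $(a,b) \in E_G$, then $a, b \notin S$ and $(a,b)$ is in direct form. Insert $\pnt_a \gets \rho \xor a$ (cost $+3$) and rewrite every direct-form edge incident to $a$ into $\pnt_a$-form. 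The two construction-imposed local edges $(a, \lambda_a)$ and $(a, \mu_a)$ are necessarily direct (since $\pnt_a$ did not previously exist); together with $(a,b)$ they yield savings of at least~$3$, so the net change in $\#_M$ is non-positive. Iterate until $S$ is a vertex cover of $G$; finally, convert every remaining direct-form local edge at $a \in S$ into $\pnt_a$-form, which only decreases $\#_M$. The result $Q'$ satisfies the stated normal form.

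The main obstacle lies in Stage~2, where an intermediate may feed other intermediates rather than goals directly, so the ``single-use'' contradiction is not immediate. I would resolve this by induction on topological depth in the computation DAG of $Q$ after Stage~1: substitute each non-$\set{\rho, a}$-valued intermediate into its intermediate consumers and re-run fusion, propagating from goals backwards. Because genuine multi-use arises only for $\set{\rho, a}$-valued intermediates (the only values that partially overlap more than one goal in a $\#_M$-reducing way), this push-forward terminates and $\#_M$ accounting remains monotone throughout.
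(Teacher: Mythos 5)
There is a genuine gap, and it sits exactly where the paper spends most of its effort: XOR cancellation. Your Stage~2 asserts that an intermediate $w$ with $\rho \notin \val{w}$ ``can appear inside a goal value $\set{\rho,x,y}$ only when $\val{w} = \set{x,y}$ for an edge $(x,y)$,'' but this presumes that operands of a goal are subsets of the goal's value. With cancellation that is false: a goal $g_{x y}$ may be computed as $v \xor w$ with $\val{v} = \set{\rho, x, z}$ and $\val{w} = \set{z, y}$, so an intermediate whose value is not contained in any goal can still feed several goals and be genuinely multi-use after fusion. Your closing remark acknowledges the obstacle but resolves it by asserting that ``genuine multi-use arises only for $\set{\rho,a}$-valued intermediates,'' which is precisely the claim that needs proof, not a reason the induction works. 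Worse, the proposed repair --- substituting a non-$\set{\rho,a}$-valued intermediate into its consumers and re-fusing --- is not $\#_M$-monotone when the intermediate is multi-use: inlining a definition with $n$ operands used $k$ times changes $\#_M$ by $(k-1)(n-1)-2$, which is positive already for $k=n=3$, and the cancellations that might offset this are exactly what you have not tracked. The sub-claim that $\rho \in \val{w}$, $|\val{w}| \ge 3$ forces $\val{w} = \set{\rho,x,y}$ for an edge is also false (the paper's Pat8 handles $\val{v} = \set{\rho, a, \lambda_a, b}$, for instance).

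The paper's proof closes this gap with machinery your argument omits. It builds an auxiliary graph $\Game$ on the freely appearing constants and goals, proves that every temporal variable's value decomposes into paths of $\Game$ (Proposition~\ref{appendix:val-path}), deduces that path endpoints already carry $\pnt$-definitions (Proposition~\ref{prop:one of path}), and then eliminates each cancelling pattern (Pat1--Pat8) while maintaining the invariant $\#_M(P') + \#_M(Q) \le \#_M(P)$. Crucially, the budget that pays for each newly inserted $\pnt_a \gets \rho \xor a$ comes from deleting the \emph{movable occurrences} of the local nodes $\lambda_a, \mu_a$ --- this is the entire reason $G_{\text{modif}}$ adds them. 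Your proof uses the local edges only in Stages~3--4 to enforce the cover at zero net cost (that step is fine and matches the paper's Normalization~IV), but never in Stage~2, where they are indispensable. Without an argument of the $\Game$/movable-occurrence kind, Stage~2 does not go through, and the lemma is not established.
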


Using this lemma, we obtain the following two useful properties.
\begin{lemma}\label{appendix:size lemma}
Let $P$ be a normalized $\multislp$ in the meaning of Lemma~\ref{appendix main lemma}.
If the size of the set $\set{ a \in G : \text{$\pnt_a$ is in $Q'$} }$ is $k$,
then
\[
3|E| + 8|N| + k = \#_M(Q').
\]
where $E$ is the edge sets of $G$ and $N$ is the node sets of $G$.
\end{lemma}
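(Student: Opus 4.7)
The plan is a direct counting argument that leverages the structural normal form already established by Lemma~\ref{appendix main lemma}. Concretely, I would use that lemma to partition the instructions of $Q'$ into three categories: (i) the auxiliary assignments $\pnt_a \gets \rho \xor a$, one for each of the $k$ nodes $a \in N$ that carries a $\pnt$-variable; (ii) the $|E|$ goal assignments associated with the original edges of $G$, each of the form $g_{ab} \gets \pnt_a \xor b$ (or symmetrically $\pnt_b \xor a$); and (iii) the $2|N|$ goal assignments associated with the local edges $(a,\lambda_a)$ and $(a,\mu_a)$.

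Next, I would compute $\#_M$ category by category using the definition $\#_M\!\bigl(v \gets \lbxor(t_1,\ldots,t_n)\bigr) = n+1$. Category (i) consists of binary XORs, so it contributes $3k$. Category (ii) also consists of binary XORs, contributing $3|E|$. For category (iii), I would split on whether $\pnt_a$ exists: when it does, both local-edge assignments are binary XORs $\pnt_a \xor \lambda$ and $\pnt_a \xor \mu$, contributing $6$ accesses per such node; when it does not, both become ternary XORs $\rho \xor a \xor \lambda$ and $\rho \xor a \xor \mu$, contributing $8$ accesses per such node. Thus category (iii) contributes $6k + 8(|N|-k) = 8|N| - 2k$. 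Summing yields
\[
3k + 3|E| + (8|N| - 2k) = 3|E| + 8|N| + k,
\]
which is exactly the claimed identity.

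The only conceptual content beyond arithmetic is the appeal to Lemma~\ref{appendix main lemma}, which I use twice: once to guarantee that every original and local edge of $G_{\text{modif}}$ is realized in exactly one of the shapes listed above, and once to guarantee that $Q'$ contains no other memory-accessing instructions (so that no further terms appear in the sum). The main obstacle, therefore, is not in the present lemma but in the preceding normalization lemma; here the argument reduces to mechanical accounting. I would write the proof as a short itemized tally to make the $3k + 3|E| + (8|N| - 2k)$ decomposition transparent and to emphasize that the coefficient of $k$ arises precisely from the $+3$ cost of introducing $\pnt_a$ offset by the $2 \times (-2) = -4 \cdot \tfrac{1}{2}$ savings wait—offset by the fact that each of the two local edges incident to $a$ drops from a ternary XOR (cost $4$) to a binary XOR (cost $3$), a saving of $2$ per node, so the net contribution of each covered node is $3 - 2 = 1$.
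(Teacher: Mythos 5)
Your proof is correct and follows essentially the same route as the paper's: both tally $3k$ for the $k$ definitions $\pnt_a \gets \rho \xor a$, then $6k + 8(|N|-k)$ for the local-edge goals split on whether $\pnt_a$ exists, then $3|E|$ for the original-edge goals (each guaranteed binary by the normal form), and sum to $3|E| + 8|N| + k$. The only cosmetic difference is your closing remark explaining the coefficient of $k$ as a net $3 - 2 = 1$ per covered node, which the paper omits but which is consistent with the same count.
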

\begin{proof}
For each temporal variables, we need 3-costs for defining $\pnt_a \gets \rho \xor a$.
This totally require $3k$-costs.

For each local edge of a node $a$,
\begin{itemize}
\item if we have $\pnt_a$, we require 6-costs: $g_{a \lambda} \gets \pnt_a \xor \lambda$ and
$g_{a \mu} \gets \pnt_a \xor \mu$.
\item Otherwise, we require 8-costs: $g_{a \lambda} \gets \rho \xor a \xor \lambda$ and
$g_{a \mu} \gets \rho \xor a \xor \mu$.
\end{itemize}

Totally, for defining all the local edges, we require $6k + 8(|N| - k)$-costs.

For each edge $(a, b)$ of $G$,
since we have $\pnt_a$ or $\pnt_b$,
we require 3-costs $g_{a b} \gets \pnt_a \xor b$ or $g_{a b} \gets \pnt_b \xor a$.

Entirely, we need the following costs matching with one of the statement:
\[
3k + (6k + 8(|N| - k)) + 3|E| =k + 8|N| + 3|E|.
\]
\end{proof}

Using these lemmas, we obtain the following property.
\begin{lemma}\label{appendix:VCP-MMAP}
For a given graph $G$, let $Q$ be the $\#_M$-minimum SLP such that $\result{Q} = \result{P_G}$.
The following holds on $Q$:
\begin{itemize}
\item There is a number $k$ such that $3|E| + 8|N| + k = \#_M(Q)$.
\item Furthermore, $k$ is the size of the smallest cover set of $G$.
\end{itemize}
\end{lemma}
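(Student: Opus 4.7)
The plan is to combine Lemmas~\ref{appendix main lemma} and~\ref{appendix:size lemma} with a direct correspondence between the set of temporal variables appearing in a normalized $\multislp$ and a vertex cover of $G$. First I would apply Lemma~\ref{appendix main lemma} to the $\#_M$-minimum $Q$ to obtain a normalized $Q'$ with $\#_M(Q') \leq \#_M(Q)$; by the minimality of $Q$, equality must hold. Setting $X \coloneqq \set{ a \in N : \text{$\pnt_a$ occurs in $Q'$} }$ and $k \coloneqq |X|$, Lemma~\ref{appendix:size lemma} immediately yields $\#_M(Q') = 3|E| + 8|N| + k$, establishing the first bullet.

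For the second bullet, I would first observe that $X$ is a vertex cover of $G$: the normalization clause of Lemma~\ref{appendix main lemma} forces every edge $(a,b) \in E$ to be realized in $Q'$ by a statement of the form $g_{ab} \gets \pnt_c \xor d$ with $\set{c, d} = \set{a, b}$, so at least one of $a, b$ lies in $X$. Then I would prove minimality of $k$ by contradiction: suppose $X^\star$ is a vertex cover of $G$ with $|X^\star| < k$. I would explicitly build an $\multislp$ $Q^\star$ with $\result{Q^\star} = \result{P_G}$ as follows: emit $\pnt_a \gets \rho \xor a$ for each $a \in X^\star$; for each edge $(a,b) \in E$, pick an endpoint $c \in X^\star$ (guaranteed by the cover property) and emit $g_{ab} \gets \pnt_c \xor d$ where $d$ is the other endpoint; for each node $a$, realize the local edges $(a, \lambda_a), (a, \mu_a)$ using $\pnt_a$ if $a \in X^\star$, and by the direct ternary form $g_{a\lambda} \gets \lbxor(\rho, a, \lambda_a)$ (and similarly for $\mu_a$) otherwise. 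By construction, $Q^\star$ already satisfies the structural conditions of Lemma~\ref{appendix main lemma}, so Lemma~\ref{appendix:size lemma} gives $\#_M(Q^\star) = 3|E| + 8|N| + |X^\star| < \#_M(Q)$, contradicting the minimality of $Q$. Hence $k$ equals the size of a smallest vertex cover of $G$.

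The main obstacle in this chain is really the normalization lemma, Lemma~\ref{appendix main lemma}, which is assumed here: one has to show that an arbitrary competitor $\multislp$ can be rewritten into the rigid shape used by Lemma~\ref{appendix:size lemma} without increasing $\#_M$, and this is where the interaction of the cancellativity of $\xor$ with the memory-cost measure must be handled carefully (in particular, ruling out alternative encodings that bundle $\rho$ into other intermediate variables or share terms across several goals). Once that structural normal form is established, the rest of the argument is essentially bookkeeping, as sketched above, and transfers the NP-hardness of OptVCP to the minimum memory access problem.
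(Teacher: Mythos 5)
Your proposal is correct and follows essentially the same route as the paper's proof: normalize $Q$ via Lemma~\ref{appendix main lemma}, count with Lemma~\ref{appendix:size lemma}, read off a vertex cover from the set of $\pnt_a$ variables, and build a competitor SLP from a minimum cover to pin down $k$. Your version is marginally more explicit on two points the paper glosses over---that minimality of $Q$ forces $\#_M(Q') = \#_M(Q)$, and why the $\pnt$-set actually covers every edge---but these are presentational refinements of the same argument, not a different approach.
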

\begin{proof}
We normalize $Q$ to $Q'$ using Lemma~\ref{appendix main lemma} and then estimate $k$ using Lemma~\ref{appendix:size lemma}.
The conditions of Lemma~\ref{appendix main lemma} tells we can cover the graph $G$ using $k$ vertices.

Let $X = \set{x_1, x_2, \ldots, x_n}$ be the smallest cover set of $G$.
We then can construct an $\multislp$ $R$
such that $R$ satisfies the conditions of Lemma~\ref{appendix main lemma} and $\result{R} = \result{P_G}$
defining variables $\pnt_{x_1}, \ldots, \pnt_{x_n}$
as
\[
\pnt_{x_1} \gets \rho \xor x_1;\ 
\pnt_{x_2} \gets \rho \xor x_2;\ 
\ \ \ldots\ \ 
\pnt_{x_n} \gets \rho \xor x_n;\ 
\]
Lemma~\ref{appendix:size lemma} tells $\#_M(R) = 3|E| + 8|N| + n$.

From the $\#_M$-minimality of $Q$, $k \leq n$ must hold.
On the other hand, from the minimality of $n$, $n \leq k$ must hold; then, we have $k = n$.
\end{proof}

Lemma~\ref{appendix:VCP-MMAP} immediately leads to the intractability of our optimization problem.

Hereafter, we will show Lemma~\ref{appendix main lemma}.

\subsection{Proof of Lemma~\ref{appendix main lemma}}

\paragraph{\textbf{Terminology}}
The following is terminology for this section:
\begin{description}
\item[Temporal variable:] If a variable is not a goal, we call it \emph{temporal} variable.
\item[Goal variable:] If a variable is one of the goal, we call it \emph{goal} variable.
\begin{itemize}
\item In our setting, each goal variable is of the form $g_{a b}$, $g_{a \lambda}$, or $g_{a \mu}$
where $a$ and $b$ are nodes and $\lambda$ and $\mu$ are local nodes.
\end{itemize}
\item[Freely appearing:]
If a variable $v$ or a constant $c$ appears in the definition of a temporal variable,
we say that $v$ or $c$ freely appears in the program
\end{description}

\paragraph{\textbf{Metavariable Naming Rule}}
\begin{description}
\item[$\boldsymbol{a, b, \ldots}$:] Metavariables for nodes of $G$.
\item[$\boldsymbol{x, y, z}$:] Metavariables for nodes of $G_{\text{modif}}$.
\item[$\boldsymbol{t, t_1, t_2, \ldots}$:] Metavariables for terms (i.e., constants or variables) of SLPs.
\item[$\boldsymbol{v, v_1, v_2, \ldots}$:] Metavariables for a temporal variable of SLPs.
\end{description}

\paragraph{\textbf{Notation}}
Let $P$ be an SLP and $v$ be a variable of $P$.

We write $\val{v}$ to denote the value of $v$.
To justify this notation, we assume that
every SLP of this section is of the \emph{SSA (single static assignment)} form
where each variable is assigned exactly once~\cite{Alpern:1988, Rosen:1988}.
We can easily convert a given SLP to an SSA form SLP without changing the semantics and the size.
For example,
\[
\begin{array}{l}
\hfill \text{non-SSA SLP} \hfill \\\hline
v \gets a \xor b \xor c; \\
v \gets v \xor d \xor e; \\
v \gets v \xor f;
\end{array}
\qquad \Mapsto \qquad
\begin{array}{l}
\hfill \text{SSA SLP} \hfill \\\hline
v_1 \gets a \xor b \xor c; \\
v_2 \gets v_1 \xor d \xor e; \\
v_3 \gets v_2 \xor f;
\end{array}
\]

\subsubsection{Normalization I}\mbox{}\\
Let $v$ be a temporal variable.

If $\val{v} = \set{ x, y }$ and $v \gets t_1 \xor t_2 \xor \cdots$,
we convert the definition to the trivial form as follows:
\[
v \gets t_1 \xor t_2 \xor \cdots;\ \Mapsto v \gets x \xor y;
\]

If $\val{v} = \set{ x, y, z }$ and $v \gets t_1 \xor t_2 \xor t_3 \xor \cdots$,
we convert the definition to the trivial form as follows:
\[
v \gets t_1 \xor t_2 \xor t_3 \xor \cdots;\ \Mapsto v \gets x \xor y \xor z;
\]
These conversion do not change the semantics of the program and increase the size of the program.

\subsubsection{Normalization II}\mbox{}\\
If the program normalized by Normalization-I has a variable $\pnt_a$ of the form:
\[
\pnt_a \gets \rho \xor a;
\]
using $\pnt_a$, we rewrite the definitions of $g_{a \lambda}$ and $g_{a \mu}$ as follows:
\[
g_{a \lambda} \gets \pnt_a \xor \lambda;
\qquad
g_{a \mu} \gets \pnt_a \xor \mu;
\]

The following holds after these normalization steps.
\begin{proposition}\label{prop:freely appear}
If $\pnt_a$ is not in the (normalized) program,
for $\lambda_a$ (also $\mu_a$),
either one of the following holds:
\begin{itemize}
\item $g_{a \lambda}$ is defined as $g_{a \lambda} \gets \rho \xor a \xor \lambda$; or
\item $\lambda$ freely appears.
\end{itemize}
\end{proposition}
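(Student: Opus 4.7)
The plan is to extend the normalization procedure of Lemma~\ref{appendix main lemma} with one further local pass that rewrites any definition of $g_{a\lambda}$ falling into the ``bad'' case (neither alternative of the proposition holds) into the trivial form, and to verify that this pass never increases $\#_M$. Once such a pass is in place, the proposition is immediate from its fixed-point property.

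The argument hinges on pinning down the syntactic shape of $g_{a\lambda}$. Since $\val{g_{a\lambda}} = \set{\rho, a, \lambda}$ and a constant can contribute to a value only by appearing directly in some definition, the hypothesis that $\lambda$ does not appear freely forces $\lambda$ to occur directly in the right-hand side of $g_{a\lambda}$. XOR cancellativity lets us assume exactly one occurrence, so the definition has the form
\[
g_{a\lambda} \gets t_1 \xor \cdots \xor t_k \xor \lambda, \qquad \val{t_1}\xor\cdots\xor\val{t_k} = \set{\rho, a}.
\]
The case $k=0$ is incompatible with $\val{g_{a\lambda}}$. The case $k=1$ forces $\val{t_1} = \set{\rho, a}$, which after Normalization-I means $t_1$ is defined by $t_1 \gets \rho \xor a$, i.e.\ $t_1 = \pnt_a$, contradicting the absence of $\pnt_a$. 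For $k=2$ with both $t_i$ constants, necessarily $\set{t_1, t_2} = \set{\rho, a}$ and $g_{a\lambda}$ is already trivially defined. In every remaining case we replace the definition by $g_{a\lambda} \gets \rho \xor a \xor \lambda$: the new definition contributes $4$ to $\#_M$ while the old one contributed $k+2 \geq 4$, so the rewrite is cost-safe, and any temporal variable left dangling after the rewrite may be discarded, further reducing $\#_M$.

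The main obstacle I foresee is purely bookkeeping: I need to check that the rewrite terminates, is compatible with the invariants of Normalizations I and II, and that cascading cleanup of unused variables never accidentally undefines another goal. Termination is controlled by the number of local-edge goals still in the bad shape, which strictly decreases with each application. Compatibility is easy because the rewrite touches exactly one local-edge goal and only deletes variables that have become unreferenced; neither operation can produce a temporal variable of value $\set{\rho, a}$ and so cannot spawn a fresh $\pnt_a$ that would invalidate the Normalization-II hypothesis. After the extended normalization runs to a fixed point, every $g_{a\lambda}$ whose $\lambda$ is not freely used is in the trivial form, which is precisely the claim of the proposition.
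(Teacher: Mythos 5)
Your proposal is correct in outline but takes a genuinely different route from the paper. The paper proves the dichotomy outright for the program produced by Normalizations I--II: assuming $g_{a \lambda}$ is not in the trivial form, it enumerates the possible (binary) shapes of its definition and argues in each case that the temporal variable occurring there has $\lambda$ in its value, which---because no goal other than $g_{a \lambda}$ has $\lambda$ in its value, and $g_{a\lambda}$ cannot be referenced by a statement preceding it---forces a free occurrence of $\lambda$. You instead show that any ``bad'' definition can be rewritten into the trivial form without increasing $\#_M$, i.e., you \emph{enforce} the dichotomy by strengthening the normalization rather than proving it already holds. This is a legitimate alternative for the purpose the proposition serves downstream (guaranteeing movable occurrences of $\lambda_a$ and $\mu_a$ to pay for introducing $\pnt_a$), and it has the advantage of handling definitions of arbitrary arity, which the paper's binary case split quietly ignores; the cost is that you are no longer proving the stated proposition about the Normalization-I/II output but a variant about a further-normalized program, so Lemma~\ref{appendix main lemma} and its $\#_M$ bookkeeping would have to be rephrased around your extra pass.

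Two steps need more care. First, the pivot of your argument---``$\lambda$ not freely appearing forces $\lambda$ to occur directly in the right-hand side of $g_{a \lambda}$''---does not follow from ``a constant can contribute to a value only by appearing directly in some definition'' alone: $\lambda$ could occur directly in the definition of some \emph{other} goal $g'$ that is transitively referenced by $g_{a \lambda}$. To rule this out you need exactly the observation the paper's proof rests on: since $\lambda \notin \val{g'}$ for every goal $g' \neq g_{a \lambda}$, and $g_{a \lambda}$ cannot be referenced by any earlier statement, every referenced variable contributes an \emph{even} number of occurrences of $\lambda$ to the full expansion of $g_{a \lambda}$, so the odd total required by $\lambda \in \val{g_{a \lambda}}$ must come from direct occurrences in $g_{a \lambda}$'s own right-hand side. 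Second, your termination measure is off: discarding a dangling temporal variable can delete the last free occurrence of some other local node $\lambda_b$ and thereby turn $g_{b \lambda_b}$ from good into bad, so the number of bad local-edge goals need not decrease monotonically. The fix is easy---each rewrite permanently converts one local-edge goal to the trivial form and never un-trivializes another, so the number of non-trivial local-edge goals is the correct decreasing measure---but as stated the argument has a hole.
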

\begin{proof}
We assume that the former does not hold
and then consider the following subcases about $g_{a \lambda}$:
\begin{description}
\item[$g_{a \lambda} \gets v \xor \lambda$ where $\val{v} = \set{\rho, a}$:]
This contradicts to the assumption.
\item[$g_{a \lambda} \gets v \xor x$ ($x \neq \lambda_a$) or $g_{a \lambda} \gets v \xor \rho$ where $\val{v} = \set{\lambda, \ldots}$:]
To define $v$, $\lambda$ must freely appear because any other goal does not have $\lambda_a$.
\item[$g_{a \lambda} \gets v \xor g_{x y}$ where $\val{v} = \set{ \lambda, \ldots}$:]
To define $v$, $\lambda$ must freely appear.
\item[$g_{a \lambda} \gets v_1 \xor v_2$ where $\val{v_1} = \set{ \lambda, \ldots}$:]
To define $v_1$, $\lambda$ must freely appear.
\end{description}
\end{proof}

On the basis of this lemma, we introduce one notation.
\begin{description}
\item[Movable occurrence:] For a local node $\lambda$,
if its appears in $g_{a \lambda} \gets \rho \xor a \xor \lambda$
or $\lambda$ freely appears as $v \gets \lambda \xor \cdots$,
we call such occurrences of $\lambda$ \emph{movable occurrences}.
\end{description}
The meaning of the term \emph{movable} will be justified below.

\subsubsection{Normalization III: Deleting Cancellation}\mbox{}\\
Let $P$ be an $\multislp$ normalized by the steps Normalization-$\set{I, II}$.
From $P$, we build an SLP $Q$ where $Q$ does not have the XOR cancellation and $\#_M(Q) \leq \#_M(P)$.

\def\unfold{\ensuremath{\mathit{unfold}}}

We introduce an auxiliary function \textit{unfold} to (recursively) unfold the definition of a \emph{temporal} variable:
\[
\begin{array}{ll}
\unfold(x) = \set{ x } & \text{if $x$ is a constant} \\
\unfold(g_{x y}) = \set{ g_{x y} } \\ 
\unfold(v) = \bigoplus\limits^{n}_{i=1} \mathit{unfold}(t_i) & \text{if $v \gets t_1 \xor t_2 \xor \cdots \xor t_n$}
\end{array}
\]
It should be noted that we do not unfold the definition of a goal for our construction.

Using constants and goals that freely appear in $P$, we define a graph $\Game$ as follows:
\begin{enumerate}
\item First, we define a graph $\Game$ as a complete graph whose nodes are constants that freely appear in $P$.
\item Next, for each freely appearing goal $g_{x y}$ in $P$, we add the edge $(x, y)$ to $\Game$.
\end{enumerate}

We note the obtained graph $\Game$ may not be connected; i.e., it may have multiple components.

\paragraph{\textbf{Notations}}
\newcommand{\gpath}{\rightleftharpoons}
\begin{description}
\item[$\Game_\star$:] We call the unique component of $\Game$ that contains freely appearing constants $\Game_\star$.
\item[$\gpath$:] We write $a \gpath b$ to the path between $a$ and $b$ of $\Game$.
\end{description}

This graph $\Game$ has the following useful properties.
\begin{proposition}\label{appendix:val-path}
Let $v$ be a temporal variable of $P$.
\begin{enumerate}
\item If $\val{v} = \set{x, y}$,
then $x \gpath y$.
\item If $\val{v} = \set{x_1, x_2, x_3, x_4}$, 
then this consists of two paths $n_1 \gpath n_2$ and $n_3 \gpath n_4$
where $\set{x_1, x_2, x_3, x_4} = \set{n_1, n_2, n_3, n_4}$.
\item If $\val{v} = \set{\rho, x, y, z}$,
then this consists of one path $n_1 \gpath n_2$ and one point $n_3 \in \Game_\star$,
where $\set{x, y, z} = \set{n_1, n_2, n_3}$.
\end{enumerate}
\end{proposition}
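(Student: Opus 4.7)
The plan is to analyze $\val{v}$ through $\mathit{unfold}(v)$, which is a set consisting of freely appearing constants $C$ together with freely appearing goals $G' = \set{ g_{a_1 b_1}, \ldots, g_{a_m b_m} }$. Since unfolding preserves semantics, one obtains
\[
\val{v} \;=\; C \;\oplus\; D \;\oplus\; R,
\]
where $D$ is the odd-degree vertex set of the edge set $E' = \set{ (a_i, b_i) : i = 1, \ldots, m }$ and $R = \set{\rho}$ if $m$ is odd, else empty. Every vertex of $C$ lies in $\Game_\star$ because $\Game$ already contains the complete subgraph $K_C$; every edge of $E'$ belongs to $\Game$; and by the handshake lemma each connected component of $E'$ contains an even number of $D$-vertices.

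Case (1) is immediate from Normalization I: since $|\val{v}| = 2$ the definition of $v$ is trivial, $v \gets x \xor y$, so $x$ and $y$ freely appear as constants, placing both in $K_C \subseteq \Game_\star$ and supplying the path $x \gpath y$ at once.

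For cases (2) and (3) the argument rests on a single reachability principle: a vertex $u \in D$ joins $\Game_\star$ as soon as its $E'$-component contains any vertex of $C$, since $K_C$ glues that component to $\Game_\star$. Conversely, inside a $C$-free $E'$-component every $D$-vertex must belong to $\val{v} \setminus (\set{\rho} \cup C)$, and the handshake lemma applied to that component forces an even number of them, so these leftover targets are pairwise $\gpath$-connected by $E'$-paths. Combining this with the identity $C \oplus D = \val{v} \setminus \set{\rho}$, I would case-split on how the three or four target vertices distribute among $C$, $D \cap C$, and $D \setminus C$: targets in $C$, or in $D$ whose $E'$-component touches $C$, are automatically absorbed into $\Game_\star$; the remaining targets are forced into an even-sized subset of a common $C$-free $E'$-component and pair off by $\gpath$ there. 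In case (2) this yields two $\gpath$-linked pairs covering $\set{x_1, x_2, x_3, x_4}$. In case (3) the parity of $m$ being odd forces $|C|$ to be odd, hence $|C| \geq 1$, which supplies the isolated point $n_3 \in \Game_\star$ while the other two targets pair up as above.

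The main obstacle is the bookkeeping of subcases rather than any single deep algebraic idea: one must carefully enumerate how the three or four target vertices distribute between $C$ and $D$, and how the $D$-vertices further distribute among the $E'$-components, before invoking the reachability principle to assemble the claimed pairings. The delicate configuration to rule out is a $C$-free $E'$-component containing an odd number of target $D$-vertices, which the handshake lemma forbids, and this is precisely what makes the case analysis close.
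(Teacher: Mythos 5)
Your route is genuinely different from the paper's. The paper reduces the multiset $X=\mathit{unfold}(v)$ by three rewrite rules (merge two freely appearing constants into an edge of $\Game$, absorb a constant into an adjacent edge, compose two edges sharing an endpoint) and reads the claim off the terminal configuration; you instead establish the closed-form invariant $\val{v}=C\oplus D\oplus R$, where $C$ is the set of constants surviving mod~2 in the unfolding, $D$ is the odd-degree vertex set of the surviving goal-edges $E'$, and $R$ tracks the $\rho$-parity, and then combine the clique $K_C\subseteq\Game_\star$ with the per-component handshake lemma. That invariant is correct, and it is arguably the more robust argument: the paper's one-line claim that the reduction terminates in a single edge is itself nontrivial and silently relies on the same parity facts. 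Your case (1) (after Normalization I the definition is literally $v\gets x\xor y$, so both endpoints freely appear and already lie in the clique on $C$) is also simpler than the paper's reduction for that case.

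There is, however, a misstep in your explicit handling of case (3). First, $m$ need not be odd when $\rho\in\val{v}$: $\rho$ may instead freely appear as a constant, i.e.\ $\rho\in C$ with $m$ even. Second, and more importantly, $|C|\ge 1$ does not by itself supply the isolated point $n_3$: the guaranteed element of $C$ may be $\rho$ itself, or may lie in $C\cap D$ and hence cancel out of $\val{v}$, so it need not be one of $x,y,z$. The correct closing move is the parity count you already use in case (2): a target of $\set{x,y,z}$ that fails to lie in $\Game_\star$ must be a $D$-vertex of a $C$-free component of $E'$; each such component contributes an even number of such targets (all its $D$-vertices are targets, and there are evenly many by handshake), so the number of targets outside $\Game_\star$ is even, hence $0$ or $2$. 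If it is $0$, all three lie in $\Game_\star$ and any two serve as the path; if it is $2$, the two outliers share a single $C$-free component and are therefore connected in $\Game$, while the third lies in $\Game_\star$. With that substitution the argument closes, and the same count (even and at most $4$) also settles case (2) without enumerating distributions over $C$, $C\cap D$, and $D\setminus C$ --- noting in passing that no target can lie in $C\cap D$, since such vertices cancel out of $\val{v}$.
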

\begin{proof}
We show (1). The other cases are shown by the same argument.

Let $X = \unfold(v)$.
We reduce $X$ repeatedly applying one of the following replacements rules:
\begin{enumerate}
\item Choose $x \in \Game_\star$ and $y \in \Game_\star$ from $X$, remove them from $X$, then update $X \coloneq X \xor \set{(x, y)}$.
Remark $(x, y) \in \Game$.
\item Choose $x \in \Game_\star$ and $(x, y) \in \Game$ from $X$, remove them from $X$, then update $X \coloneq X \xor \set{y}$.
Remark $y \in \Game_\star$.
\item Choose $(x, y), (y, z)$ from $X$ where $x \gpath y$ and $y \gpath z$, remove them them from $X$, then update $X \coloneq X \xor \set{(x, z)}$.
Remark $(x, z) \in \Game$.
\end{enumerate}
When we cannot apply any rule, then $X = \set{ (a, b) }$; thus, $a \gpath b$.
\end{proof}
\begin{proposition}\label{appendix:vv prop}
If a goal $g_{x y}$ is defined by two temporal variables $v_1, v_2$
(i.e., $g_{x y} \gets v_1 \xor v_2$), then $x \gpath y$.
\end{proposition}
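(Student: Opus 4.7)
My plan is to extend the multiset-rewriting argument given in the proof of Proposition~\ref{appendix:val-path} to the XOR of two temporal variables. Since $g_{xy} \gets v_1 \xor v_2$, cancellativity gives $\val{v_1} \xor \val{v_2} = \val{g_{xy}} = \{\rho, x, y\}$. I form the joint unfolding $X = \unfold(v_1) \xor \unfold(v_2)$, viewing $X$ as a multiset of freely-appearing constants together with freely-appearing goals $g_{ab}$ encoded as $\Game$-edges $(a,b)$. Its multiset evaluation, obtained by replacing each goal by its triple $\{\rho, a, b\}$ and taking symmetric difference, equals $\{\rho, x, y\}$.

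Next, I reduce $X$ using exactly the three rules in the proof of Proposition~\ref{appendix:val-path}: a pair of freely-appearing constants $u, w$ is replaced by the $\Game$-edge $(u, w)$, which is available because $\Game$ is complete on freely-appearing constants; a constant $u$ together with a goal $g_{uw}$ is replaced by the endpoint $w \in \Game_\star$; and two composable goal-edges $(u, w), (w, z)$ are replaced by the concatenating edge $(u, z)$, which witnesses $u \gpath z$. Each rule preserves the set evaluation and the invariant that surviving constants lie in $\Game_\star$ while surviving goals correspond to edges on $\Game$-paths. When no rule applies, a short case analysis on the possible shapes of the terminal multiset --- using that it evaluates to $\{\rho, x, y\}$ --- shows that it must either contain the goal $g_{xy}$ itself or a chain of goals $g_{x a_1}, g_{a_1 a_2}, \ldots, g_{a_k y}$; both witness $x \gpath y$.

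The main obstacle is the bookkeeping around $\rho$: the constant $\rho$ appears inside every goal's triple and may or may not be freely appearing in the surrounding program. The reduction rules above are naturally designed for the ``constant $\cup$ edge'' algebra where $\rho$ plays no special role, so one must track the parity of $\rho$ contributed by the goals in $X$ against any occurrence of $\rho$ as a freely-appearing constant, and verify that the terminal multiset cannot split $\{x, y\}$ into distinct $\Game$-components. Once this $\rho$-accounting is isolated as a small case split, the rest of the argument proceeds exactly as in the proof of Proposition~\ref{appendix:val-path}.
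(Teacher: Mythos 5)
Your proposal matches the paper's own proof, which is exactly the one-line observation that the reduction argument of Proposition~\ref{appendix:val-path} applies verbatim to $X = \unfold(v_1) \xor \unfold(v_2)$. Your additional care about the $\rho$-parity bookkeeping is a reasonable refinement of a detail the paper leaves implicit, but it does not change the route.
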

\begin{proof}
By applying the same argument of the above proposition for $X = \unfold(v_1) \xor \unfold(v_2)$,
we have $x \gpath y$.
\end{proof}

Hereafter, we build an $\multislp$ $Q$ that does not use the XOR-cancellation.

First, we copy $P$ to be normalized to $P'$.

\paragraph{Removing nodes that freely appear in $P'$}.

Let $a_1, a_2, \ldots$ are (non-local) nodes that freely appear in $P'$,
\begin{itemize}
\item If $\pnt_{a_i}$ is in $P'$, we move it to $Q$.
\begin{itemize}
\item \emph{Moving} means that we remove $\pnt_{a_i} \gets \rho \xor a_i$ from $P'$ and then
add it to $Q$.
\end{itemize}
\item Otherwise, by Proposition~\ref{prop:freely appear},
we have movable occurrences of $\lambda_{a_i}$ and $\mu_{a_i}$ in $P'$.

We remove all the free occurrences of $\pnt_{a_i}$ and all the movable occurrences of $\lambda_{a_i}$ and $\mu_{a_i}$ from $P'$.
We then add $\pnt_{a_i} \gets \rho \xor a_i$ to $Q$.
\end{itemize}

\paragraph{Removing goal variables that freely appear in $P'$}\mbox{}\\
Let $g_1, g_2, \ldots$ are goal variables that freely appear in $P'$.
\begin{itemize}
\item Let $g_i = g_{a b}$.
\item Let $C$ be the component of $\Game$ that the edge $(a, b)$ belongs to.
\item If there is $x \in C$ such that $\pnt_x$ is not in $Q$,
by Proposition~\ref{prop:freely appear},
we have movable occurrences of $\lambda_x$ and $\mu_x$ in $P'$.

We remove all the free occurrences of $g_i$
and all the movable occurrences of $\lambda_{c}$ and $\mu_{c}$.

We then add $\pnt_c \gets \rho \xor c$ to $Q$.
\end{itemize}

After the above modification to $Q$, the following property holds.
\begin{proposition}\label{prop:one of path}
\mbox{}
\begin{itemize}
\item $\#_M(P') + \#_M(Q) \leq \#_M(P)$. 
\item For each non-local node $a \in \Game_\star$, we have $\pnt_a$ in $Q$.
\item For each path $a \gpath b$, we have $\pnt_a$ or $\pnt_b$ in $Q$.
\item For each path $a \gpath x$ where $x$ is local, we have $\pnt_a$ in $Q$. \\
We do not need $x \in \set{ \lambda_a, \mu_a }$.
\end{itemize}
\end{proposition}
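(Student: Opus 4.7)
The plan is to prove all four assertions by tracking the modifications made by Steps~1 and~2 of Normalization~III operation by operation, starting from the initial state $P' = P$, $Q$ empty (so $\#_M(P') + \#_M(Q) = \#_M(P)$), and checking which invariants are preserved by each step. For claim~(1) I would use an accounting argument: moving an existing $\pnt_{a_i}$ from $P'$ to $Q$ is trivially sum-preserving, while freshly inserting $\pnt_{a_i} \gets \rho \xor a_i$ adds exactly $3$ memory accesses that must be offset by the deletions the step performs in $P'$. Since $a_i$ freely appears, deleting one of its free occurrences saves at least $1$ (the arity of the containing XOR drops by one), and Proposition~\ref{prop:freely appear}, applied because $\pnt_{a_i} \notin P'$, supplies one movable occurrence for each of $\lambda_{a_i}$ and $\mu_{a_i}$, contributing a further saving of $2$; the goal branch of Step~2 is balanced analogously using the three removals tied to the chosen $x \in C$. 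The main technical obstacle here is that ``removing'' an occurrence is really a substitution that rewrites the containing temporal and can shift its value by multiples of $\rho$, so I would first prove a small substitution lemma verifying that the rewrites preserve both $\result{\cdot}$ and the SSA form, and only then invoke the arity drop in the accounting.

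Claim~(2) is immediate from the construction: by the definitions of $\Game$ and $\Game_\star$, the non-local vertices of $\Game_\star$ are precisely the freely-appearing non-local constants of $P$, and Step~1 iterates over exactly these constants, leaving $\pnt_a \in Q$ in both of its branches.

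For claims~(3) and~(4) I would maintain the invariant that after Step~2 the set $V_Q := \set{x : \pnt_x \in Q}$ forms a vertex cover of $\Game$. Combined with Claim~(2), and with Propositions~\ref{appendix:val-path} and~\ref{appendix:vv prop} — which force any two nodes connected inside a temporal's value to lie on a path in $\Game$ — this gives that every path $a \gpath b$ has at least one pebbled endpoint, yielding~(3). Claim~(4) then falls out as a corollary: the normalization never pebbles local nodes $\lambda_a, \mu_a$ (the $\pnt$-rule creates only non-local $\pnt_x$), so any path $a \gpath x$ with $x$ local must be covered at its other endpoint. I expect establishing the vertex-cover invariant for Step~2 to be the subtlest part of the proof, because the iteration over freely-appearing goals must be shown to visit enough of $\Game$'s edges that no connected pair is left uncovered; this is also precisely the point at which the reduction from vertex cover really bites.
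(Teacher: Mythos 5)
The paper gives no proof of Proposition~\ref{prop:one of path} at all --- it is asserted as the postcondition of the two preceding removal phases --- so your attempt is filling a real gap rather than paralleling an existing argument. Your handling of the first bullet is sound and matches the accounting the paper later makes explicit in the \textbf{Pat} cases: each insertion of $\pnt_x \gets \rho \xor x$ into $Q$ costs $3$ and is paid for by at least three unit deletions in $P'$ (one free occurrence of $a_i$, resp.\ of $g_i$, plus one movable occurrence of each of $\lambda_x$ and $\mu_x$ supplied by Proposition~\ref{prop:freely appear}). Your worry about the deletions preserving $\result{\cdot}$ is unnecessary, though: $P'$ is only a cost ledger and is never executed (the paper says as much under \textbf{Pat1}); only the final $Q$ must be equivalent to $P$, and that is arranged later in the goal-transfer step.

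The remaining bullets contain genuine gaps. For bullet~(2), it is false that the non-local vertices of $\Game_\star$ are exactly the freely appearing non-local constants: the second clause of the definition of $\Game$ adds the endpoints of every freely appearing goal $g_{xy}$, so $\Game_\star$ can contain non-local nodes that never appear freely as constants and are therefore untouched by the first removal phase. These nodes acquire $\pnt$ only through the processing of freely appearing goals, so bullet~(2) is not ``immediate from Step~1''; what makes it work is a counting fact you would need to prove, namely that attaching $k$ extra nodes to the clique of freely appearing constants requires at least $k$ goal edges, each of which can pay for one new $\pnt$. For bullets~(3) and~(4), the vertex-cover invariant is too weak: a vertex cover guarantees a covered endpoint for every \emph{edge}, whereas $a \gpath b$ only asserts that $a$ and $b$ are \emph{connected}; a component $a - c - b$ with only $c$ covered satisfies your invariant while violating bullet~(3). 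The invariant actually needed is per component: every component of $\Game$ has at most one non-local node without $\pnt$ in $Q$, and a component that is $\Game_\star$ or that contains a local node has none. Finally, Propositions~\ref{appendix:val-path} and~\ref{appendix:vv prop} describe the values of temporal variables and are consumed later, in the \textbf{Pat} removals and the goal-transfer case analysis; they say nothing about which $\pnt$'s end up in $Q$, so invoking them for bullet~(3) does not close the gap.
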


\paragraph{Removing temporal variables of specific patterns in $P'$}\mbox{}\\
Let $v$ be a temporal variable in $P$.
\begin{description}
\item[Pat1:] If $\val{v} = \set{\lambda_a, \rho}$ and $\pnt_a$ is not in $Q$,
there are movable occurrences of $\lambda_a$ and $\mu_a$ in $P'$.
\begin{enumerate}
\item First, we remove all the such occurrences.
\item Next, we remove the left-side occurrence of $v$ of the definition of $v$: i.e.,
\[
v \gets \cdots; \quad \Mapsto \quad \_ \gets \cdots;
\]
\begin{itemize}
\item Although such the blank is not permitted in our formalization of SLP,
we temporarily allow it only on $P'$ because we use $P'$ to build $Q$ and do not execute $P'$.
\item By these removal, we decrement the size of $P'$ more than three.
Recall that the size of $\multislp$ equals to the number of the total occurrences of constants and variables.
\end{itemize}
\item Finally, we add $\pnt_a \gets \rho \xor a$ to $Q$. It increments the size of $Q$ by three.
\end{enumerate}

Totally, $\#_M(P') + \#_M(Q) \leq \#_M(P)$ still holds.
\item[Pat2:] If $\val{v} = \set{\lambda_a, \mu_a}$ and $\pnt_a$ is not in $Q$,
by the same argument of Pat1,
we remove the left side occurrence of $v$ and all the movable occurrences of $\lambda_a$ and $\mu_a$;
then, we add $\pnt_a \gets \rho \xor a $ to $Q$.
\item[Pat3:] If $\val{v} = \set{\lambda_a, b}$ and $\pnt_a$ is not in $Q$,
we remove the definition of $v$ from $P'$ and then add $\pnt_a \gets \rho \xor a$ to $Q$.
\item[Pat4:] If $\val{v} = \set{a, b}$ and $\pnt_a$ (or $\pnt_b$) is not in $Q$,
then we remove the left side occurrence of $v$ and all the movable occurrences of $\lambda_a$ and $\mu_a$
(or $\lambda_b$ and $\mu_b$) from $P'$.

If $\pnt_a$ is not in $Q$, we add $\pnt_a \gets \rho \xor a$ to $Q$.
Otherwise, if $\pnt_b$ is not in $Q$, we add $\pnt_b \gets \rho \xor b$ to $Q$.

Consequently, we have $\pnt_a$ and $\pnt_b$ in $Q$.
\item[Pat5:] If $\val{v} = \set{a, b, c, d}$,
by Proposition~\ref{appendix:val-path}~and~\ref{prop:one of path}
there are at most two nodes $x, y \in \set{a, b, c, d}$ such that $\pnt_x$ and $\pnt_y$ are not in $Q$.

We remove the left side occurrence of $v$ and all the movable occurrences of $\lambda_x$, and $\mu_x$.
We then add $\pnt_x \gets \rho \xor x$ to $Q$.

Consequently, there is at most one node $y \in \set{a, b, c, d}$ such that $\pnt_y$ is not in $Q$.
\item[Pat6:] If $\val{v} = \set{a, \lambda_a, b, \lambda_b}$,
by Proposition~\ref{appendix:val-path}~and~\ref{prop:one of path},
$\pnt_a$ or $\pnt_b$ may not exist in $Q$.

If $\pnt_a$ is not in $Q$, we remove the left side occurrence of $v$
and all the movable occurrences of $\lambda_x$ and $\mu_x$.
We then add $\pnt_a \gets \rho \xor a$ to $Q$.

Otherwise, we do the same for $\pnt_b$.

Consequently, we have $\pnt_a$ and $\pnt_b$ in $Q$.
\item[Pat7:] If $\val{v} = \set{a, \lambda_a, b, c}$,
by Proposition~\ref{appendix:val-path}~and~\ref{prop:one of path},
we may do not have $\pnt_x$ for at most one of $\set{a, b, c}$.

We remove $v$, $\lambda_x$, and $\mu_x$ and then add $\pnt_x \gets \rho \xor x$.

Consequently, we have $\pnt_a$, $\pnt_b$, and $\pnt_c$ in $Q$.

\item[Pat8:] If $\val{v} = \set{\rho, a, \lambda_a, b}$,
By Proposition~\ref{appendix:val-path}~and~\ref{prop:one of path},
there is at most one node $x \in \set{a, b}$ such that $\pnt_x$ is not in $Q$.

For such $x$, we remove the left side occurrence of $v$ and all the movable occurrences of $\lambda_x$ and $\mu_x$.
We then add $\pnt_x \gets \rho \xor x$ to $Q$.

Consequently, we have $\pnt_a$ and $\pnt_b$ in $Q$.
\end{description}

After the above construction, we still have:
\[
\#_M(P') + \#_M(Q) \leq \#_M(P).
\]

\paragraph{Transferring the goal of $P$ without changing its size}\mbox{}

Now we define all the goals without the XOR cancellation in $Q$.

First, we consider the goals of the form $g_{a \lambda_a}$.
\begin{description}
\item[If $g_{a \lambda} \gets p \xor a \xor \lambda$:]
We move it from $P'$ to $Q$.

\item[If $g_{a \lambda} \gets t_1 \xor t_2$:]
Hereafter, for each subcase, we show that we have $\pnt_a$ in $Q$.
It suffices for our construction because
\begin{enumerate}
\item We delete the definition of $g_{a \lambda}$; it decrements the size of $P'$ by three.
\item We then add $g_{a \lambda} \gets \pnt_a \xor \lambda$ to $Q$; it increments the size of $Q$ by three.
\item Totally, $\#_M(P') + \#_M(Q) \leq \#_M(P)$ still holds.
\end{enumerate}

\item[Case $g_{a \lambda} \gets p \xor a$ ($\lambda$ has been removed to add $\pnt_a$ to $Q$):]
Clearly we have $\pnt_a$ in $Q$.

\item[Case $g_{a \lambda} \gets v \xor a$ and $\val{v} = \set{\lambda, \rho}$:]
By Pat1, we have $\pnt_a$ in $Q$.

\item[Case $g_{a \lambda} \gets \pnt_a \xor \lambda$:]
Clearly we have $\pnt_a$ in $Q$.

\item[Case $g_{a \lambda} \gets v \xor \rho$ and $\val{v} = \set{a, \lambda_a}$:]
By Proposition~\ref{appendix:val-path}~and~\ref{prop:one of path}, we have $\pnt_a$ in $Q$.

\item[Case $g_{a \lambda} \gets v \xor b$ and $\val{v} = \set{\rho, a, \lambda_a, b}$:]
By Pat8, we have $\pnt_a$ and $\pnt_b$ in $Q$.

\item[Case $g_{a \lambda_a} \gets v \xor g_{a \mu_a}$ and $\val{v} = \set{\lambda_a, \mu_a}$;]
By Pat2, we have $\pnt_a$ in $Q$.

\item[Case $g_{a \lambda_a} \gets v \xor g_{a b}$ and $\val{v} = \set{\lambda_a, b}$:]
By Pat3, we have $\pnt_a$ in $Q$.

\item[Case $g_{a \lambda_a} \gets v \xor g_{b \lambda_b}$ and $\val{v} = \set{a, \lambda_a, b, \lambda_b}$:]
By Pat6, we have $\pnt_a$ in $Q$;

\item[Case $g_{a \lambda_a} \gets v \xor g_{b c}$ and $\val{v} = \set{a, \lambda_a, b, c}$:]
By Pa7, we have $\pnt_a$ in $Q$;

\item[Case $g_{a \lambda_a} \gets v_1 \xor v_2$ and $v_1, v_2$ are variables:]
By Proposition~\ref{appendix:vv prop} and~\ref{prop:one of path}, we have $\pnt_a$ in $Q$.
\end{description}

Next, we consider the goals of the form $g_{a b}$ where $a$ and $b$ are (non-local) nodes.
\begin{description}
\item[If $g_{a b} \gets \rho \xor a \xor b$:] We copy it to $Q$.

\item[If $g_{a b} \gets t_1 \xor t_2$:]
Hereafter, for each subcase, we show that we have $\pnt_a$ in $Q$.
As the same as the construction of $g_{a \lambda_a}$, it suffices for out construction.

\item[If $g_{a b} \gets \pnt_a \xor b$:] Clearly, we have $\pnt_a$ in $Q$.

\item[If $g_{a b} \gets v \xor \rho$ and $\val{v} = \set{ a, b }$:]
By Proposition~\ref{appendix:val-path}~and~\ref{prop:one of path}, we have $\pnt_a$ or $\pnt_b$ in $Q$.

\item[If $g_{a b} \gets v \xor \lambda_a$ and $\val{v} = \set{ \rho, a, b, \lambda_a}$:]
By Proposition~\ref{appendix:val-path}~and~\ref{prop:one of path}, we have $\pnt_a$ or $\pnt_b$ in $Q$.

\item[If $g_{a b} \gets v \xor c$ and $\val{v} = \set{ \rho, a, b, c}$:]
By Proposition~\ref{appendix:val-path}~and~\ref{prop:one of path}, we have $\pnt_a$ or $\pnt_b$ in $Q$.

\item[If $g_{a b} \gets v \xor g_{a \lambda_a}$ and $\val{v} = \set{ b, \lambda_a}$:]
By Pat3, we have $\pnt_a$ in $Q$.

\item[If $g_{a b} \gets v \xor g_{a c}$ and $\val{v} = \set{ b, c }$:]
By Pat4, we have $\pnt_b$ and $\pnt_c$ in $Q$.

\item[If $g_{a b} \gets v \xor g_{c \lambda_c}$ and $\val{v} = \set{ a, b, c, \lambda_c }$:]
By Pat7, we have $\pnt_a$ or $\pnt_b$ in $Q$.

\item[If $g_{a b} \gets v \xor g_{c d}$ and $\val{v} = \set{ a, b, c, d }$:]
By Pat5, we have $\pnt_a$ or $\pnt_b$ in $Q$.

\item[If $g_{a b} \gets v_1 \xor v_2$:]
By Proposition~\ref{appendix:vv prop} and~\ref{prop:one of path}, we have $\pnt_a$ or $\pnt_b$ in $Q$.

\end{description}

By our construction, the following holds in $Q$.
\begin{lemma}\label{lemma:after normalization}\mbox{}
\begin{itemize}
\item $\result{Q} = \result{P}$.
\item $\#_M(Q) \leq \#_M(P)$.
\item Every temporal variable of $Q$ is the form of $\pnt_a \gets \rho \xor a$
where $a$ is a node of $G$.
\item The definition of each goal $g$ of $Q$ forms one of the following:
\begin{itemize}
\item $g \gets \pnt_a \xor x;$
\item $g \gets \rho \xor x \xor y;$
\end{itemize}
\end{itemize}
\end{lemma}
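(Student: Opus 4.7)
My plan is to reduce the NP-hard Vertex Cover Problem (VCP) to the minimum memory access problem, in the spirit of the Boyar~et~al.\ reduction for the shortest $\slp_\xor$ problem but adapted so that it is sensitive to $\#_M$ rather than $\#_\xor$. Given an undirected graph $G = (V, E)$, I would construct an instance $P_G \in \slp_\xor$ whose goals encode the edges of $G$ through a common ``anchor'' constant $\rho$. Concretely, for every vertex $a \in V$ I introduce two fresh private constants $\lambda_a$ and $\mu_a$ together with two auxiliary ``local'' edges $(a, \lambda_a)$ and $(a, \mu_a)$; the program $P_G$ then returns, for every edge $(x, y)$ of this augmented graph, a goal $g_{xy} \gets \rho \xor x \xor y$. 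Sharing the common subexpression $\rho \xor a$ through a temporal $\Gamma_a \gets \rho \xor a$ saves accesses exactly when $a$ is incident to many goals, while the local gadget $\lambda_a, \mu_a$ forces such sharing to genuinely commit to a vertex rather than be realized through a clever four-term symmetric difference.

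\paragraph{Normalization of optimal solutions}
The technical core will be a normalization claim: any $Q \in \multislp$ with $\result{Q} = \result{P_G}$ can be rewritten to $Q'$ with $\#_M(Q') \le \#_M(Q)$ in which every temporal variable has the canonical form $\Gamma_a \gets \rho \xor a$ for some $a \in V$, and every goal $g_{xy}$ is either $\Gamma_x \xor y$ (equivalently $\Gamma_y \xor x$) or $\rho \xor x \xor y$. I would proceed in stages: first flatten every defining expression so that $\val{v}$ is spelled out as its minimum symmetric-difference representation (killing all internal cancellations); then build an auxiliary graph that records which constants and goals appear freely inside temporals, and use connectivity and path arguments on that graph to replace each non-canonical temporal either by some $\Gamma_a$ or by absorbing it directly into the goals it feeds. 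Each local rewrite either preserves or decreases $\#_M$, so iterating produces the required $Q'$.

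\paragraph{Counting and concluding}
Once $Q'$ is canonical, the count is mechanical. Let $k$ be the number of distinct $a \in V$ for which $\Gamma_a$ appears in $Q'$. Each $\Gamma_a$ costs $3$ accesses; each local edge costs $3$ if its incident $\Gamma_a$ is present and $4$ otherwise; each original edge costs $3$ (since canonicity forces $\Gamma_x$ or $\Gamma_y$ to be present). Summing yields $\#_M(Q') = 3|E| + 8|V| + k$, and moreover the set $\set{a : \Gamma_a \in Q'}$ must be a vertex cover of $G$. Conversely, any vertex cover of size $k$ immediately produces a valid canonical $Q'$ with exactly this cost. Hence the optimum of $\#_M$ over equivalents of $P_G$ equals $3|E| + 8|V| + \tau(G)$ where $\tau(G)$ is the minimum cover size, and a polynomial-time algorithm for the minimum memory access problem would solve VCP in polynomial time.

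\paragraph{The main obstacle}
The hard part will be the normalization step. An adversarial $Q$ may use temporals whose values are four-element symmetric differences such as $\set{a, \lambda_a, b, \lambda_b}$ or $\set{\rho, a, \lambda_a, b}$, exploiting XOR-cancellation to realize two distinct goals through a single intermediate. The local-gadget construction is designed precisely to force such pathological temporals to witness some $\Gamma_a$ in the final program, but verifying this demands an exhaustive case analysis over the shape of $\val{v}$ for every temporal $v$ and over the shape of each goal's definition. This is exactly where the binary-form normalization used by Boyar~et~al.\ breaks down, since that transformation preserves $\#_\xor$ but inflates $\#_M$; a more delicate, cost-sensitive rewriting is required, and that case analysis is where the bulk of the work will lie.
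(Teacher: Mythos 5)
You have not proved the statement under review; you have written a plan whose hardest unexecuted step \emph{is} the statement under review. The lemma asserts that an arbitrary $Q \in \multislp$ equivalent to $P_G$ can be rewritten, without increasing $\#_M$, into a form where every temporal is $\pnt_a \gets \rho \xor a$ and every goal is $\pnt_a \xor x$ or $\rho \xor x \xor y$. Your proposal correctly identifies the overall architecture (the gadget with $\lambda_a, \mu_a$, an auxiliary graph over freely occurring constants and goals, path/connectivity arguments, and the final count $3|E| + 8|V| + k$), and this matches the paper's strategy. But you then write that ``verifying this demands an exhaustive case analysis over the shape of $\val{v}$ for every temporal $v$ and over the shape of each goal's definition \dots\ and that case analysis is where the bulk of the work will lie.'' That case analysis is precisely the content of the lemma: the paper discharges it via the notion of movable occurrences, a path-decomposition proposition for $\val{v}$ of sizes $2$ and $4$, eight patterns of adversarial temporals (including exactly the $\set{a, \lambda_a, b, \lambda_b}$ and $\set{\rho, a, \lambda_a, b}$ shapes you name), and a goal-by-goal transfer argument maintaining the invariant $\#_M(P') + \#_M(Q) \leq \#_M(P)$ throughout. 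None of that is present in your proposal, so the lemma remains unproven.

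There is also a concrete error in the one normalization step you do describe. You propose to ``first flatten every defining expression so that $\val{v}$ is spelled out as its minimum symmetric-difference representation,'' claiming each local rewrite preserves or decreases $\#_M$. This fails: a temporal such as $v \gets g_{ab} \xor g_{cd}$ has definition cost $3$ but $\val{v} = \set{a,b,c,d}$, so flattening raises its cost to $5$. The paper avoids this by restricting the flattening (Normalization I) to temporals with $|\val{v}| \leq 3$, where the rewrite cannot lengthen the definition, and by handling the larger values through the graph-theoretic case analysis instead of by expansion. As written, your first rewriting stage already violates the monotonicity of $\#_M$ that the whole argument depends on.
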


\subsubsection{Normalization IV: Finalization}\mbox{}\\
Finally, we prove Lemma~\ref{appendix main lemma}.

The above lemma Lemma~\ref{lemma:after normalization} does not ensure that
every edge $(a, b)$ is represented by $g_{a b} \gets \pnt_a \xor b$ or $g_{a b} \gets \pnt_b \xor a$
because $Q$ may not have both of $\pnt_a$ and $\pnt_b$.

In the case, $Q$ has the following definitions:
\[
\begin{array}{l}
g_{a b} \gets \rho \xor a \xor b; \\
g_{a \lambda} \gets \rho \xor a \xor \lambda; \\
g_{a \mu} \gets \rho \xor a \xor \mu;
\end{array}
\]
We change this part as follows without changing $\#_M(Q)$:
\[
\begin{array}{l}
\pnt_a \gets \rho \xor a ;\\
g_{a b} \gets \pnt_a \xor b; \\
g_{a \lambda} \gets \pnt_a \xor \lambda; \\
g_{a \mu} \gets \pnt_a \xor \mu;
\end{array}
\]

Repeatedly applying this modification, we transform $Q$ to $Q'$ that satisfies all the conditions of Lemma~\ref{appendix main lemma}.

\newpage

\end{document}